
\documentclass[twoside]{article}
\usepackage{qic,epsfig}

\usepackage{amssymb}
\usepackage{amsmath}
\usepackage{amsthm}
\usepackage{graphicx}
\textwidth=5.6truein
\textheight=8.0truein

\theoremstyle{plain}
\newtheorem{thm}{Theorem}
\newtheorem{problem}{Problem}
\newtheorem{defn}{Definition}
\newtheorem{rem}{Remark}
\newtheorem{lem}{Lemma}
\newtheorem{cor}{Corollary}
\newtheorem{fact}{Fact}

\begin{document}
\setlength{\textheight}{8.0truein}    
\global\long\global\long\global\long\def\bra#1{\mbox{\ensuremath{\langle#1|}}}
\global\long\global\long\global\long\def\ket#1{\mbox{\ensuremath{|#1\rangle}}}
\global\long\global\long\global\long\def\bk#1#2{\mbox{\ensuremath{\ensuremath{\langle#1|#2\rangle}}}}
\global\long\global\long\global\long\def\kb#1#2{\mbox{\ensuremath{\ensuremath{\ensuremath{|#1\rangle\!\langle#2|}}}}}

\runninghead{Universality of beamsplitters}
            {Adam Sawicki}

\normalsize\textlineskip
\thispagestyle{empty}
\setcounter{page}{1}

\copyrightheading{0}{0}{2003}{000--000}

\vspace*{0.88truein}

\alphfootnote

\fpage{1}

\centerline{\bf
Universality of beamsplitters}
\vspace*{0.37truein}
\centerline{\footnotesize
Adam Sawicki}
\vspace*{0.015truein}
\centerline{\footnotesize\it Center for Theoretical Physics, Massachusetts Institute of
Technology,}
\baselineskip=10pt
 \centerline{\footnotesize\it 77 Massachusetts Ave, Cambridge, MA 02139, USA}
 \vspace*{10pt}
\centerline{\footnotesize\it School of Mathematics, University of Bristol,}
\baselineskip=10pt
 \centerline{\footnotesize\it University Walk, Bristol BS8 1TW, UK}
\vspace*{10pt}
\centerline{\footnotesize\it Center for Theoretical Physics, Polish Academy of Sciences,}
\baselineskip=10pt
 \centerline{\footnotesize\it Al. Lotnik\'ow 32/46, 02-668 Warszawa, Poland}
\vspace*{10pt}
\publisher{(received date)}{(revised date)}

\vspace*{0.21truein}

\abstracts{We consider the problem of building an arbitrary $N\times N$ real orthogonal operator using a finite set, $S$, of elementary quantum optics gates operating on $m\leq N$ modes - the problem of universality of $S$ on $N$ modes. In particular, we focus on the universality problem of an $m$-mode beamsplitter. Using methods of control theory and some properties of rotations in three dimensions, we prove that any nontrivial real 2-mode and `almost' any nontrivial real $3$-mode beamsplitter is universal on $m\geq3$ modes.}{}{}

\vspace*{10pt}

\keywords{linear optics, beamsplitters, universality, orthogonal group, control theory, Lie algebras}
\vspace*{3pt}
\communicate{to be filled by the Editorial}

\vspace*{1pt}\textlineskip    
\section{Introduction}        
Around twenty years ago Reck $et\, al.$ \cite{reck94} considered
the problem of building an arbitrary $N\times N$ unitary operator
using the set of elementary quantum optics gates. They gave a recursive
algorithm which transforms an $N\times N$ unitary matrix into an
arrangement of $2$-mode beamsplitters, phase shifters, and mirrors.
If one wants to implement any $N\times N$ unitary, however, then
one needs to have access to all possible beamsplitters and phase shifters.
This makes the results of \cite{reck94} not particularly useful in
a real experimental setting. Recently, Bouland and Aaronson \cite{BA14} considered
the same problem albeit with a finite set of optical gates available.
They showed that actually any single $2$-mode beamsplitter that nontrivially
acts on two modes densely generates unitary transformations $U(3)$
(or orthogonal transformations $O(3)$, in the real case).  Combining
their result with the arguments from \cite{reck94} they concluded
that any nontrivial $2$-mode beamsplitter densely generates $U(m)$
for $m\geq3$ modes. For example, a real $2$-mode beamsplitter is given by one rotation angle $\theta\in[0,2\pi]$. Having such a beamsplitter we let it operate on pairs of three available modes. This way we get operators which effectively mix $3$ modes. The resulting operators are dense in $O(3)$.  As a direct consequence, for building orthogonal (unitary)
transformation one does not need tunable beamsplitters - those whose $\theta$ can be changed. In fact any
$2$-mode beamsplitter with a fixed $\theta\notin\{0,\pi/2,\pi, 3\pi/2\}$ is universal for generation of quantum linear
optics. They left the problem of classifying optical gates that act
on three or more modes open. This kind of gates can be easily built
experimentally using coupled optical waveguides \cite{obrien08,silberger09}. We also note that there are some interesting developments in models of fermionic linear optics \cite{Oszman,Terhal}.
In this paper we address the open problem given in \cite{BA14} using methods that are
orthogonal to those used in \cite{BA14}. In a generic case, our approach can be used to obtain the desired classification for any number of modes which is impossible using representation theory arguments of \cite{BA14}. We also give another proof
of a $2$-mode real beamsplitter universality which is entirely based
on our method and does not make any use of the results of \cite{reck94,BA14}. 

Mathematically, a real $m$-mode beamsplitter is represented by an
orthogonal $m\times m$ matrix. Throughout this paper we will follow
convention of Nielsen and Chuang \cite{Chung00} and assume that this
matrix has a determinant equal to one (some authors use a convention
with $-1$ \cite{BA14}). Under this assumption the set of $m$-mode
beamsplitters forms the group $SO(m)$. 

We say that a finite set of beamsplitters $S\subset SO(m)$ is universal
on $m$ modes if and only if it generates $SO(m)$, i.e. any $m$-mode
beamsplitter can be approximated by a sequence of elements form $S$
(or their inverses) with an arbitrary precision. This definition is
analogous to the one for quantum gate universality. For example the
famous set consisting of $H$ and $T$ gates is universal for one
qubit as it generates any operation from $SU(2)$ with an arbitrary
precision \cite{Boykin99}. In this paper we consider the problem
of $m$-mode universality when the set $S$ is constructed (in some
natural way which we explain in the subsequent sections) from a single
$2$- or 3-mode beamsplitter and show that:
\begin{enumerate}
\item Any nontrivial 2-mode beamsplitter is universal on $m\geq3$ modes.
\item Almost any nontrivial $3$-mode beamsplitter is universal on $m\geq3$
modes.
\end{enumerate}
We also make several interesting statements for $m$-mode beamsplitters
for an arbitrary $m$. They concern universality on $k> m$ provided universality on $m$ modes. The method we use to obtain the result is a combination
of the fundamental theorem of control theory (see Theorem \ref{thm:main})
and some algebraic properties of the rotation group in three dimensions.
Recently, Bouland and Aaronson \cite{BA14} gave a proof of universality
of a $2$-mode beamsplitter (also for the case when $SO(2)$ is replaced
by $SU(2)$). Their approach is based on representation theory and
classification of subgroups of $SU(3)$. As the authors of \cite{BA14}
point out, such classification is missing starting from $SU(5)$ and
therefore their approach has clear limitations. The method presented
in this paper is complementary to \cite{BA14} and attacks the problem
from a different direction. It divides the problem into two. For $O_{m}\in SO(m)$
with the spectrum $\sigma(O_{m})=\left\{ e^{i\phi_{1}},e^{-i\phi_{1}},\ldots,e^{i\phi_{m/2}},e^{-i\phi_{m/2}}\right\} $,
where each $\phi_{i}$ is an irrational multiple of $\pi$ it boils
down to proving that some particular elements generate the special orthogonal Lie algebra. This can be fully handled. For $O_{m}$ with
at least one $\phi_{i}=a\pi$, where $a\in\mathbb{Q}$ the subtle
techniques to show that the product of two group elements that have
finite order can have infinite order are required. We discuss this
by considering an example with two rotations about the
$x$ and $z$ axes by rational angle $\theta$ whose $\cos(\theta)$ is algebraic of degree $2$ in Section \ref{sub:example}. The techniques used in this section are based on cyclotomic polynomials and they were used in the similar context in \cite{Boykin99}. They, however, do not generalise easily for an arbitrary rational angle. Therefore in the general case we use the recent results of Conway, Radin and Sadun concerning products of rotations \cite{CRS00,CRS99, RS99,RS98}.

Our method reveals the importance of mode permutations for $m$-mode beamsplitters,
$m\geq3$. The central role is played by the set $S(O_{m})=\left\{ P_{\sigma}^{t}O_{m}P_{\sigma}:\,\sigma\in S_{m}\right\} $$ $
where $P_{\sigma}$ are $m\times m$ matrices that permute modes of
the considered beamsplitter. In particular we show that in a great
number of cases universality of an $m$-mode beamsplitter $O_{m}$
on $k\geq m$ modes reduces to showing that the set $S(O_{m})$ is
universal on $m$-modes (rather than on $m+1$ modes). Moreover, we
show that already on $3$-modes there is a beamsplitter that is not
universal on $3$ and $4$ modes. It corresponds to what we call the
trivial action of permutation group, that is, to $S(O_{3})=\left\{ O_{3},\, O_{3}^{-1}\right\} $. 

The paper is organized as follows. In Section \ref{sec:method} we
discuss general aspects of the method we use in this paper. Next,
in Section \ref{sec:Universality-of2} we prove universality of a
nontrivial $2$-mode beamsplitter on $m\geq3$ modes. The proof is
divided into two parts. The first one is an elegant Lie-algebra calculation.
The second is showing that the product of two finite order orthogonal
rotations is a rotation by an angle which is an irrational multiple
of $\pi$. In the subsequent section we discuss some aspects of beamsplitters
operating on higher number of modes, introduce $S(O_{m})$ and prove
general results concerning $S(O_{m})$. Finally in Section \ref{sec:3-mode-beamsplitter}
the $3$-mode beamsplitters are discussed in details.

\section{The method\label{sec:method}}

In this section we sketch the method we will use throughout the paper.

Let $G$ be a connected Lie group and $\mathfrak{g}$ its Lie algebra.
We say that elements $\{g_{1},\ldots,g_{k}\}\subset G$ generate $G$,
if and only if the set
\begin{gather*}
<g_{1},\ldots,g_{k}>:=\{g_{a_{1}}^{k_{1}}\cdot g_{a_{2}}^{k_{2}}\cdots g_{a_{n}}^{k_{n}}:\, a_{i}\in\left\{ 1,\ldots,k\right\} ,\, k_{i}\in\mathbb{Z},\, n\in\mathbb{N}\},
\end{gather*}
is dense in $G$ that is $G=\overline{<g_{1,},\ldots,\, g_{k}>}$.
Similarly we say that subgroups $\left\{ H_{1},\ldots,\, H_{k}\right\} $
of $G$ generate $G$ iff the set
\begin{gather*}
<H_{1},\ldots,\, H_{k}>:=\{g_{a_{1}}^{k_{1}}\cdot g_{a_{2}}^{k_{2}}\cdots g_{a_{n}}^{k_{n}}:\, a_{i}\in\left\{ 1,\ldots,k\right\} ,\, g_{a_{i}}\in H_{a_{i}},\, k_{i}\in\mathbb{Z},\, n\in\mathbb{N}\},
\end{gather*}
is dense in $G$. Finally, let $S=\left\{ X_{1},\ldots,X_{k}\right\} \subset\mathfrak{g}$
be a subset of Lie algebra $\mathfrak{g}$. We say that $S$ generates
$\mathfrak{g}$ iff any element $X$ of $\mathfrak{g}$ can be expressed
as a linear combination of $X_{i}$'s and arbitrarily nested commutators
of $X_{i}$'s:

\begin{gather*}
X=\sum_{j}\alpha_{i}X_{i}+\sum_{i,j}\alpha_{ij}\left[X_{i},\, X_{j}\right]+\sum_{i,j,k}\alpha_{ijk}\left[X_{i},\left[X_{j},\, X_{k}\right]\right]+\ldots
\end{gather*}
The following theorem \cite{J97} will be of the great importance
in this paper.
\begin{thm}
\label{thm:main}Let $G$ be a connected Lie group and $\mathfrak{g}$
its Lie algebra. $G$ is generated by one-parameter subgroups $\{e^{tX}:\, t\in\mathbb{R}\}$,
$X\in S$ where $S$ is a finite subset of $\mathfrak{g}$ if and
only if $S$ generates $\mathfrak{g}$ as a Lie algebra.
\end{thm}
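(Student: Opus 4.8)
The plan is to prove both directions of this classical theorem, which is essentially the statement that a control system on a Lie group is controllable precisely when the generated Lie algebra is everything. Let me think about what's being claimed and how I'd establish each implication.

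The forward direction (generation of $\mathfrak{g}$ implies generation of $G$): if the one-parameter subgroups generate $G$ as a group, I need to show $S$ generates $\mathfrak{g}$. The contrapositive is cleanest here. Suppose $\mathfrak{h} = \langle S \rangle_{\text{Lie}}$ is a proper subalgebra of $\mathfrak{g}$. Then the connected subgroup $H$ corresponding to $\mathfrak{h}$ is a proper immersed subgroup, and every $e^{tX}$ with $X \in S$ lies in $H$ (since $X \in \mathfrak{h}$). Because $H$ is a subgroup, all products $g_{a_1}^{k_1} \cdots g_{a_n}^{k_n}$ remain in $H$, so the generated set sits inside $H$, which is a proper lower-dimensional submanifold and hence cannot be dense in $G$. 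The subtlety I'd need to address is that an immersed subgroup may be dense even when $\dim H < \dim G$ is false — but here $\dim H < \dim G$, so $H$ has measure zero and its closure is still a proper subgroup, giving non-density.

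The reverse direction (generation of $G$ implies generation of $\mathfrak{g}$) is the substantive analytic content. Here I would build up the tangent space accessible from the identity. The key tool is the curve identity
\begin{equation*}
e^{-tX} e^{-tY} e^{tX} e^{tY} = e^{t^2 [X,Y] + O(t^3)},
\end{equation*}
which shows that commutators $[X,Y]$ are reachable as tangent vectors to curves built from the one-parameter subgroups. More precisely, I would define $W \subseteq \mathfrak{g}$ to be the smallest subalgebra containing $S$, let $H = \exp(W)$ be the corresponding connected subgroup, and argue that the entire generated set $\langle \{e^{tX}\} \rangle$ is contained in $H$. By the above identity together with conjugation formulas such as $e^{tX} Y e^{-tX} = \mathrm{Ad}_{e^{tX}}(Y)$, one verifies that $H$ is invariant and that no reachable group element escapes it. Since the generated set is dense in $G$ by hypothesis and contained in the immersed submanifold $H$, we must have $H = G$, forcing $W = \mathfrak{g}$.

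The main obstacle is the point-set and differential-topology bookkeeping rather than any single clever estimate: one must handle the fact that $H$ is an \emph{immersed} (not embedded) subgroup, so "contained in a proper subgroup" does not automatically yield "not dense" without invoking that a proper connected Lie subgroup of strictly smaller dimension is a measure-zero set whose closure is still a proper subgroup (a theorem of Yamabe guarantees the closure is itself a Lie subgroup). The cleanest route is to cite the standard orbit theorem of geometric control theory (Sussmann, or Jurdjevic's formulation in \cite{J97}), which packages exactly this correspondence between the reachable set and the Lie algebra generated by the control vector fields; since the paper attributes the result to \cite{J97}, I would lean on that reference for the hard analytic direction and supply the elementary subalgebra argument for the converse.
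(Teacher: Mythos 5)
The paper does not actually prove this theorem; it is quoted verbatim from Jurdjevic's book \cite{J97}, so there is no internal proof to compare against. Judged on its own terms, your attempt has a genuine gap, and it sits exactly where you flagged the ``subtlety'': you claim that because $\dim H<\dim G$ the proper connected subgroup $H$ ``has measure zero and its closure is still a proper subgroup, giving non-density.'' Measure zero gives $H\neq G$ but says nothing about $\overline{H}$, and the claim about the closure is simply false: the irrational one-parameter subgroup $\{(t,t\sqrt{2})\}$ in the torus $\mathbb{T}^{2}$ is a proper, one-dimensional, measure-zero connected Lie subgroup whose closure is all of $\mathbb{T}^{2}$. The same example ($G=\mathbb{T}^{2}$, $S=\{(1,\sqrt{2})\}$, abelian so the generated subalgebra is the line $\mathbb{R}(1,\sqrt{2})$) shows that the ``only if'' direction of the theorem, with the paper's density-based notion of generation, is false for general connected Lie groups. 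It is true for the groups the paper actually uses because a connected Lie subgroup is normal in its closure; if it were dense its Lie algebra would be an ideal of $\mathfrak{g}$, and for $\mathfrak{g}=\mathfrak{so}(m)$ every proper ideal corresponds to a closed (hence non-dense) subgroup. Your argument needs either this ideal argument together with a semisimplicity hypothesis, or the sharper form of the statement in \cite{J97} in which the generated group is \emph{equal} to the connected subgroup with Lie algebra $\langle S\rangle_{\mathrm{Lie}}$ (no closure taken); the measure-theoretic shortcut cannot be repaired.

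A secondary problem is that your two paragraphs carry swapped labels and both end up proving the same (contrapositive) implication, so the substantive direction --- $S$ generates $\mathfrak{g}$ implies the one-parameter subgroups generate $G$ --- is never completed. You name the right tool, the commutator curve $e^{-tX}e^{-tY}e^{tX}e^{tY}=e^{t^{2}[X,Y]+O(t^{3})}$, but then the paragraph slides back into ``the generated set is dense and contained in $H$, so $H=G$,'' which again founders on dense immersed subgroups. A clean finish for this direction: the group $K$ generated by the one-parameter subgroups is path-connected, hence by Yamabe's theorem a connected Lie subgroup; its Lie algebra is a subalgebra containing every $X\in S$, hence contains $\langle S\rangle_{\mathrm{Lie}}=\mathfrak{g}$, so $K$ is open in the connected group $G$ and therefore equals $G$ (no closure even needed). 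Alternatively, take the closure of $K$, which is a closed subgroup and hence an embedded Lie subgroup by Cartan's theorem, and run the same Lie-algebra containment. Either route makes the commutator identity do actual work instead of being decoration.
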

The problem which we are going to deal with is the following one:
\begin{problem}
\label{main_problem}Let $A=\{a_{1},\mbox{\ensuremath{\ldots}},a_{k}\}\subset G$
be a finite subset of $G$. We want to show that $A$ generates $G$,
that is, the group generated by $A$ is dense in $G$. 
\end{problem}
We make use of Theorem \ref{thm:main} to solve Problem \ref{main_problem}.
To this end we note that one can always write
\begin{gather*}
a_{i}=e^{X_{i}},\, X_{i}\in\mathfrak{g}.
\end{gather*}
We consider two cases: 
\begin{enumerate}
\item Assume that for all $i\in\{1,\ldots,k\}$ we have that $<a_{i}>$
is dense in $\{e^{tX_{i}}:\, t\in\mathbb{R}\}$, i.e. $\overline{<a_{i}>}=\{e^{tX_{i}}:\, t\in\mathbb{R}\}$.
Under this assumption, by Theorem \ref{thm:main} we get that $A$
generates $G$ if and only if $\{X_{1},\ldots,X_{k}\}$ generates
$\mathfrak{g}$.
\item Assume that for some $a_{i}$ the group $<a_{i}>$ is not dense in
$\{e^{tX_{i}}:\, t\in\mathbb{R}\}$ but $\{X_{1},\ldots,X_{k}\}$
generate $\mathfrak{g}$. In this case we cannot directly apply Theorem
\ref{thm:main}. What we can do however is to replace $a_{i}$ by
some element $b_{i}=e^{Y_{i}}$ that belong to $<A>$ and is such
that $<b_{i}>$ is dense in $\{e^{tY_{i}}:\, t\in\mathbb{R}\}$ and
$\{X_{1},\ldots,Y_{i},\ldots,X_{k}\}$ generate $\mathfrak{g}$. If
this kind of manipulation can be done for each ``bad'' $a_{i}\in A$
then the problem is solved by means of Theorem \ref{thm:main}.
\end{enumerate}
In the following we will show that this is the case for beamsplitters.

\section{\label{sec:Universality-of2}Universality of a real 2-mode beamsplitter }

A $2$-mode beamsplitter is given by a matrix $g_{\theta}\in SO(2)$
of the form 
\begin{gather*}
g_{\theta}=\left(\begin{array}{cc}
\cos(\theta) & \sin(\theta)\\
-\sin(\theta) & \cos(\theta)
\end{array}\right),
\end{gather*}
where $\theta\in[0,\,2\pi[$. Let us first look at the spectrum of
$g_{\theta}$. The characteristic equation reads:
\begin{gather*}
\lambda^{2}-2\lambda\cos(\theta)+1=0.
\end{gather*}
And therefore spectrum is given by $\{e^{i\theta},\, e^{-i\theta}\}$.
We want to show that three matrices:

\begin{gather}
O_{1,2}=\left(\begin{array}{ccc}
\cos(\theta) & \sin(\theta) & 0\\
-\sin(\theta) & \cos(\theta) & 0\\
0 & 0 & 1
\end{array}\right),\,\, O_{1,3}=\left(\begin{array}{ccc}
\cos(\theta) & 0 & \sin(\theta)\\
0 & 1 & 0\\
-\sin(\theta) & 0 & \cos(\theta)
\end{array}\right),\label{eq:o3}\\
O_{2,3}=\left(\begin{array}{ccc}
1 & 0 & 0\\
0 & \cos(\theta) & \sin(\theta)\\
0 & -\sin(\theta) & \cos(\theta)
\end{array}\right),\nonumber 
\end{gather}
for a given and fixed $\theta\in[0,2\pi[\setminus\{0,\frac{\pi}{2},\pi,\frac{3}{2}\pi\}$
generate $SO(3)$\footnote{The excluded angles correspond to either permutation of modes or $\pm$ identity operation.}. Following the reasoning explained in Section \ref{sec:method}
we note that $O_{i,j}=e^{X_{i,j}}$, where 
\begin{gather*}
X_{1,2}=\left(\begin{array}{ccc}
0 & \theta & 0\\
-\theta & 0 & 0\\
0 & 0 & 0
\end{array}\right),\,\, X_{1,3}=\left(\begin{array}{ccc}
0 & 0 & \theta\\
0 & 0 & 0\\
-\theta & 0 & 0
\end{array}\right),\,\, X_{2,3}=\left(\begin{array}{ccc}
0 & 0 & 0\\
0 & 0 & \theta\\
0 & -\theta & 0
\end{array}\right).
\end{gather*}
It is obvious that $\left\{ X_{1,2},\, X_{1,3},X_{2,3}\right\} $
generate Lie algebra $\mathfrak{so}(3)$ iff $\theta\neq0$, as matrices 

\begin{gather*}
E_{1,2}=\left(\begin{array}{ccc}
0 & 1 & 0\\
-1 & 0 & 0\\
0 & 0 & 0
\end{array}\right),\,\, E_{1,3}=\left(\begin{array}{ccc}
0 & 0 & 1\\
0 & 0 & 0\\
-1 & 0 & 0
\end{array}\right),\,\, E_{2,3}=\left(\begin{array}{ccc}
0 & 0 & 0\\
0 & 0 & 1\\
0 & -1 & 0
\end{array}\right),
\end{gather*}
form a standard basis of $\mathfrak{so}(3)$. We also note that for
$\theta=a\pi$ where $a$ is irrational we have that $<O_{ij}>$ is
dense in $\{e^{tX_{ij}}:\, t\in\mathbb{R}\}$. Therefore using Theorem
\ref{thm:main} we have that $\{O_{1,2},\, O_{1,3},\, O_{2,3}\}$
generate $SO(3)$ for $\theta=a\pi$ with irrational $a$. We still
need to examine the case when $a\in\mathbb{Q}$ that is the case when
for all three groups $\overline{<O_{i,j}>}\neq\{e^{tX_{i,j}}:\, t\in\mathbb{R}\}$.
To this end we choose three new matrices that belong to $<O_{1,2},\, O_{1,3},\, O_{2,3}>$,
i.e.: 
\begin{gather*}
O_{1,2}O_{1,3},\, O_{1,2}O_{2,3},\, O_{1,3}O_{2,3}.
\end{gather*}
Our goal is to show: 
\begin{description}
\item[Statement 1.] The Lie algebra elements corresponding to $\left\{ O_{1,2}O_{1,3},\, O_{1,2}O_{2,3},\, O_{1,3}O_{2,3}\right\} $
form a basis of $\mathfrak{so}(3)$.
\item[Statement 2.]  The elements $\left\{ O_{1,2}O_{1,3},\, O_{1,2}O_{2,3},\, O_{1,3}O_{2,3}\right\} $
are rotations by angles that are not rational multiples of $\pi$,
or equivalently there is no power $k\in\mathbb{N}$ for which they
become identity matrix.
\end{description}
The proofs of these two statements are given in the next three subsections.

\subsection{\label{sub:example}Finite and infinite order elements - examples.}

In this section we consider examples showing that the product of two orthogonal rotations by an angle $\theta$ which is a rational multiple of $\pi$ can be a rotation by an angle $\alpha$ which is not a rational multiple of $\pi$. In particular we show that this is the case for all $\theta$'s whose $\cos(\theta)$ is algebraic of degree two. Our approach makes a heavy use of cyclotomic polynomials as they are useful in showing that for a complex number $e^{i\alpha}$ its argument $\alpha$ in not a rational multiple of $\pi$. These techniques were also used in \cite{Boykin99} to show that gates $H$ and $T$ generate $SU(2)$, where the product of two specific rotation was considered. Here we provide much more general discussion which reveals the natural limitations of this method. The main purpose of this section is to provide some hands on examples and explicit calculations. For an arbitrary $\theta$ we will use a different approach (see section \ref{sec:statment2}). 
\subsubsection{Cyclotomic ploynomials}
The method we use is based on some properties
of cyclotomic polynomials and therefore we start with their discussion
(see \cite{Dummit91} for more details). 
\begin{defn}
The $n$-th cyclotomic polynomial for $n\in\mathbb{N}$ is given by
\begin{gather*}
\Phi_{n}(x)=\prod_{\begin{array}{c}
1\leq k\leq n\\
\mathrm{gcd}(k,n)=1
\end{array}}\left(x-e^{2i\pi\frac{k}{n}}\right),
\end{gather*}
where $\mathrm{gcd}(k,n)$ is the greatest common divisor of $n$
and $k$. 
\end{defn}
\noindent Cyclotomic polynomials have several useful properties. In the following
we will use three of them: (1) cyclotomic polynomials are monic and irreducible over $\mathbb{\mathbb{\mathbb{Q}}}$, (2) Coefficients of cyclotomic polynomials are integers, (3)
For any $q\in\mathbb{N}$ we have
\begin{gather}
x^{q}-1=\prod_{d|q}\Phi_{d}(x).\label{eq:xq-decomposition}
\end{gather}
The first property is actually a nontrivial result due to Gauss (it
can be however easily proved for prime $n$ using Eisenstein criterion
for irreducibility) \cite{Dummit91}. The second one is a direct result
of 
\begin{gather*}
x^{q}-1=\prod_{1\leq k\leq q}\left(x-e^{2i\pi\frac{k}{q}}\right),
\end{gather*}
and the definition of a cyclotomic polynomial. For $\alpha\in\mathbb{C}$ we say that the monic irreducible polynomial
$m_{\alpha}(x)\in\mathbb{\mathbb{Q}}[x]$ is the minimal polynomial
for $\alpha$ over $\mathbb{Q}$ iff $m_{\alpha}(\alpha)=0$. If $\alpha\in\mathbb{C}$ has a minimal polynomial over $\mathbb{Q}$ then we call it algebraic. Otherwise it is transcendental. The algebraic degree of $\alpha$ is a degree of its minimal polynomial. Now we can state the main theorem (cf. lemma 3.4 of \cite{ST02}):
\begin{thm}\label{thm-cyclo}
The following two are equivalent: (1) $a\in\mathbb{Q}$, (2) the minimal
polynomial for $\alpha=e^{i2\pi a}$ over $\mathbb{Q}$ exists and
is cyclotomic.\end{thm}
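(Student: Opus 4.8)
The plan is to prove the two implications separately, relying on the definition of the minimal polynomial $m_\alpha$ as the unique monic irreducible polynomial in $\mathbb{Q}[x]$ that annihilates $\alpha$, together with the three listed properties of cyclotomic polynomials. Throughout, the key structural fact is that, by definition, the roots of $\Phi_n$ are exactly the primitive $n$-th roots of unity $e^{2\pi i k/n}$ with $\gcd(k,n)=1$.

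For the direction $(1)\Rightarrow(2)$, I would begin by writing $a=p/q$ in lowest terms with $q\in\mathbb{N}$ and $\gcd(p,q)=1$, so that $\alpha=e^{2\pi i p/q}$. Since $\gcd(p,q)=1$, the integer $p$ (reduced modulo $q$) occurs among the indices $k$ with $1\le k\le q$ and $\gcd(k,q)=1$, hence $\alpha$ is one of the roots appearing in the defining product of $\Phi_q$, giving $\Phi_q(\alpha)=0$. Because $\Phi_q$ is monic with integer coefficients (property (2)) and irreducible over $\mathbb{Q}$ (property (1)), it is by the very definition of the minimal polynomial equal to $m_\alpha$. Thus $m_\alpha=\Phi_q$ exists and is cyclotomic, which is statement (2).

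For the direction $(2)\Rightarrow(1)$, I would assume that $m_\alpha$ exists and equals some cyclotomic polynomial $\Phi_n$. Then $\alpha$ is a root of $\Phi_n$, so by the definition of $\Phi_n$ we have $\alpha=e^{2\pi i k/n}$ for some $k$ with $\gcd(k,n)=1$. Comparing with $\alpha=e^{2\pi i a}$ forces $a-k/n\in\mathbb{Z}$, and therefore $a\in\mathbb{Q}$.

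The only genuinely nontrivial ingredient is the irreducibility of $\Phi_q$ over $\mathbb{Q}$, invoked in the forward direction to promote $\Phi_q$ from being merely \emph{an} annihilating polynomial of $\alpha$ to being \emph{the} minimal polynomial; this is Gauss's theorem, which I take as given (property (1)). Everything else is a direct unwinding of the definition of $\Phi_n$ as a product over primitive roots, so the reverse direction is essentially immediate. As an alternative route to $\Phi_q(\alpha)=0$ in the forward direction one could instead use the factorization \eqref{eq:xq-decomposition}, noting that $\alpha$ has multiplicative order exactly $q$ and that the factors $\Phi_d$ for $d\mid q$ have disjoint root sets indexed by order; reading the vanishing directly off the defining product of $\Phi_q$ is, however, the cleanest.
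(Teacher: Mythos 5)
Your proof is correct and follows essentially the same route as the paper: both directions rest on the monicity and irreducibility of cyclotomic polynomials to identify $m_\alpha$, and on the root structure of $\Phi_n$ (equivalently the factorization of $x^q-1$) for the converse. The only cosmetic difference is that by reducing $a=p/q$ to lowest terms you pin down $m_\alpha=\Phi_q$ exactly, whereas the paper only concludes $m_\alpha=\Phi_d$ for some $d\mid q$; likewise your converse reads the rationality of $a$ directly off the roots of $\Phi_n$ rather than passing through $\alpha^n=1$.
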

\begin{proof}
If $a=p/q$ then $(e^{i2\pi a})^{q}=1$ and therefore the minimal
polynomial $m_{\alpha}(x)$ exists as $ $$\alpha$ satisfies $x^{q}-1=0$.
By (\ref{eq:xq-decomposition}) $ $we know that $\alpha$ is a root
of some $\Phi_{d}(x)$ where $d|q$. But $\Phi_{d}(x)$ is irreducible
and monic hence it is the minimal polynomial for $\alpha$ over $\mathbb{\mathbb{Q}}$.
Conversely, assume $m_{\alpha}$ exists and is cyclotomic. Then we
have
\begin{gather}
0=m_{\alpha}(\alpha)=\Phi_{n}(\alpha)=\Pi_{d|n}\Phi_{d}(\alpha)=\alpha^{n}-1.\label{eq:relation}
\end{gather}
By (\ref{eq:relation}) we get $e^{2i\pi cn}=1$ and hence $c\in\mathbb{Q}$.
\end{proof}

\subsubsection{Products of rotations}\label{rot-prod}

Theorem \ref{thm-cyclo} can be used in particular for showing
that for a given complex number $e^{i\alpha}$ the angle $\alpha$ is not a rational multiple of $\pi$. In this case it is enough to prove that
either the minimal polynomial does not exist or it exists and is not cyclotomic. In the following we use it to study the
composition of two rotations about orthogonal axes by an angle $\theta$ which is a rational multiple of $\pi$.  Let $O_{1,2}$ and $O_{2,3}$ be as in (\ref{eq:o3}). We
have
\begin{gather*}
\mathrm{tr}O_{1,2}O_{2,3}=2\cos(\theta)+\cos^{2}(\theta).
\end{gather*}
From the other hand $O_{1,2}O_{2,3}$ is a rotation by an angle $\alpha$
and hence $\mathrm{tr}O_{1,2}O_{2,3}=1+2\cos(\alpha)$. As a result
we get the equation which relates $\theta$ and $\alpha$:
\begin{gather}
2\cos(\alpha)=2\cos(\theta)+\cos^{2}(\theta)-1.\label{eq:phiteta}
\end{gather}
We first determine if $e^{i\alpha}$ is algebraic or transcendental. 
\begin{fact}
For $\alpha$ given by (\ref{eq:phiteta}) $e^{i\alpha}$ is an algebraic number.
\end{fact}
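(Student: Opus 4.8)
The plan is to show that $e^{i\alpha}$ is algebraic by exhibiting it as a root of a polynomial with rational coefficients, using the trace relation (\ref{eq:phiteta}) together with the assumption that $\theta$ is a rational multiple of $\pi$. The starting observation is that since $\theta = 2\pi a$ with $a\in\mathbb{Q}$, Theorem \ref{thm-cyclo} guarantees that $e^{i\theta}$ is algebraic (indeed its minimal polynomial is cyclotomic). Consequently $\cos(\theta) = \tfrac{1}{2}(e^{i\theta}+e^{-i\theta})$ is algebraic as well, being a rational expression in the algebraic number $e^{i\theta}$ and its inverse; the algebraic numbers form a field, so this closure is immediate.

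First I would rewrite (\ref{eq:phiteta}) as
\begin{gather*}
e^{i\alpha}+e^{-i\alpha} = 2\cos(\theta)+\cos^{2}(\theta)-1 =: c,
\end{gather*}
where $c$ is algebraic by the previous paragraph. Setting $z = e^{i\alpha}$, this says $z + z^{-1} = c$, or equivalently $z^{2} - c\,z + 1 = 0$. Thus $e^{i\alpha}$ satisfies a monic quadratic whose coefficients lie in the field of algebraic numbers. Since algebraic numbers are closed under taking roots of polynomials with algebraic coefficients (equivalently, an algebraic element over an algebraic extension of $\mathbb{Q}$ is itself algebraic over $\mathbb{Q}$, by the tower law for field extensions), it follows that $e^{i\alpha}$ is algebraic over $\mathbb{Q}$. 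This completes the argument.

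The only genuine step requiring care is the closure claim: that $e^{i\alpha}$, defined as a root of a quadratic with algebraic coefficients, is itself algebraic over $\mathbb{Q}$. The cleanest way to justify this is the finiteness of degrees: $\mathbb{Q}(c)$ is a finite extension of $\mathbb{Q}$ (since $c$ is algebraic), and $\mathbb{Q}(c, e^{i\alpha})$ is a finite extension of $\mathbb{Q}(c)$ (since $e^{i\alpha}$ satisfies a degree-$2$ polynomial over $\mathbb{Q}(c)$), whence $[\mathbb{Q}(e^{i\alpha}):\mathbb{Q}] \leq [\mathbb{Q}(c, e^{i\alpha}):\mathbb{Q}] = [\mathbb{Q}(c,e^{i\alpha}):\mathbb{Q}(c)]\cdot[\mathbb{Q}(c):\mathbb{Q}] < \infty$, so $e^{i\alpha}$ is algebraic. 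I expect this to be the main obstacle only in the sense of requiring the right citation or standard field-theory fact; the computation itself is routine, and I would keep the presentation to this short chain so as not to belabor it. The substance of the section lies not here but in the \emph{subsequent} determination of whether the minimal polynomial of $e^{i\alpha}$ is cyclotomic, which is what decides rationality of $\alpha/\pi$ via Theorem \ref{thm-cyclo}.
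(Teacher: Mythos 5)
Your proof is correct and follows essentially the same route as the paper: establish that $\cos(\theta)$, and hence $\cos(\alpha)$ via (\ref{eq:phiteta}), is algebraic, then promote this to algebraicity of $e^{i\alpha}$ by a tower-of-extensions argument. The only (cosmetic) difference is that you treat $e^{i\alpha}$ directly as a root of $z^{2}-2\cos(\alpha)z+1=0$ over $\mathbb{Q}(\cos\alpha)$, which is the paper's equation (\ref{eq1}), whereas the paper first argues separately that $\sin(\alpha)$ is algebraic and then writes $e^{i\alpha}=\cos\alpha+i\sin\alpha$; your version is, if anything, slightly more streamlined.
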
\label{phi-algebraic}
\begin{proof}
As $\theta$ is a rational multiple of $\pi$, using De Moivre's formula we get that $\cos (\theta)$ is an algebraic number. Next, it is known \cite{ST02}, that the sum and product of two algebraic numbers is again algebraic number. Applying this to (\ref{eq:phiteta}) we get that $\cos(\alpha)$ is an algebraic number. A complex number is algebraic iff its both real and imaginary parts are algebraic. To show that $\sin(\alpha)$ is algebraic note that the field extensions $\mathbb{Q}\left[\cos(\alpha)\right]:\mathbb{Q}$ and $\mathbb{Q}\left[\cos(\alpha),\sin(\alpha)\right]:\mathbb{Q}\left[\cos(\alpha)\right]$  are both algebraic and consequently by the chain rule $\mathbb{Q}\left[\sin(\alpha)\right]:\mathbb{Q}$ is algebraic. Hence $e^{i\alpha}=\cos\alpha+i\sin\alpha$ is algebraic.
\end{proof}
\noindent Note that by Fact \ref{phi-algebraic} $e^{i\alpha}$ is never transcendental so in order to use Theorem \ref{thm-cyclo} we need to determine its minimal polynomial. Putting $x=e^{i\alpha}=\cos(\alpha)+i\sqrt{1-\cos^2(\alpha)}$ we get 
\begin{gather}\label{eq1}
x^{2}-2\cos(\alpha)x+1=0,
\end{gather}
where $\cos(\alpha)$ is determined by $\cos(\theta)$. 
\begin{fact}\label{degree}
The algebraic degree of $\cos(\alpha)$ divides the algebraic degree of $\cos(\theta)$ and the algebraic degree of $e^{i\alpha}$ is twice the algebraic degree of $\cos(\alpha)$.
\end{fact}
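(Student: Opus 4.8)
The plan is to recast the statement entirely in the language of field extensions and apply the tower (degree multiplicativity) law twice. Write $c=\cos(\theta)$ and $d=\cos(\alpha)$, so that the three quantities to compare are the degrees $[\mathbb{Q}(c):\mathbb{Q}]$, $[\mathbb{Q}(d):\mathbb{Q}]$, and $[\mathbb{Q}(e^{i\alpha}):\mathbb{Q}]$ of the corresponding minimal polynomials over $\mathbb{Q}$.

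For the first assertion I would start from (\ref{eq:phiteta}), which exhibits $d$ as a polynomial in $c$ with rational coefficients, namely $d=\tfrac{1}{2}(c^{2}+2c-1)$. Hence $d\in\mathbb{Q}(c)$, so $\mathbb{Q}(d)\subseteq\mathbb{Q}(c)$, and the tower law gives
\[
[\mathbb{Q}(c):\mathbb{Q}]=[\mathbb{Q}(c):\mathbb{Q}(d)]\,[\mathbb{Q}(d):\mathbb{Q}].
\]
Thus $[\mathbb{Q}(d):\mathbb{Q}]$ divides $[\mathbb{Q}(c):\mathbb{Q}]$, i.e.\ the algebraic degree of $\cos(\alpha)$ divides that of $\cos(\theta)$.

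For the second assertion I would invoke (\ref{eq1}): $e^{i\alpha}$ is a root of the monic quadratic $x^{2}-2\cos(\alpha)x+1$, whose coefficients lie in $\mathbb{Q}(d)=\mathbb{Q}(\cos\alpha)$. It therefore suffices to show this quadratic is irreducible over $\mathbb{Q}(\cos\alpha)$, i.e.\ $[\mathbb{Q}(e^{i\alpha}):\mathbb{Q}(\cos\alpha)]=2$; one more application of the tower law then yields
\[
[\mathbb{Q}(e^{i\alpha}):\mathbb{Q}]=2\,[\mathbb{Q}(\cos\alpha):\mathbb{Q}],
\]
which is exactly the claim that the degree of $e^{i\alpha}$ is twice that of $\cos(\alpha)$.

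The crux — and the one place genuine care is needed — is this irreducibility. The two roots of the quadratic are the complex-conjugate pair $e^{\pm i\alpha}$, which are non-real precisely when $\sin\alpha\neq 0$. Since $\cos\alpha$ is real, the field $\mathbb{Q}(\cos\alpha)$ sits inside $\mathbb{R}$ and so cannot contain these non-real numbers; a monic quadratic with no root in its coefficient field is irreducible, forcing the degree to be exactly $2$. The degenerate case $\sin\alpha=0$ must be excluded by hand: setting $2\cos\alpha=2\cos\theta+\cos^{2}\theta-1=\pm2$ forces $\cos\theta\in\{1,-1\}$, that is $\theta\in\{0,\pi\}$, which are among the angles already excluded. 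Hence for every relevant $\theta$ we have $\sin\alpha\neq0$ and the argument closes. I expect the field-theoretic skeleton to be entirely routine, and this treatment of the $\sin\alpha=0$ boundary case to be the only subtlety worth flagging.
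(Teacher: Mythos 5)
Your proof follows the same route as the paper's: the inclusion $\mathbb{Q}(\cos\alpha)\subseteq\mathbb{Q}(\cos\theta)$ coming from (\ref{eq:phiteta}) plus the tower law for the divisibility claim, and the quadratic (\ref{eq1}) for the factor of two. The paper merely cites that quadratic without justifying its irreducibility over $\mathbb{Q}(\cos\alpha)$; your argument via the non-reality of $e^{\pm i\alpha}$, together with the explicit check that $\sin\alpha=0$ only occurs for the already-excluded $\theta\in\{0,\pi\}$, supplies the detail the paper omits and is correct.
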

\begin{proof}
It is easy to see that the field extensions satisfy $\mathbb{Q}[\cos(\alpha)]\subset\mathbb{Q}[\cos(\theta)]$. Using the chain rule for fields \cite{ST02} we get
\begin{gather}
[\mathbb{Q}[\cos(\theta)]:\mathbb{Q}]=[\mathbb{Q}[\cos(\theta)]:\mathbb{Q}[\cos(\alpha)][\mathbb{Q}[\cos(\alpha)]:\mathbb{Q}]
\end{gather}
As $[\mathbb{Q}[\cos(\theta)]:\mathbb{Q}]$ and $[\mathbb{Q}[\cos(\alpha)]:\mathbb{Q}]$ are algebraic degrees of $\cos(\theta)$ and $\cos(\alpha)$ respectively we get the conclusion. The relation between algebraic degrees $e^{i\alpha}$ and $\cos(\alpha)$ comes from equation (\ref{eq1}). 
\end{proof}
Note that $\cos(\theta)$ can have arbitrary large algebraic degree and therefore minimal polynomial for $e^{i\alpha}$ can have any order. In the following we consider example when degree of $\cos(\theta)$ is $2$ which by Fact \ref{degree} means the algebraic degree of $e^{i\alpha}$ can be either $2$ or $4$.  As by our assumption $\cos(\theta)=a+b\sqrt{C}$ we get $cos(\alpha)=A+B\sqrt{C}$, where $A$ and $B$ are given in terms of $a,b,c$. The minimal polynomial of $e^{i\alpha }$ is:
\begin{gather}\label{min-pol}
x^4-4Ax^3+(4A^2+2)x^2-4Ax-4B^2C+1=0
\end{gather}
We next determine possible values of $a$ and $b$. Using De Moivre's formula one can easily see that roots $x_1$ and $x_2$ with $|x_1|<2$ and $|x_2| <2$ of 
\begin{gather}
x^2+ax+b=0,
\end{gather} 
where $a,b\in \mathbb{Z}$ are only possible values of $2\cos(\theta)$ of algebraic degree two. Direct calculation leads to: (1) $\cos(\theta)=\pm\frac{1}{\sqrt{2}}$, (2) $\cos(\theta)=\pm\frac{\sqrt{3}}{2}$, (3) $\cos(\theta)=\pm\frac{1}{4}\pm\frac{\sqrt{5}}{4}$. The corresponding angles are: $\cos(\pi/4)=1/\sqrt{2}$, $\cos(\pi/3)=\sqrt{3}/2$, $\cos(\pi/5)=1/4+1/4\sqrt{5}$, $\cos(2\pi/5)=-1/4+1/4\sqrt{5}$, $\cos(3\pi/5)=1/4-1/4\sqrt{5}$, $\cos(4\pi/5)=-1/4-1/4\sqrt{5}$, $\cos(3\pi/4)=-1/\sqrt{2}$, $\cos(2\pi/3)=-\sqrt{3}/2$. One explicitly checks that in all case  polynomial (\ref{min-pol}) is not cyclotomic. This way we showed that the
product of two rotations about orthogonal axes by  rational angle $\theta$, whose $\cos{\theta}$ is algebraic of degree two, is a rotation by an angle which is an irrational multiple of $\pi$. 

\subsection{The proof of Statement 1}\label{sub:linear3}

We will make use of a compact form of Baker-Campbell-Hausdorff (BCH)
formula for the group $SO(3)$. For the detailed derivation see \cite{BCH01} (cf. \cite{H853}).
Let us first recall the definition of the BCH formula for an arbitrary
compact semisimple matrix Lie algebra $\mathfrak{g}$. For $X,Y\in\mathfrak{g}$ we define
$\mbox{BCH}(X,Y)$ in the following way:

\begin{gather*}
e^{\mbox{BCH}(X,Y)}=e^{X}e^{Y}.
\end{gather*}
It is known that $\mbox{BCH}(X,Y)$ is given by an infinite sum. In
the case of $\mathfrak{so}(3)$, however, there is a particularly
nice compact formula for $\mbox{BCH}(X,Y)$. This is due to the following
two facts:
\begin{enumerate}
\item For $X\in\mathfrak{so}(3)$ the characteristic polynomial is given
by $p(\lambda)=-\lambda^{3}-||X||^{2}\lambda$, where $||X||^{2}=\frac{1}{2}\mathrm{tr}(X^{t}X)$.
Therefore by the Cayley-Hamilton theorem $X^{3}=-\theta^{2}X$, where
$\theta=||X||$. 
\begin{gather}
e^{X}=I+X+\frac{1}{2!}X^{2}+\frac{1}{3!}X^{3}+\ldots+\frac{1}{n!}X^{n}+\ldots=\nonumber \\
=I+X\left(1-\frac{\theta^{2}}{3!}+\frac{\theta^{4}}{5!}-\frac{\theta^{6}}{7!}+\ldots\right)+X^{2}\left(\frac{1}{2!}-\frac{\theta^{2}}{4!}+\frac{\theta^{4}}{6!}-\ldots\right)=\label{eq:rodriguez}\\
=I+\frac{\sin\theta}{\theta}X+2\frac{\sin^{2}(\theta/2)}{\theta^{2}}X^{2}.\nonumber 
\end{gather}

\item Let $O=e^{X}$ be a rotation matrix from $SO(3)$ and let $Z=\frac{O-O^{t}}{2}$.
Using formula (\ref{eq:rodriguez}) we see that $Z=\frac{\sin\theta}{\theta}X$.
Therefore: 
\begin{gather}
\log(O):=X=\frac{\sin^{-1}(||Z||)}{||Z||}Z.\label{eq:log}
\end{gather}

\end{enumerate}
Having these the BCH formula for $\mathfrak{so}(3)$ can be easily
calculated. One simply writes down the expression for $e^{X}e^{Y}$
using (\ref{eq:rodriguez}) and then calculates the logarithm using
(\ref{eq:log}). Details can be found in \cite{BCH01}. In the case
where $X$ and $Y$ are orthogonal $\mbox{tr}(X^{t}Y)=0$ the formula
reads:
\begin{gather}
\mbox{BCH}(X,Y)=\alpha X+\beta Y+\gamma[X,Y],\label{eq:bch}
\end{gather}
where
\begin{gather}
\alpha=\frac{\sin^{-1}(d)}{d}\frac{a}{\theta},\,\,\beta=\frac{\sin^{-1}(d)}{d}\frac{b}{\phi},\,\,\gamma=\frac{\sin^{-1}(d)}{d}\frac{c}{\theta\phi},\label{eq:BCH-1}\\
a=\sin\theta\cos^{2}\left(\phi/2\right),\,\, b=\mbox{sin}\phi\cdot\mbox{cos}^{2}\left(\theta/2\right),\,\, c=\frac{1}{2}\sin\theta\sin\phi,\nonumber \\
d=\sqrt{a^{2}+b^{2}+c^{2}},\,\,\theta=||X||,\,\,\phi=||Y||.\nonumber 
\end{gather}
We need to show that
\begin{gather}
\left\{ \mbox{BCH}(X_{1,2},X_{1,3}),\,\mbox{BCH}(X_{1,2},X_{2,3}),\,\mbox{BCH}(X_{1,3},X_{2,3})\right\} ,\label{bch11}
\end{gather}
form a basis of $\mathfrak{so}(3)$. Using (\ref{eq:bch}) and (\ref{eq:BCH-1})
we easily find:
\begin{gather}
\mbox{BCH}(X_{1,2},X_{1,3})=\frac{\sin^{-1}(d)}{d\cdot\theta}\sin\theta\left(\cos^{2}\frac{\theta}{2}X_{1,2}+\cos^{2}\frac{\theta}{2}X_{1,3}-\frac{1}{2}\sin\theta X_{2,3}\right),\label{eq:bchel}\\
\mbox{BCH}(X_{1,2},X_{2,3})=\frac{\sin^{-1}(d)}{d\cdot\theta}\sin\theta\left(\cos^{2}\frac{\theta}{2}X_{1,2}+\cos^{2}\frac{\theta}{2}X_{2,3}+\frac{1}{2}\sin\theta X_{1,3}\right),\nonumber \\
\mbox{BCH}(X_{1,3},X_{2,3})=\frac{\sin^{-1}(d)}{d\cdot\theta}\sin\theta\left(\cos^{2}\frac{\theta}{2}X_{1,3}+\cos^{2}\frac{\theta}{2}X_{2,3}-\frac{1}{2}\sin\theta X_{1,2}\right).\nonumber 
\end{gather}
where $d=\sin\theta\sqrt{2\cos^{4}(\theta/2)+1/4\sin^{2}\theta}$.
Next we write down matrices (\ref{eq:bchel}) in the standard basis
of $\mathfrak{so}(3)$ and get the following change of basis matrix

\begin{gather}
M=\frac{\sin^{-1}(d)}{d\cdot\theta}\sin\theta\cdot A_{SO(3)}=\frac{\sin^{-1}(d)}{d}\sin\theta\left(\begin{array}{ccc}
\cos^{2}\frac{\theta}{2} & \cos^{2}\frac{\theta}{2} & -\frac{1}{2}\sin\theta\\
\cos^{2}\frac{\theta}{2} & \frac{1}{2}\sin\theta & \cos^{2}\frac{\theta}{2}\\
-\frac{1}{2}\sin\theta & \cos^{2}\frac{\theta}{2} & \cos^{2}\frac{\theta}{2}
\end{array}\right).\label{det}
\end{gather}
The determinant of (\ref{det}) is given by:

\begin{gather}\label{e1}
\left(\frac{\sin^{-1}(d)}{d}\sin\theta\right)^{3}\left(-2\cos^{6}(\theta/2)+\frac{1}{2}\cos^{4}(\theta/2)\sin(\theta)+\frac{1}{8}\sin^{3}(\theta)\right)
\end{gather}
To find its zeros of (\ref{e1}) one can for example write down all functions in terms of $t=\mathrm{tan}(\frac{\theta}{4})$ and solve the polynomial equations with respect to $t$.The relevant part of (\ref{e1}) reads:
\begin{gather}
-2\cos^{6}(\theta/2)+\frac{1}{2}\cos^{4}(\theta/2)\sin(\theta)+\frac{1}{8}\sin^{3}(\theta)=0.
\end{gather}
using $\sin\theta=2\sin\frac{\theta}{2}\cos\frac{\theta}{2}$ one
gets: $ $
\begin{gather}\label{e2}
\cos^{3}(\theta/2)\left(-2\cos^{3}(\theta/2)+\cos^{2}(\theta/2)\sin(\theta/2)+\sin^{3}(\theta/2)\right)=0.
\end{gather}
The first factor of (\ref{e2}) gives $\theta=\pi$. Using $\cos\theta/2=\frac{1-t^{2}}{1+t^{2}}$
and $\sin\theta/2=\frac{2t}{1+t^{2}}$, for the second factor of (\ref{e2}) one
gets polynomial equation that has only one positive real root $t=\sqrt{2}-1$
(the remaining roots are $t=-\sqrt{2}-1$ and four complex roots). That means $\theta=\pi/2$. Therefore the determinant (\ref{e1}) vanishes iff $\theta=0$, $\theta=\pi$ or
$\theta=\pi/2$. Hence (\ref{bch11}) form a basis of $\mathfrak{so}(3)$
in all cases we are interested in.
\begin{rem}
Using isomorphism $SO(3)=SU(2)/\mathbb{Z}_{2}$ one can also prove
this result working with $SU(2)$ matrices. 
\end{rem}

\subsection{The proof of Statement 2}\label{sec:statment2}

For the proof we use the following result of \cite{RS99}:
\begin{lem}
\label{lem:rotations}Let $A$ and $B$ be rotations about orthogonal
axes by $2\pi/p$ and $2\pi/q$ respectively. Consider the word:
\begin{gather*}
A^{a_{1}}B^{b_{1}}\cdots A^{a_{n}}B^{b_{n}},
\end{gather*}
where none of $a_{i}$'s are multiple of $p/2$ and none of $b_{i}$'s
are multiple of $q/2$. If no two consecutive terms represent rotations
by $\pi/2$ and if the word is nonempty, then the word is not equal
to identity. \end{lem}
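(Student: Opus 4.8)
The plan is to prove the statement in its lifted form on $SU(2)$, exploiting $SO(3)=SU(2)/\mathbb{Z}_2$, because the lifts of the two generators have much cleaner matrix entries. First I would fix coordinates so that the two orthogonal axes are the $z$- and $x$-axes; since the two given axes are orthogonal there is a fixed $R\in SO(3)$ carrying them there, and $W=I$ iff $RWR^{-1}=I$, so this is harmless. Writing $\omega=e^{i\pi/p}$ and $\eta=e^{i\pi/q}$, the lifts are the diagonal matrix $\tilde A=\mathrm{diag}(\omega^{-1},\omega)$ and the symmetric matrix $\tilde B=\tfrac12\bigl(\begin{smallmatrix}\eta+\eta^{-1} & \eta^{-1}-\eta\\ \eta^{-1}-\eta & \eta+\eta^{-1}\end{smallmatrix}\bigr)$, so every entry of every power lies in $\mathcal{O}_K[\tfrac12]$, where $K=\mathbb{Q}(\omega,\eta)$ and $\mathcal{O}_K$ is its ring of integers. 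A word in $SO(3)$ is the identity iff its $SU(2)$-lift is $\pm I$, so it suffices to show the lifted word is never $\pm I$; for that it is enough to exhibit a single entry of the lift that is \emph{not} an algebraic integer. Note that every angle here is a rational multiple of $\pi$, so there is no ``irrational angle'' shortcut to freeness — this is exactly why the statement is delicate.

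To detect a non-integral entry I would fix a prime $\mathfrak{p}$ of $\mathcal{O}_K$ over $2$ with valuation $v$ and track $v$ on the four entries of the running product. The factors $\tilde A^{a_i}$ are diagonal with root-of-unity (hence $v$-unit) entries, so they only permute columns and rescale by units, leaving all valuations unchanged. The denominators come from the $\tilde B^{b_i}$: its diagonal and off-diagonal entries $x,z$ satisfy the clean identities $x-z=\eta^{b_i}$ and $x+z=\eta^{-b_i}$, so $x\pm z$ are units and $v(x)=v(z)<0$ for a genuine factor. The hypothesis that no $b_i$ is a multiple of $q/2$ guarantees $\tilde B^{b_i}\neq\pm I$ (i.e. $z\neq0$), so the factor really contributes to the denominator, while the hypothesis that no $a_i$ is a multiple of $p/2$ excludes the trivial reductions ($\tilde A^{a_i}=\pm I$ or a $\pi$-rotation) that would let two $\tilde B$-blocks merge. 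The goal is then to prove, by induction on word length, that each genuine $\tilde B$-factor strictly lowers the minimum valuation among the four entries, so the final product has an entry of strictly negative valuation and cannot be $\pm I$.

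The main obstacle is precisely the no-cancellation step in that induction, and it is here that the hypothesis on consecutive $\pi/2$-rotations is forced upon us. When $\tilde B^{b_i}$ acts, each new entry has the shape $\alpha x+\beta z$ with $v(x)=v(z)$; if $v(\alpha)\neq v(\beta)$ there is no cancellation and the valuation drops as wanted, but if $v(\alpha)=v(\beta)$ one must rule out a leading-term conspiracy. Because $\mathfrak{p}\mid 2$, the relation $(\alpha x+\beta z)\pm(\alpha z+\beta x)=(\alpha\pm\beta)(x\pm z)$ only transfers the problem to $v(\alpha+\beta)$ and $v(\alpha-\beta)$, whose sum is $v(2\alpha)$ and hence entangles the ramification of $2$. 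A $\pi/2$-rotation is exactly the case where the relevant trigonometric value is $1/\sqrt2$, so $2$ enters as a square in the denominator and the valuations are \emph{halved}; two consecutive $\pi/2$-rotations about the orthogonal axes then let two such half-integer valuations recombine into an algebraic integer — this is a genuine relation (the product has order $3$) which the hypothesis must exclude.

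Making this bookkeeping uniform over all $p,q$ is the technical heart. One must control anomalous $2$-divisibilities of $2\cos(\pi b/q)=\eta^b+\eta^{-b}$ and the wild ramification of $2$ at the small exceptional denominators, which is precisely where the honestly non-free structure of $\langle A,B\rangle$ lives; this is what defeats a naive ping-pong argument (the group is \emph{not} free) and is the reason the statement is quoted from Radin and Sadun rather than dispatched in a line. I would expect the clean unramified cases (e.g. $p,q$ odd) to follow directly from the valuation induction above, with the genuine effort concentrated in the ramified and $\pi/2$-adjacent cases, organized as a finite case analysis keyed to whether each individual factor is a $\pi/2$-rotation.
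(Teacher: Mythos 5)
You should first be aware that the paper does not prove this lemma at all: it is imported verbatim from Radin and Sadun \cite{RS99} (the text preceding it reads ``For the proof we use the following result of \cite{RS99}''), so there is no in-paper argument to compare against. Your sketch is, in spirit, the right one --- lifting to $SU(2)$, working in a cyclotomic ring, and detecting non-triviality of the word through the valuation of its entries at a prime over $2$ is essentially the Radin--Sadun method, and your preparatory observations check out: the identities $x\pm z=\eta^{\mp b_i}$ are correct, one can verify from $v(1-\zeta_{2^j})=v(2)/2^{j-1}$ that every factor $\tilde B^{b_i}$ with $b_i$ not a multiple of $q/2$ really has $v(x)=v(z)<0$, and your reading of the role of the two hypotheses (multiples of $p/2$ let $B$-blocks merge via $A^{p/2}B^bA^{-p/2}=\pm B^{-b}$; consecutive $\pi/2$-rotations produce the order-$3$ relation) is accurate.

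However, as a proof the proposal has a genuine gap, and it sits exactly where the entire content of the lemma lives: the inductive claim that each genuine $\tilde B$-factor strictly lowers the minimal valuation of the entries is announced but never established. You correctly identify that the dangerous case is $\alpha x+\beta z$ with $v(\alpha)=v(\beta)$, and that the identity $(\alpha x+\beta z)\pm(\alpha z+\beta x)=(\alpha\pm\beta)(x\pm z)$ only shifts the problem to $v(\alpha\pm\beta)$, but you never formulate an induction hypothesis strong enough to close this loop --- one needs to carry along not just the minimal valuation but a precise statement about which combinations of entries (e.g.\ the sums and differences along rows) achieve it, and to show this invariant is preserved by both the diagonal $\tilde A$-factors and the $\tilde B$-factors. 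Without that, the argument cannot rule out a ``leading-term conspiracy'' accumulating over a long word, which is precisely the phenomenon the $\pi/2$-hypothesis is there to forbid. Your closing paragraph concedes the point: the uniform treatment over all $p,q$, the anomalous $2$-divisibility of $\eta^b+\eta^{-b}$, and the ramified cases are labelled ``the technical heart'' and left as an expectation rather than an argument. So the proposal is a plausible and well-oriented outline of the cited proof, but it does not constitute a proof of the lemma.
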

\begin{cor}
\label{thm:rational}Let $O_{1},O_{2}\in SO(3)$ be rotations about
orthogonal axes by $2\pi\frac{a}{p}$ and $2\pi\frac{b}{q}$, respectively,
where $a<p$ and $b<q$ and fractions $a/p$ and $b/q$ are not equal
$\{1,\,\frac{1}{2},\,\frac{1}{4},\,\frac{3}{4}\}$. Then for any $n\in\mathbb{N}$
we have $\left(O_{1}O_{2}\right)^{n}\neq I$. Thus the product of
two finite order orthogonal rotations is a rotation by an angle which
is not a rational multiple of $\pi$. \end{cor}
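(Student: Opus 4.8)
The plan is to obtain Corollary~\ref{thm:rational} directly from Lemma~\ref{lem:rotations} by writing the product $O_1O_2$ as a single alternating word in the two base rotations and then feeding that word to the lemma. Let $A$ and $B$ denote the rotations about the same two orthogonal axes by the base angles $2\pi/p$ and $2\pi/q$, so that $O_1=A^{a}$ and $O_2=B^{b}$. Then for every $n$,
\begin{gather*}
(O_1O_2)^{n}=\underbrace{A^{a}B^{b}A^{a}B^{b}\cdots A^{a}B^{b}}_{n\text{ repetitions of }A^{a}B^{b}},
\end{gather*}
which is exactly a nonempty word of the form $A^{a_1}B^{b_1}\cdots A^{a_n}B^{b_n}$ with all $a_i=a$ and all $b_i=b$. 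It therefore suffices to verify that this word meets the three hypotheses of Lemma~\ref{lem:rotations}.

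The core of the derivation is translating the value-based exclusions $a/p,\,b/q\notin\{1,\tfrac12,\tfrac14,\tfrac34\}$ into the combinatorial conditions of the lemma. First, since $a<p$ (and $a\geq 1$) we have $0<a/p<1$; an integer $a$ is a multiple of $p/2$ precisely when $2a/p\in\mathbb{Z}$, i.e. $a/p\in\tfrac12\mathbb{Z}$, and the only such value in $(0,1)$ is $\tfrac12$, which is excluded. Hence no $a_i=a$ is a multiple of $p/2$, and symmetrically no $b_i=b$ is a multiple of $q/2$. Second, the factor $A^{a}$ is a rotation by $2\pi a/p$, which is a quarter turn (rotation by $\pm\pi/2$) exactly when $a/p\in\{\tfrac14,\tfrac34\}$; both values are excluded, so neither $A^{a}$ nor, by the same reasoning, $B^{b}$ is a rotation by $\pi/2$. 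In particular no two consecutive terms of the word are such rotations. With all hypotheses met and $n\geq 1$, Lemma~\ref{lem:rotations} yields $(O_1O_2)^{n}\neq I$.

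Finally I would convert this into the stated assertion about the rotation angle. The element $O_1O_2\in SO(3)$ is a rotation by some angle $\alpha$. If $\alpha$ were a rational multiple of $\pi$, say $\alpha=\pi r/s$ with $r,s\in\mathbb{Z}$, then $(O_1O_2)^{2s}$ would be a rotation by $2\pi r$, i.e. the identity, contradicting what was just proved. Hence $\alpha$ is an irrational multiple of $\pi$, which is precisely the conclusion of the corollary.

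I expect the only genuinely delicate point to be the bookkeeping in the translation step: one must keep in mind that the fraction $a/p$ need not be in lowest terms, and that ``rotation by $\pi/2$'' should be read as a quarter turn, i.e. rotation by either $+\pi/2$ or $-\pi/2$, so that excluding both $\tfrac14$ and $\tfrac34$ is exactly what is required. Everything substantive---the fact that such an alternating word cannot collapse to the identity---is already carried by Lemma~\ref{lem:rotations}, itself a consequence of the Conway--Radin--Sadun analysis, so no further geometric input is needed.
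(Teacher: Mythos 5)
Your proposal is correct and follows essentially the same route as the paper: write $O_1=A^a$, $O_2=B^b$ for the base rotations $A,B$ by $2\pi/p$, $2\pi/q$, observe that $(O_1O_2)^n=(A^aB^b)^n$ is an admissible word for Lemma~\ref{lem:rotations} under the stated exclusions, and conclude. Your version is in fact slightly more explicit than the paper's (in translating the excluded fractions into the lemma's hypotheses and in deducing the irrationality of the rotation angle from $(O_1O_2)^n\neq I$), but there is no substantive difference.
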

\begin{proof}
Let $A$ and $B$ be orthogonal rotations by $\frac{2\pi}{p}$ and
$\frac{2\pi}{q}$ respectively. We put $O_{1}=A^{a}$ and $O_{2}=B^{b}$.
Then $\left(O_{1}O_{2}\right)^{n}=\left(A^{a}B^{b}\right)^{n}.$ By
the assumptions $a\neq\frac{p}{2}$ and $b\neq\frac{q}{2}$. Moreover,
neither $O_{1}$ nor $O_{2}$ is a rotation by $\pi/2$. Therefore,
using Lemma \ref{lem:rotations} we get $\left(A^{a}B^{b}\right)^{n}\neq1$. 
\end{proof}
Combining corollary \ref{thm:rational} with the results of Section
\ref{sub:linear3} we get:
\begin{thm}
Assume $\theta\notin\{0,\,\frac{\pi}{2},\,\pi,\,\frac{3}{2}\pi\}$.
Let
\begin{gather*}
S=\left\{ O_{k,l}:\, O_{k,l}\in SO(3),\, k,l\in\left\{ 1,\,2,\,3\right\} ,\, k<l\right\} ,
\end{gather*}
where $O_{k,l}$ are given by (\ref{eq:o3}). The set S generates
$SO(3)$ and hence a real 2-mode beamsplitter is universal on 3-modes. 
\end{thm}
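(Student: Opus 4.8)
The plan is to deduce the theorem from the control-theoretic criterion of Theorem \ref{thm:main}, organizing the argument according to the arithmetic nature of $\theta$. Writing $\theta=a\pi$, the decisive dichotomy is whether $a$ is irrational or rational, since this is exactly what controls whether each cyclic group $<O_{k,l}>$ is already dense in its one-parameter subgroup $\{e^{tX_{k,l}}:\,t\in\mathbb{R}\}$.

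First I would dispose of the irrational case. If $a\notin\mathbb{Q}$, each $O_{k,l}$ is a rotation by an irrational multiple of $\pi$, so $\overline{<O_{k,l}>}=\{e^{tX_{k,l}}:\,t\in\mathbb{R}\}$ and we sit in case (1) of the method of Section \ref{sec:method}. Since $X_{1,2},X_{1,3},X_{2,3}$ are nonzero multiples of the standard basis $E_{1,2},E_{1,3},E_{2,3}$ of $\mathfrak{so}(3)$ (as $\theta\neq0$), they generate the Lie algebra, and Theorem \ref{thm:main} yields $\overline{<S>}=SO(3)$ immediately.

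The substantive case is $a\in\mathbb{Q}$, where every $<O_{k,l}>$ is finite and hence not dense in its one-parameter subgroup, so Theorem \ref{thm:main} does not apply to the $O_{k,l}$ directly. Here I would pass to case (2) of the method and replace the three generators by the products $O_{1,2}O_{1,3}$, $O_{1,2}O_{2,3}$, $O_{1,3}O_{2,3}$, all of which lie in $<S>$. Two results established in the preceding subsections then close the argument. By Statement 1 (the BCH computation of Section \ref{sub:linear3}), the logarithms $\mathrm{BCH}(X_{1,2},X_{1,3})$, $\mathrm{BCH}(X_{1,2},X_{2,3})$, $\mathrm{BCH}(X_{1,3},X_{2,3})$ form a basis of $\mathfrak{so}(3)$ whenever $\theta\notin\{0,\pi/2,\pi\}$ and hence generate it; by Statement 2, i.e. Corollary \ref{thm:rational}, each of the three products is a rotation by an irrational multiple of $\pi$, so its cyclic group is dense in its own one-parameter subgroup. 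These are precisely the hypotheses of Theorem \ref{thm:main} applied to the three products, whence the one-parameter subgroups through them, and a fortiori $<S>$, are dense in $SO(3)$.

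The main obstacle lives entirely in this rational case and is exactly what pins down the excluded set $\{0,\pi/2,\pi,3\pi/2\}$. The genuinely hard ingredient is the number-theoretic Statement 2: to invoke Corollary \ref{thm:rational} I must first note that each $O_{i,j}$ is a rotation about the coordinate axis complementary to $\{i,j\}$, so any two generators rotate about distinct, hence orthogonal, axes, making each product $O_{i,j}O_{k,l}$ a product of rotations about orthogonal axes as the corollary requires; and the corollary's hypothesis that the rotation fractions avoid $\{1,\tfrac12,\tfrac14,\tfrac34\}$ translates into $\theta\notin\{2\pi,\pi,\pi/2,3\pi/2\}$. On the Lie-algebra side, Statement 1 degenerates precisely where the determinant (\ref{e1}) vanishes, namely at $\theta\in\{0,\pi/2,\pi\}$. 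The union of these two degeneracy sets is $\{0,\pi/2,\pi,3\pi/2\}$, so for every admissible $\theta$ both mechanisms apply at once, and together with the irrational case this covers all $\theta\notin\{0,\pi/2,\pi,3\pi/2\}$.
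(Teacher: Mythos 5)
Your proposal is correct and follows essentially the same route as the paper: the same dichotomy on whether $\theta/\pi$ is rational, the same direct application of Theorem \ref{thm:main} in the irrational case, and in the rational case the same passage to the products $O_{1,2}O_{1,3}$, $O_{1,2}O_{2,3}$, $O_{1,3}O_{2,3}$ combined with the BCH basis computation (Statement 1) and Corollary \ref{thm:rational} (Statement 2). Your accounting of how the two degeneracy sets union to the excluded angles $\{0,\pi/2,\pi,3\pi/2\}$ matches the paper's argument.
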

$ $
\subsection{A real 2-mode beamsplitter is universal on $m$ modes, $m\geq3$}

In the previous section we discussed the first non-trivial case, that
is, $SO(3)$. It turns out that having the result for $SO(3)$ is
almost enough to state the corresponding one for $SO(N)$.  To this
end let us denote by $\{\ket i\}_{i=1}^{N}$ a basis of $\mathbb{C}^{N}$.
In this section we will consider the following set
\begin{gather}\label{set-S}
S=\left\{ O_{k,l}:\, O_{k,l}\in SO(N),\, k,l\in\left\{ 1,\ldots,N\right\} ,\, k<l\right\} ,
\end{gather}
 where $O_{k,l}$ represent the matrix of a beamsplitter acting on
modes $k$, $l$:
\begin{gather*}
O_{kl}=\cos\theta\left(\kb kk+\kb ll\right)+\sin\theta\left(\kb kl-\kb lk\right).
\end{gather*}
The number of these matrices is $N(N-1)/2$. The Lie algebra elements
satisfying $e^{X_{kl}}=O_{kl}$ are given by
\begin{gather}\label{Xkl}
X_{kl}=\theta\left(\kb kl-\kb lk\right)=\theta E_{kl},
\end{gather}
where $\{E_{kl}\}_{k<l}$ is a standard basis of $\mathfrak{so}(N)$
and we have the following commutation relations:
\begin{gather*}
\left[X_{k,l},X_{k,m}\right]=-\theta X_{l,m},\,\,\,\left[X_{k,l},X_{l,m}\right]=\theta X_{k,m},\left[X_{k,m},X_{l,m}\right]=-\theta X_{k,m},
\end{gather*}
where we assumed $k<l$, $l<m$, $k<m$. 

We first, in section \ref{gen-arg} give an argument based purely on Theorem \ref{thm:main} which shows that generation of $SO(3)$ by $\left\{ O_{1,2},\,O_{2,3},\,O_{1,3}\right\}$ implies generation of $SO(N)$ by $S$ (defined as in (\ref{set-S})). This reasoning is then extended in section \ref{sec-mmode} to state some general facts about universality of $m$-mode beamsplitters, where $m>2$. In section \ref{spec-ex} we give another argument which is tailored specifically for $2$-mode beamsplitters. It has an interesting advantage over the general argument which we discuss in section \ref{gen-arg}.  
\subsubsection{General argument}\label{gen-arg}
\begin{fact}\label{general-2}Assume that $\left\{ O_{1,2},\,O_{2,3},\,O_{1,3}\right\}$ generates $SO(3)$. Then  \begin{gather}\left\{ O_{k,l}:\, O_{k,l}\in SO(N),\, k,l\in\left\{ 1,\ldots,N\right\} ,\, k<l\right\}\end{gather} generates $SO(N)$. 
\end{fact}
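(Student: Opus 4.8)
The plan is to bootstrap from the assumed $SO(3)$ result to all of $SO(N)$ by noting that the three beamsplitters attached to any triple of modes generate a copy of $SO(3)$, and that these copies together supply enough one-parameter subgroups to invoke Theorem \ref{thm:main} directly, avoiding any fresh rational/irrational case split at the level of $SO(N)$.

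First I would observe that the hypothesis is symmetric under relabeling of modes. The theorem immediately preceding this Fact shows that $\{O_{1,2},O_{1,3},O_{2,3}\}$ generates $SO(3)$, and nothing in that argument uses the specific labels $1,2,3$; hence the same conclusion holds verbatim for the three beamsplitters $O_{i,j},O_{i,k},O_{j,k}$ acting on any triple $\{i,j,k\}\subseteq\{1,\ldots,N\}$ (and fixing all remaining modes). Since these three matrices belong to $S$, we have $SO(\{i,j,k\})=\overline{\langle O_{i,j},O_{i,k},O_{j,k}\rangle}\subseteq\overline{\langle S\rangle}$, so the closure $\overline{\langle S\rangle}$ contains the embedded subgroup $SO(\{i,j,k\})$ for every triple.

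Next I would exploit that each such $SO(\{i,j,k\})$ is a genuine continuous subgroup, not merely the discrete cyclic group generated by a single $O_{k,l}$. In particular it contains the entire one-parameter subgroup $\{e^{tE_{i,j}}:t\in\mathbb{R}\}$ of rotations in the $(i,j)$-plane, and likewise for the $(i,k)$- and $(j,k)$-planes. Ranging over all triples (every pair $p<q$ lies in some triple because $N\geq3$), $\overline{\langle S\rangle}$ therefore contains $\{e^{tE_{p,q}}:t\in\mathbb{R}\}$ for all $p<q$. Since $\{E_{p,q}\}_{p<q}$ is the standard basis of $\mathfrak{so}(N)$ and hence generates $\mathfrak{so}(N)$ as a Lie algebra, Theorem \ref{thm:main} shows that the group generated by these one-parameter subgroups is dense in $SO(N)$; being contained in the closed set $\overline{\langle S\rangle}$, this forces $\overline{\langle S\rangle}=SO(N)$.

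The substantive content is entirely carried by the $SO(3)$ case taken as hypothesis; the extension above is essentially bookkeeping. The only point requiring care is the passage from the discrete groups $\langle O_{k,l}\rangle$ — which for rational $\theta/\pi$ fail to be dense in their one-parameter subgroups — to the full continuous $SO(3)$ on each triple. This is precisely what the hypothesis supplies, and it is what lets me apply the one-parameter version of Theorem \ref{thm:main} with the full standard basis $\{E_{p,q}\}$ rather than having to re-run the product-of-rotations analysis of Statements 1 and 2 in the ambient group $SO(N)$.
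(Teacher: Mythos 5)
Your proof is correct and follows essentially the same route as the paper's: both use the relabeling symmetry to get a dense copy of $SO(3)$ on every triple of modes, extract from these the planar rotation subgroups whose generators $\{E_{p,q}\}_{p<q}$ form the standard basis of $\mathfrak{so}(N)$, and conclude by Theorem \ref{thm:main}. The only cosmetic difference is that the paper picks the specific elements $e^{E_{kl}}$ (rotation by angle $1$, an irrational multiple of $\pi$) from the closure, whereas you invoke the full one-parameter subgroups directly; the substance is identical.
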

\begin{proof}
Assume that $\left\{ O_{1,2},\,O_{2,3},\,O_{1,3}\right\}$ generates $SO(3)$. In particular, it means that after closure one can obtain any $2$-mode beamsplitter $O\in SO(2)\subset SO(3)$ acting on any pair of available $3$ modes. Therefore we have (at least in the limit) access to elements $e^{E_{12}},\,e^{E_{13}},\,e^{E_{23}}$. Repeating that argument for any three out of $N$ modes we obtain all possible elements $e^{E_{kl}}$ with $1\leq k<l\leq N$- some are obtained more than once.  Note that set $\{E_{kl}\}_{k<l}$ is a standard basis of $\mathfrak{so}(N)$ and the rotation angle of $e^{E_{kl}}$ is $1$ which is clearly not a rational multiple of $\pi$. Therefore, by Theorem \ref{thm:main}, we get the desired result.
\end{proof}
In the proof of Fact \ref{general-2} we used that certain elements are available after closure. It is not clear, and in general may not be true, that they are available before closing set $<S>$. In the situation when the dense set generated by $\left\{ O_{1,2},\,O_{2,3},\,O_{1,3}\right\}$ does not contain elements $e^{E_{12}},\,e^{E_{13}},\,e^{E_{23}}$ the argument described  above uses  elements that are available only in the approximate sense to show generation of $SO(N)$. From the mathematical point of view this is not a problem. One can also say that perhaps other elements $e^{\phi_{12}E_{12}},\,e^{\phi_{13}E_{13}},\,e^{\phi_{23}E_{23}}$ with $\phi_{ij}$ not rational multiples of $\pi$ are in fact available in $< O_{1,2},\,O_{2,3},\,O_{1,3}>$. This can be true, however, it is not clear which $\phi_{ij}$'s are possible and which are not. Therefore, the proof does not give any insight into how elements of $SO(N)$ are generated. Form the practical point of view one would like to know at least one example of a small number of elements that belong to $<S>$, that can be constructed in a simple way from the available beamsplitter and that enable generation of any element of $SO(N)$. In section \ref{spec-ex} we show an exemplary construction which provides the set of elements  $S^{(N)}$ that generates a dense set in $SO(N)$ and satisfies the assumptions of theorem \ref{thm:main}. Moreover these elements are available in $<S>$ before any closure and they are given by products of the original beamsplitter acting on selected triplets of modes. The construction can be viewed as an alternative proof of Fact \ref{general-2}.
\subsubsection{The example}\label{spec-ex}

If $\theta\neq0$ then $\left\{ X_{kl}\right\} _{k<l}$,where $X_{kl}$ is given by (\ref{Xkl}), spans Lie algebra
$\mathfrak{so}(3)$. Moreover when $\theta=a\pi$, $a\notin\mathbb{Q}$
then $<O_{k,l}>$ is dense in $\left\{ e^{tX_{k,l}}:\, t\in R\right\} $.
Therefore we can use Theorem \ref{thm:main} and obtain that $\left\{ O_{kl}\right\} _{k<l}$
generates dense subset of $SO(N)$. What is left is to consider the
case when $a\in\mathbb{Q}$. Note that for $k<l$ and $m<n$, we have
$O_{k,l}O_{m,n}=O_{m,n}O_{k,l}$ iff $\{k,l\}\cap\{m,n\}=\emptyset$.
In this case non-trivial elements of the spectrum (i.e. those different
from 1) of $O_{k,l}O_{m,n}$ are nontrivial elements of spectra of
$O_{k,l}$ and $O_{m,n}$ and hence if $a$ is rational they are rational
multiples of $\pi$. Thus we are interested only in the case when
$\{k,l\}\cap\{m,n\}\neq0$. Without any loss of generality we can
assume that $k\leq m$. We have three possibilities:
\begin{gather}
k=m:\,\, O_{k,l}O_{k,n},\nonumber \\
l=m:\,\, O_{k,l}O_{l,n},\label{eq:mnoz}\\
l=n:\,\, O_{k,l}O_{m,l}.\nonumber 
\end{gather}
Consider now the isomorphism: $\ket k\mapsto\ket 1$, $\ket l\mapsto\ket 2$,
$\ket m\mapsto\ket 3$( $\ket n\mapsto\ket 3$) between the $3$-dimensional
spaces: $\mbox{Span}_{\mathbb{C}}\{\ket k,\ket l,\ket m(\ket n)\}$
and $\mathbb{C}^{3}$. Under this isomorphism we can apply Theorem
\ref{thm:rational} and obtain that spectra of matrices  (\ref{eq:mnoz})
are irrational multiples of $\pi$. Moreover, we can use formulas
(\ref{eq:BCH-1}) to find the corresponding BCH elements belonging
to $\mathfrak{so}(N)$:
\begin{gather*}
\mbox{BCH}(X_{k,l},X_{k,n})=\frac{\sin^{-1}(d)}{d\cdot\theta}\sin\theta\left(\cos^{2}\frac{\theta}{2}X_{k,l}+\cos^{2}\frac{\theta}{2}X_{k,n}-\frac{1}{2}\sin\theta X_{l,n}\right),\\
\mbox{BCH}(X_{k,l},X_{l,n})=\frac{\sin^{-1}(d)}{d\cdot\theta}\sin\theta\left(\cos^{2}\frac{\theta}{2}X_{k,l}+\cos^{2}\frac{\theta}{2}X_{l,n}+\frac{1}{2}\sin\theta X_{k,n}\right),\\
\mbox{BCH}(X_{k,m},X_{l,m})=\frac{\sin^{-1}(d)}{d\cdot\theta}\sin\theta\left(\cos^{2}\frac{\theta}{2}X_{k,m}+\cos^{2}\frac{\theta}{2}X_{k,m}-\frac{1}{2}\sin\theta X_{k,m}\right),
\end{gather*}
where $d=\sin\theta\sqrt{2\cos^{4}(\theta/2)+1/4\sin^{2}\theta}$.
Next we explain how to chose indices $\{k,l\}$ and $\{m,n\}$ so
that the corresponding BCH elements generate $\mathfrak{so}(N)$.
This is in fact the nontrivial part of the extension from $\mathfrak{so}(3)$
to $\mathfrak{so}(N)$, $N>3$.

\begin{figure}[h]
\begin{center}
\epsfig{file=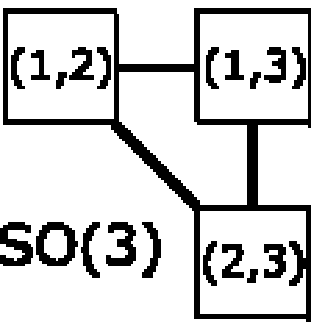, scale=0.55}\quad{}\quad{}\quad{}\quad{}\quad{}\epsfig{file=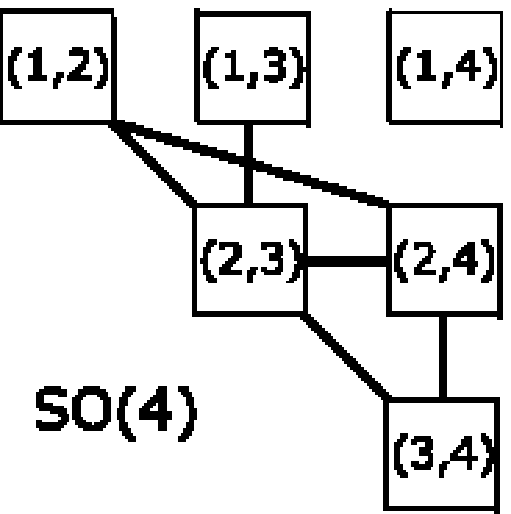, scale=0.55}\quad{}\quad{}\quad{}\quad{}\quad{}\epsfig{file=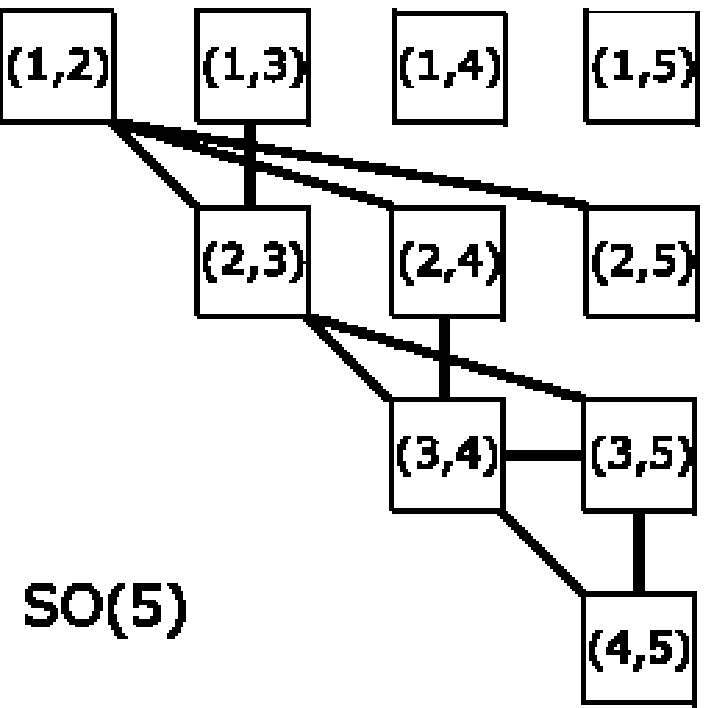, scale=0.55}
\bigskip{}
\bigskip{}
\bigskip{}
\bigskip{}
\qquad{}\qquad{}\qquad{}\qquad{}\epsfig{file=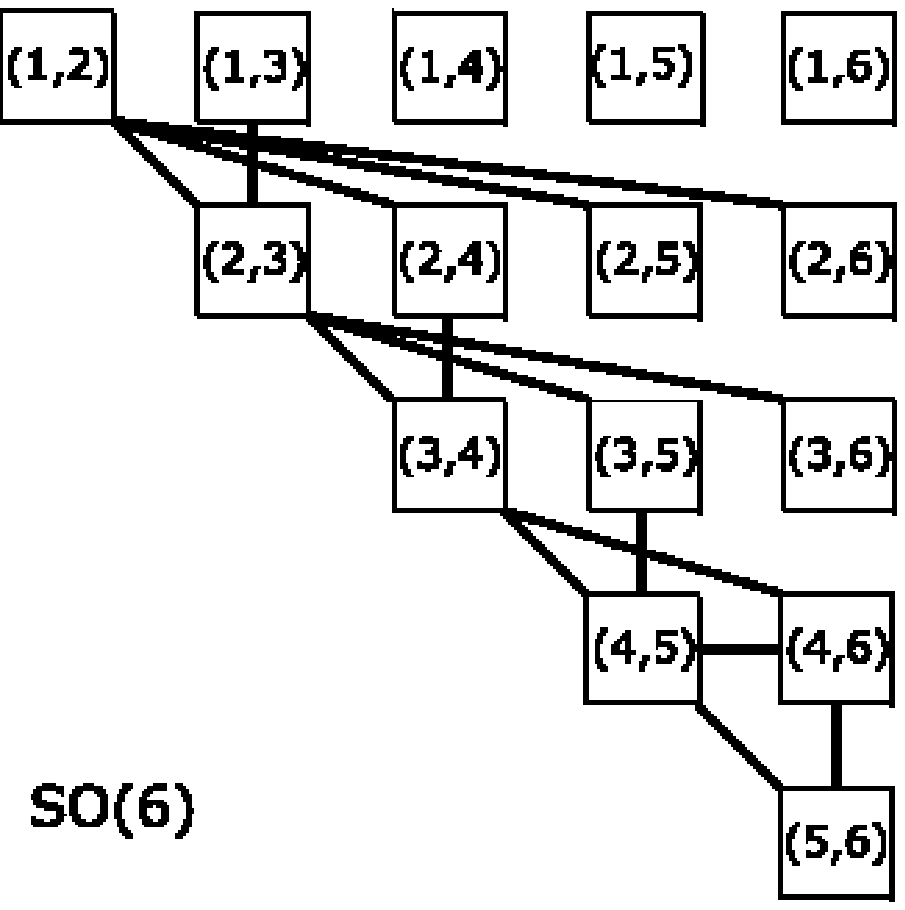, scale=0.55}
\end{center}
\fcaption{The pictorial representation of the rules for choosing elements $O_{k,l}O_{m,n}$. An index $(k,l)$ represent the matrix $O_{k,l}$. If indices $(k,l)$ and $(m,n)$ are connected by a line then matrix $O_{k,l}O_{m,n}$ is chosen as an element of a new generating set for $SO(N)$.\label{fig:The-pictorial-representation}}
\end{figure}


In case of $SO(3)$ we chosen as a new generating set
\begin{gather*}
S^{(3)}=\left\{ O_{12}O_{13},\, O_{12}O_{23},\, O_{13}O_{23}\right\} ,
\end{gather*}
and showed that corresponding $\mbox{BCH}$ elements form the basis
of $\mathfrak{so}(3)$. The construction of the basis for $N>3$ is
presented pictorially in figure \ref{fig:The-pictorial-representation}.
Each ``box'' with an index $(i,j)$ represents the matrix $O_{i,j}$.
If there is a line between two boxes $(k,l)$ and $(m,n)$, where
$k\leq m$ the product $O_{k,l}O_{m,n}$ is a member of a new generating
set for $SO(N)$. For example in case of $SO(4)$ we have:

\begin{gather*}
S^{(4)}=\left\{ O_{2,3}O_{2,4},\, O_{2,3}O_{3,4},\, O_{2,4}O_{3,4}\right\} \cup\left\{ O_{1,2}O_{2,3},\, O_{1,2}O_{2,4},O_{1,3}O_{2,3}\right\} \simeq\\
\simeq S^{(3)}\cup R^{(4)},
\end{gather*}
where $R^{(4)}=\left\{ O_{1,2}O_{2,3},\, O_{1,2}O_{2,4},O_{1,3}O_{2,3}\right\} $.
The first set, $S^{(3)}$, is isomorphic to the one we used for $SO(3)$
(under isomorphism $\ket 1\mapsto\ket 2$, $\ket 2\mapsto\ket 3$,
$\ket 3\mapsto\ket 4$). We need to show that $\mbox{BCH}$ elements
\begin{gather*}
\left\{ \mbox{BCH}(X_{2,3}X_{2,4}),\,\mbox{BCH}(X_{2,3}X_{3,4}),\,\mbox{BCH}(X_{2,4}X_{3,4})\right\} \cup\\
\cup\left\{ \mbox{BCH}(X_{1,2}X_{2,3}),\,\mbox{BCH}(O_{1,2}O_{2,4}),\,\mbox{BCH}(O_{1,3}O_{2,3})\right\} ,
\end{gather*}
are linearly independent. To this end we write them in the standard
basis $\left\{ E_{k,l}\right\} _{k<l}$ of $\mathfrak{so}(4)$. The
corresponding coefficients form columns of the change of basis matrix.
We use the ordering of the basis elements that reflects the structure
of $S^{(4)}$ that is $\left\{ E_{2,3},\, E_{2,4},\, E_{3,4}\right\} \cup\left\{ E_{12},\, E_{13},\, E_{14}\right\} $.
Under these assumption the change of basis matrix has the following
structure:
\begin{gather*}
\frac{\sin^{-1}(d)}{d\cdot\theta}\sin\theta\cdot A_{SO(4)}=\frac{\sin^{-1}(d)}{d\cdot\theta}\sin\theta\left(\begin{array}{cc}
A_{SO(3)} & N\\
0_{3\times3} & P_{SO(4)}
\end{array}\right),
\end{gather*}
where
\begin{gather*}
N=\left(\begin{array}{ccc}
\cos^{2}\frac{\theta}{2} & 0 & \cos^{2}\frac{\theta}{2}\\
0 & \cos^{2}\frac{\theta}{2} & 0\\
0 & 0 & 0
\end{array}\right),\,\, P_{SO(4)}=\left(\begin{array}{ccc}
\cos^{2}\frac{\theta}{2} & \cos^{2}\frac{\theta}{2} & -\frac{1}{2}\sin\theta\\
\frac{1}{2}\sin\theta & 0 & \cos^{2}\frac{\theta}{2}\\
0 & \frac{1}{2}\sin\theta & 0
\end{array}\right).
\end{gather*}
Since we are interested in the determinant only, the matrix $N$ is
irrelevant for calculation of $\mbox{det}(A_{SO(4)})=\mbox{det}(A_{SO(3)})\cdot\mathrm{det}(P_{SO(4)})$.
But $\det(P_{SO(4)})=-\frac{1}{2}\sin\theta\left(\frac{1}{4}\sin^{2}\theta+\cos^{4}\frac{\theta}{2}\right)$
and hence vanishes iff $\theta=0$ or $\theta=\pi$.

\paragraph{General $N$}

To prove that BCH elements corresponding to products of rotations
chosen according to the rules explained in figure 1 form the basis
of $\mathfrak{so}(N)$ we proceed by induction. Assume that $\mbox{det}(A_{SO(N-1)})$
is nontrivial except for $\theta\in\{0,\,\frac{\pi}{2},\,\pi,\,\frac{3}{2}\pi\}$.
Next note that $S^{(N)}\simeq S^{(N-1)}\cup R^{(N)}$ where
\begin{gather*}
R^{(N)}=\left\{ O_{1,2}O_{2,3},\, O_{1,2}O_{2,4},\ldots,\, O_{1,2}O_{2,N},\, O_{1,3}O_{2,3}\right\} .
\end{gather*}
Therefore the change of basis matrix has a structure
\begin{gather*}
\frac{\sin^{-1}(d)}{d\cdot\theta}\sin\theta\cdot A_{SO(N)}=\frac{\sin^{-1}(d)}{d\cdot\theta}\sin\theta\left(\begin{array}{cc}
A_{SO(N-1)} & N\\
0 & P_{SO(N)}
\end{array}\right),
\end{gather*}
where 
\begin{gather*}
P_{SO(N)}=\left(\begin{array}{ccccc}
-\frac{1}{2}\sin\theta & \cos^{2}\frac{\theta}{2} & \cos^{2}\frac{\theta}{2} &  & \cos^{2}\frac{\theta}{2}\\
\cos^{2}\frac{\theta}{2} & \frac{1}{2}\sin\theta & 0 &  & 0\\
 &  & \frac{1}{2}\sin\theta &  & \vdots\\
 &  &  & \ddots & 0\\
 &  &  &  & \frac{1}{2}\sin\theta
\end{array}\right).
\end{gather*}
Therefore $\det P_{SO(N)}=\left(-\frac{1}{2}\right)^{N}\sin^{N}\theta-\cos^{2}\frac{\theta}{2}\det P_{SO(N)}^{\prime}$
where
\begin{gather*}
P_{SO(N)}^{\prime}=\left(\begin{array}{ccccc}
\cos^{2}\frac{\theta}{2} & \cos^{2}\frac{\theta}{2} & \cos^{2}\frac{\theta}{2} &  & \cos^{2}\frac{\theta}{2}\\
0 & \frac{1}{2}\sin\theta & 0 &  & 0\\
0 & 0 & \frac{1}{2}\sin\theta &  & \vdots\\
\vdots & \vdots &  & \ddots & 0\\
0 & 0 &  &  & \frac{1}{2}\sin\theta
\end{array}\right),
\end{gather*}
and is $N-1\times N-1$ matrix. Hence 
\begin{gather*}
\det P_{SO(N)}=-\left(\frac{1}{2}\right)^{N}\sin^{N}\theta-\left(\frac{1}{2}\right)^{N-2}\sin^{N-2}\theta\cos^{4}\frac{\theta}{2}=\\
=-\left(\frac{1}{2}\right)^{N-2}\sin^{N-2}\theta\left(\frac{1}{4}\sin^{2}\theta+\cos^{4}\frac{\theta}{2}\right),
\end{gather*}
and $\det P_{SO(N)}=0$ iff $\theta=0$ or $\theta=\pi$ which finishes
the induction step of the proof. 
\begin{thm}
\label{2-modeuni}Assume $\theta\notin\{0,\,\frac{\pi}{2},\,\pi,\,\frac{3\pi}{2}\}$.
Let 
\begin{gather*}
S=\left\{ O_{k,l}:\, O_{k,l}\in SO(N),\, k,l\in\left\{ 1,\ldots,N\right\} ,\, k<l\right\} ,
\end{gather*}
 where $O_{k,l}$ represent the matrix of a beamsplitter acting on
modes $k$,$l$:
\begin{gather*}
O_{kl}=\cos\theta\left(\kb kk+\kb ll\right)+\sin\theta\left(\kb kl-\kb lk\right).
\end{gather*}
The set S generates $SO(N)$.
\end{thm}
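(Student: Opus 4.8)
The plan is to derive the statement from the control-theoretic criterion of Theorem~\ref{thm:main}, exactly along the lines of case~(2) of the method in Section~\ref{sec:method}. Writing $O_{k,l}=e^{X_{k,l}}$ with $X_{k,l}=\theta E_{k,l}$, the family $\{X_{k,l}\}_{k<l}$ already spans $\mathfrak{so}(N)$ whenever $\theta\neq0$, so the only question is whether the one-parameter subgroups $\{e^{tX_{k,l}}\}$ are densely generated by the available $O_{k,l}$. First I would dispose of the case $\theta=a\pi$ with $a$ irrational: there $\overline{\langle O_{k,l}\rangle}=\{e^{tX_{k,l}}:t\in\mathbb{R}\}$ for every pair, and Theorem~\ref{thm:main} applies directly to give density of $\langle S\rangle$ in $SO(N)$. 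All the difficulty is concentrated in the rational case $a\in\mathbb{Q}$, where each $\langle O_{k,l}\rangle$ is finite and the theorem cannot be invoked on the $X_{k,l}$ themselves.

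For the rational case I would replace the bad generators by the explicit set $S^{(N)}\subset\langle S\rangle$ built in Section~\ref{spec-ex}, whose members are products $O_{k,l}O_{m,n}$ of beamsplitters on index-sharing triples chosen according to Figure~\ref{fig:The-pictorial-representation}, with the recursive structure $S^{(N)}\simeq S^{(N-1)}\cup R^{(N)}$. Since $S^{(N)}\subset\langle S\rangle$ it suffices to prove that $S^{(N)}$ generates $SO(N)$, and for that I must establish the two hypotheses of case~(2): that each element of $S^{(N)}$ densely generates its one-parameter subgroup, and that the corresponding BCH logarithms span $\mathfrak{so}(N)$.

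The first hypothesis follows uniformly and is the easy half. Every element of $S^{(N)}$ leaves invariant the three-dimensional space $\mathrm{Span}_{\mathbb{C}}\{\ket k,\ket l,\ket m\}$ spanned by the involved modes and acts as the identity on its complement, so under the isomorphism with $\mathbb{C}^{3}$ it is the product of two orthogonal rotations treated in Corollary~\ref{thm:rational}; since $\theta$ avoids the forbidden fractions, that corollary shows the product is a rotation by an angle that is an irrational multiple of $\pi$, hence of infinite order, so $\overline{\langle e^{Y}\rangle}=\{e^{tY}:t\in\mathbb{R}\}$ for its logarithm $Y$. The main obstacle is the spanning (linear-independence) statement, which I would prove by induction on $N$ with base case the $SO(3)$ computation of Statement~1. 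Writing the BCH logarithms, via~(\ref{eq:BCH-1}), in the standard basis $\{E_{k,l}\}$ ordered to respect $S^{(N)}\simeq S^{(N-1)}\cup R^{(N)}$ makes the change-of-basis matrix block upper triangular with diagonal blocks $A_{SO(N-1)}$ and $P_{SO(N)}$, so that $\det A_{SO(N)}=\det A_{SO(N-1)}\cdot\det P_{SO(N)}$ and the off-diagonal block does not affect the determinant. Expanding $\det P_{SO(N)}$ along its first column gives $\det P_{SO(N)}=-\left(\tfrac12\right)^{N-2}\sin^{N-2}\theta\left(\tfrac14\sin^{2}\theta+\cos^{4}\tfrac{\theta}{2}\right)$, whose parenthesized factor is strictly positive; hence $\det P_{SO(N)}$ vanishes only at $\theta\in\{0,\pi\}$, and combined with the inductive nonvanishing of $\det A_{SO(N-1)}$ this keeps $\det A_{SO(N)}\neq0$ for all admissible $\theta$, completing the induction.

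With both hypotheses verified, case~(2) of the method applies: each $e^{Y}\in S^{(N)}$ generates a dense subgroup of its one-parameter subgroup, and the logarithms $Y$ span and therefore generate $\mathfrak{so}(N)$ as a Lie algebra, so Theorem~\ref{thm:main} yields that $\langle S^{(N)}\rangle$, and a fortiori $\langle S\rangle$, is dense in $SO(N)$. I note that the same conclusion also follows immediately from Fact~\ref{general-2} together with the $SO(3)$ universality established in Section~\ref{sec:statment2}; the advantage of routing through $S^{(N)}$ is that it exhibits concrete, pre-closure elements of $\langle S\rangle$ that realize the generation.
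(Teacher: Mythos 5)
Your proposal is correct and follows essentially the same route as the paper: the irrational case is dispatched directly by Theorem~\ref{thm:main}, and the rational case is handled by passing to the products in $S^{(N)}$, invoking Corollary~\ref{thm:rational} for infinite order and the block-triangular BCH determinant induction for Lie-algebra generation, with Fact~\ref{general-2} noted as the shortcut. The only quibble is that the factor $\tfrac14\sin^{2}\theta+\cos^{4}\tfrac{\theta}{2}$ is not strictly positive (it vanishes at $\theta=\pi$), but since $\sin^{N-2}\theta$ already vanishes there the zero set of $\det P_{SO(N)}$ is still $\{0,\pi\}$ and your conclusion stands.
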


\begin{rem}
In \cite{VY99} it was shown that one can not make arbitrary unitary transformations using only beam splitters when acting on two qubits. This result is not in contradiction with the fact that a $2$-mode beamsplitter generates $SO(4)$. The point is that in the setting of \cite{VY99} `modes' are divided into two pairs and the beamsplitters can only act separately on these pairs. In our setting they are allowed to act on any pair of modes and therefore they can generate more than $SO(2)\times SO(2)$ which happens to be entire $SO(4)$.
\end{rem}

\section{$m$-mode beamsplitters }\label{sec-mmode}

In case of a $2$-mode beamsplitter the freedom stemming from mode
permutations was not significant since it was changing $O_{2}$ into
$O_{2}^{-1}$. For $m$-mode beamsplitters with $m\geq3$ we have
two situations which should be treated differently.

Let $\mathrm{Sym}(m)$ be a permutation group of $m$ elements, $|\mathrm{Sym}(m)|=m!$.
We consider an $m$-mode beamsplitter $ $$O_{m}=e^{A_{m}}\in SO(m)$,
$m\geq3$. Taking into account all possible mode permutations the
starting point is not a single beamsplitter but rather a set of beamsplitters
given by matrices:
\begin{gather*}
S(O_{m}):=\{P_{\sigma}^{T}O_{m}P_{\sigma}:\,\sigma\in\mathrm{Sym}(m)\},
\end{gather*}
where $P_{\sigma}$ is an $m\times m$ permutation matrix corresponding
to $\sigma\in\mathrm{Sym}(m)$. Note that by Schur's lemma $S(O_{m})$
has always at least two elements as the only permutation invariant
matrix in $SO(m)$ is the identity matrix. However, we still can have
two cases:
\begin{enumerate}
\item Trivial action of $\mathrm{Sym}(m)$: $S(O_{m})$ consists of $O_{m}$
and $O_{m}^{-1}$. In this case the action of permutation group is
exactly as for a $2$-mode beamsplitter.
\item Nontrivial action of $\mathrm{Sym}(m)$: $S(O_{m})$ has at least
two non-commuting elements. Combining this with a standard result
of Kuranishi \cite{kuranishi49} saying that the set of pairs that
generate a semisimple Lie group $G$ is open and dense in $G\times G$
then is a good chance set $S(O_{m})$ generates $SO(m)$.
\end{enumerate}
From now on we will say that $O_{m}$ is universal on $k\geq m$ modes
iff $ $the set $ $$S(O_{m})$ is universal on $k\geq m$ modes.
\begin{rem}
If a beamsplitter $O_{m}$ falls into the first case then it cannot
be universal on $m$-modes if only $m$ of them are at our disposal.
On the other hand, if $O_{m}$ belongs to the second case it can be
universal on $m$-modes without usage of any additional modes.
\end{rem}
Having $X\in\mathfrak{so}(m)$ we can embed it into $\mathfrak{so}(m+k)$
in ${m+k \choose m}$ natural ways and similarly having $O\in SO(m)$
we can embed it into $SO(m+k)$ in $ $${m+k \choose m}$ natural
ways by choosing $m$ out of $m+k$ modes and letting $X$ or $O$
operate on them. For example in case when $m=3$ and $k=1$ we have

\begin{scriptsize}

\begin{gather*}
\mathfrak{so}(3)\hookrightarrow\mathfrak{so}(4):\left(\begin{array}{cccc}
\ast & \ast & \ast & 0\\
\ast & \ast & \ast & 0\\
\ast & \ast & \ast & 0\\
0 & 0 & 0 & 0
\end{array}\right),\left(\begin{array}{cccc}
\ast & \ast & 0 & \ast\\
\ast & \ast & 0 & \ast\\
0 & 0 & 0 & 0\\
\ast & \ast & 0 & \ast
\end{array}\right),\left(\begin{array}{cccc}
\ast & 0 & \ast & \ast\\
0 & 0 & 0 & 0\\
\ast & 0 & \ast & \ast\\
\ast & 0 & \ast & 0
\end{array}\right),\left(\begin{array}{cccc}
0 & 0 & 0 & 0\\
0 & \ast & \ast & \ast\\
0 & \ast & \ast & \ast\\
0 & \ast & \ast & \ast
\end{array}\right),\\
SO(3)\hookrightarrow SO(4):\left(\begin{array}{cccc}
\ast & \ast & \ast & 0\\
\ast & \ast & \ast & 0\\
\ast & \ast & \ast & 0\\
0 & 0 & 0 & 1
\end{array}\right),\left(\begin{array}{cccc}
\ast & \ast & 0 & \ast\\
\ast & \ast & 0 & \ast\\
0 & 0 & 1 & 0\\
\ast & \ast & 0 & \ast
\end{array}\right),\left(\begin{array}{cccc}
\ast & 0 & \ast & \ast\\
0 & 1 & 0 & 0\\
\ast & 0 & \ast & \ast\\
\ast & 0 & \ast & 0
\end{array}\right),\left(\begin{array}{cccc}
1 & 0 & 0 & 0\\
0 & \ast & \ast & \ast\\
0 & \ast & \ast & \ast\\
0 & \ast & \ast & \ast
\end{array}\right).
\end{gather*}

\end{scriptsize}

We start with the following simple fact:
\begin{fact}
\label{algebra-embeding}Let $X=\{X_{1},\ldots,X_{k}\}\subset\mathfrak{so}(m)$.
Consider ${m+k \choose k}$ natural embedding of $ $$X$ $ $ into
$\mathfrak{so}(m+k)$. If $ $$X$ generates $\mathfrak{so}(m)$ then
$ $the ${m+k \choose k}$ natural embedding of $ $$X$ into $\mathfrak{so}(m+k)$
generate $\mathfrak{so}(m+k)$.\end{fact}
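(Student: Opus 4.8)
The plan is to exploit the fact that each of the $\binom{m+k}{k}$ natural embeddings $\iota_S\colon\mathfrak{so}(m)\hookrightarrow\mathfrak{so}(m+k)$, indexed by the $m$-element subset $S\subset\{1,\ldots,m+k\}$ of modes on which the copy acts, is an injective Lie-algebra homomorphism. Since a Lie-algebra homomorphism carries a generating set onto a generating set of its image, the hypothesis that $X$ generates $\mathfrak{so}(m)$ immediately yields that $\iota_S(X)$ generates the image subalgebra $\iota_S(\mathfrak{so}(m))$ for every such $S$. Concretely, this image is $\mathfrak{so}(S)$, the span of the standard basis elements $E_{i,j}$ with $i<j$ and $i,j\in S$; indeed any $Y\in\mathfrak{so}(m)$ is, by assumption, a linear combination of nested commutators of the $X_i$, and applying $\iota_S$ to that very expression (which respects brackets verbatim) exhibits $\iota_S(Y)$ as the same combination of nested commutators of $\iota_S(X)$.

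First I would record the standard basis $\{E_{k,l}\}_{k<l}$ of $\mathfrak{so}(m+k)$, whose bracket relations such as $[E_{i,j},E_{j,l}]=E_{i,l}$ for distinct $i,j,l$ are the ones already displayed above. Let $\mathfrak{h}$ denote the Lie subalgebra of $\mathfrak{so}(m+k)$ generated by the union $\bigcup_S \iota_S(X)$. By the previous paragraph, $\mathfrak{h}$ contains $\iota_S(\mathfrak{so}(m))=\mathfrak{so}(S)$ for every admissible $S$, and hence contains $E_{i,j}$ whenever the pair $\{i,j\}$ is contained in some $m$-element subset $S$.

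The combinatorial heart of the argument, and essentially the only step demanding attention, is to observe that this containment condition is vacuous. Since $m\geq 2$ while there are $m+k$ modes in total, any pair $\{i,j\}\subset\{1,\ldots,m+k\}$ can be completed to an $m$-element subset $S$ by adjoining $m-2$ further modes. Consequently $E_{i,j}\in\mathfrak{h}$ for all $i<j$, so $\mathfrak{h}$ contains the entire standard basis of $\mathfrak{so}(m+k)$ and therefore $\mathfrak{h}=\mathfrak{so}(m+k)$, which is the claim.

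I expect the only genuine subtlety to lie not in any commutator computation but in correctly matching the generation hypothesis with the homomorphism property: one must note that expressibility through linear combinations and nested commutators of $X$ transfers intact under each $\iota_S$, so that \emph{no} brackets mixing different copies are required. Cross-copy commutators do produce further elements of $\mathfrak{so}(m+k)$, but they are a convenience that can be dispensed with entirely; once the homomorphism observation is in place the proof reduces to the elementary counting fact that the $m$-subsets of an $(m+k)$-set cover every pair of indices.
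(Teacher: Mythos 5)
Your proposal is correct and follows essentially the same route as the paper: both arguments reduce to observing that each embedded copy of $X$ generates the standard basis elements $E_{i,j}$ supported on its $m$-mode subset, and that these subsets cover every pair of indices in $\{1,\ldots,m+k\}$, yielding an (overcomplete) standard basis of $\mathfrak{so}(m+k)$. You merely spell out more explicitly the two points the paper leaves implicit — that each embedding is a Lie-algebra homomorphism, so generation transfers copy by copy, and the counting fact that every pair extends to an $m$-subset.
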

\begin{proof}
As $X$ generates $\mathfrak{so}(m)$ we have a basis of $\mathfrak{so}(m)$
at our disposal. In particular we can chose it to be a standard basis
$\left\{ E_{i,j}\right\} _{i<j}$, where $1\leq i,\, j\leq m$. By
definition of standard basis, it is obvious that ${m+k \choose k}$
natural embeddings of this basis into $\mathfrak{so}(m+k)$ gives
a set which is an overcomplete basis of $\mathfrak{so}(m+k)$. Therefore
the result follows. \end{proof}
\begin{fact}
\label{spectra}The group $<e^{A_{m}}>$ with $A_{m}\in\mathfrak{so}(m)$
is dense in $\{e^{tA_{m}}:\, t\in\mathbb{R}\}$ iff spectrum of $e^{A_{m}}$
is given by $\{e^{i\phi_{k}}:k\in\{1,\ldots,m\}\}$ with all $\phi_{i}$'s
being irrational multiples of $\pi$.\end{fact}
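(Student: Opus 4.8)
The plan is to diagonalize the one-parameter subgroup and reduce the statement to Kronecker's theorem on a torus. Since $A_m\in\mathfrak{so}(m)$ is real skew-symmetric, an orthogonal change of basis $Q$ brings it to the real canonical form $Q^{t}A_mQ=\bigoplus_{j=1}^{r}\left(\begin{smallmatrix}0&\phi_j\\-\phi_j&0\end{smallmatrix}\right)\oplus 0$, where $\pm i\phi_1,\ldots,\pm i\phi_r$ are the nonzero eigenvalues of $A_m$ and the trailing zero block absorbs any zero eigenvalues. Conjugation by $Q$ is a homeomorphism of $SO(m)$ carrying one-parameter subgroups to one-parameter subgroups, so we may assume $A_m$ is already in this form. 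Then $e^{tA_m}$ is block diagonal with planar rotation blocks $R(t\phi_j)$ and an identity block, so the one-parameter subgroup $H:=\{e^{tA_m}:t\in\mathbb{R}\}$ is identified with the line $t\mapsto(t\phi_1,\ldots,t\phi_r)\bmod 2\pi$ inside the subtorus $SO(2)^{r}\subset SO(m)$, while $<e^{A_m}>=\{e^{nA_m}:n\in\mathbb{Z}\}$ is identified with its integer points $n\mapsto(n\phi_1,\ldots,n\phi_r)\bmod 2\pi$.

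Next I would record the elementary reduction that ``$<e^{A_m}>$ is dense in $H$'' is equivalent to ``$<e^{A_m}>$ is dense in the torus $T:=\overline{H}$''. Indeed $<e^{A_m}>\subseteq H\subseteq T$, so $\overline{<e^{A_m}>}\subseteq T$; if the cyclic group is dense in $T$ then its closure equals $T\supseteq H$, giving density in $H$, and conversely density in $H$ forces the closure to contain $\overline{H}=T$. This lets me work entirely inside the compact torus $T$ and with its closed subgroups.

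For \emph{necessity} the argument is short. If some $\phi_j$ is a rational multiple of $\pi$, then the $j$-th rotation block of $e^{nA_m}$ runs through a finite set $\{R(n\phi_j):n\in\mathbb{Z}\}$, so the projection of $<e^{A_m}>$ to the $j$-th circle is finite, whereas the projection of $H$, and hence of $T$, is the whole circle; a finite set cannot be dense, so density fails. A zero eigenvalue is covered by the same argument, since $\phi=0$ is itself a rational multiple of $\pi$.

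The \emph{sufficiency} direction is the crux and the main obstacle. Projecting to a single eigen-plane, the one-dimensional Kronecker theorem already gives that $\{n\phi_j\bmod 2\pi:n\in\mathbb{Z}\}$ is dense in the $j$-th circle exactly when $\phi_j$ is an irrational multiple of $\pi$; this settles the case $r=1$, which is in fact the only case needed for the beamsplitter generators $X_{kl}=\theta E_{kl}$ of (\ref{Xkl}). For $r>1$ the real difficulty is that joint density on $T$ does not follow from density in each circle factor separately. By Pontryagin duality, $<e^{A_m}>$ is dense in the compact abelian group $T$ iff every continuous character $\chi\colon T\to S^{1}$ with $\chi(e^{A_m})=1$ is trivial, and translating this through the identification above recasts density as a rational-independence condition among $\phi_1,\ldots,\phi_r$ and $2\pi$. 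I would therefore prove sufficiency by first describing $T=\overline{H}$ as the intersection of the kernels of the characters that vanish on the line $H$, and then verifying that the hypothesis forces every character killing $e^{A_m}$ to vanish on all of $T$. This last step is the delicate point that must be isolated and handled with care, since the condition that each $\phi_j/\pi$ be irrational is exactly where one has to rule out the integer orbit collapsing onto a proper subtorus of $T$.
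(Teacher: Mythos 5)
Your reduction to the closure of the one-parameter subgroup inside a torus is the right framework, and your necessity argument and the $r=1$ case are fine, but the proposal has a genuine gap exactly where you flag it: the sufficiency direction for $r\geq 2$ is only a plan, and the ``delicate point'' you defer cannot be carried out, because the statement is false once $A_m$ has two independent rotation blocks. Take $m=4$ and $\phi_{1}=\alpha$, $\phi_{2}=\alpha+\pi$ with $\alpha/\pi$ irrational. All eigenphases of $e^{A_m}$ are irrational multiples of $\pi$, and $\overline{H}$ is the full $2$-torus (no nonzero integer relation $n_{1}\phi_{1}+n_{2}\phi_{2}=0$ holds), yet the character $\chi(x,y)=e^{2i(y-x)}$ satisfies $\chi(e^{A_{m}})=e^{2i\pi}=1$ while $\chi$ is nonconstant on $H$ (it equals $e^{2\pi i t}$ there). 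Hence $\overline{\langle e^{A_{m}}\rangle}\subseteq\ker\chi\not\supseteq H$, and the cyclic group is not dense in the one-parameter group. The correct criterion, which your Pontryagin-duality setup delivers directly, is: for every integer vector $(n_{1},\dots,n_{r})$, if $\sum_{j}n_{j}\phi_{j}\in 2\pi\mathbb{Z}$ then $\sum_{j}n_{j}\phi_{j}=0$; for $r\geq2$ this is strictly stronger than each $\phi_{j}/\pi$ being irrational. (For what it is worth, the paper's own ``proof'' is the single sentence ``a direct generalization of the $2$-mode case'' and overlooks the same point; since the Fact is only ever applied to generators with a single nontrivial rotation plane, where $r=1$ and your Kronecker argument suffices, the downstream results survive under a corrected hypothesis.)

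A second, smaller error: your claim that a zero eigenvalue ``is covered by the same argument'' inverts the truth. In your own normal form a zero eigenvalue contributes no circle factor, the corresponding part of $H$ is trivial, and the (trivial) projection of $\langle e^{A_{m}}\rangle$ is dense in it; zero eigenvalues do not obstruct density. This matters because every element of $SO(m)$ with $m$ odd has eigenvalue $1$, so on your reading the Fact would be vacuously unsatisfiable for $m=3$ --- contradicting the paper's use of $O_{1,2}=e^{\theta E_{1,2}}\in SO(3)$ with $\theta/\pi$ irrational, whose powers are dense in $\{e^{t\theta E_{1,2}}:t\in\mathbb{R}\}$. The hypothesis should be imposed only on the eigenvalues different from $1$, together with the joint independence condition above.
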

\begin{proof}
A direct generalization of $2$-mode case. \end{proof}
\begin{thm}
\label{m-n}(Non-trivial action $\mathrm{Sym}(m)$) Assume that for
an $m$-mode beamsplitter $<S(O_{m})>$ is dense in $SO(m)$. Then
$ $$O_{m}$ is universal on $k$ modes for any $k\geq m$. $ $\end{thm}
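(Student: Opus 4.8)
The plan is to lift the hypothesis from $SO(m)$ up to $SO(k)$ by exactly the mechanism used in Fact \ref{general-2}, which is the $m=2$ prototype of this statement: show that on every block of $m$ modes the whole group $SO(m)$ already lies in the closure of the group we generate, extract from this all one-parameter plane-rotation subgroups, and then invoke Theorem \ref{thm:main}. Fix $k\geq m$ and write $S_k(O_m)$ for the set $S(O_m)$ read on $k$ modes, i.e. $S_k(O_m)=\{P_\sigma^{T}\,\widehat{O}_m\,P_\sigma:\sigma\in\mathrm{Sym}(k)\}$, where $\widehat{O}_m\in SO(k)$ is $O_m$ acting on one chosen $m$-subset of the $k$ modes and trivially elsewhere. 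By the definition of universality on $k$ modes it suffices to prove $\overline{<S_k(O_m)>}=SO(k)$.

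First I would show that the density of $<S(O_m)>$ in $SO(m)$ is inherited on each block of $m$ modes. Fix an $m$-element subset $T\subseteq\{1,\ldots,k\}$ and let $H_T\cong SO(m)$ be the subgroup of $SO(k)$ acting on the modes in $T$ and as the identity on the rest. Among the generators of $S_k(O_m)$, those supported on $T$ are precisely the conjugates of $\widehat{O}_m$ by the permutations $\mathrm{Sym}(T)\cong\mathrm{Sym}(m)$ of $T$; under the identification $H_T\cong SO(m)$ these are a copy of $S(O_m)$. By hypothesis the subgroup they generate is dense in $H_T$, and since this subgroup sits inside $<S_k(O_m)>$ we obtain $H_T\subseteq\overline{<S_k(O_m)>}$. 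In particular $\overline{<S_k(O_m)>}$ contains the full one-parameter subgroups $\{e^{tE_{i,j}}:t\in\mathbb{R}\}$ for every pair $i,j\in T$.

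Because $m\geq 3\geq 2$, every pair $\{i,j\}\subseteq\{1,\ldots,k\}$ lies in some $m$-subset, so the previous step places all one-parameter subgroups $\{e^{tE_{i,j}}:t\in\mathbb{R}\}$, $1\leq i<j\leq k$, inside $\overline{<S_k(O_m)>}$. The matrices $\{E_{i,j}\}_{i<j}$ are the standard basis of $\mathfrak{so}(k)$ and hence generate it as a Lie algebra (cf. Fact \ref{algebra-embeding}). Theorem \ref{thm:main} then says that these one-parameter subgroups generate a dense subgroup of $SO(k)$; as $\overline{<S_k(O_m)>}$ is a closed subgroup containing all of them, it contains the subgroup they generate and hence, being closed, its closure $SO(k)$, so that $\overline{<S_k(O_m)>}=SO(k)$. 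This is exactly universality of $O_m$ on $k$ modes.

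The one point that has to be handled with care — the same subtlety flagged after Fact \ref{general-2} — is that the plane rotations $e^{tE_{i,j}}$ are produced only in the \emph{closure} of $<S_k(O_m)>$ and need not be genuine elements of the group, so they cannot be fed into Theorem \ref{thm:main} as if they generated an honest subgroup of $SO(k)$. The clean way around this, which I would adopt throughout, is to run the whole argument inside the closed subgroup $\overline{<S_k(O_m)>}$: the closure of a subgroup of the compact group $SO(k)$ is again a closed subgroup, Theorem \ref{thm:main} guarantees that the subgroups $\{e^{tE_{i,j}}\}$ densely generate $SO(k)$, and a closed subgroup containing a dense set must be all of $SO(k)$. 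In contrast with the $2$-mode case, no separate irrational-angle analysis (the content of Statement 2) is needed here, since the hypothesis already supplies the entire $SO(m)$, and thus entire one-parameter subgroups, on each $m$-subset of modes.
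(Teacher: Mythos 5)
Your proposal is correct and follows essentially the same route as the paper: use density of $<S(O_m)>$ on each $m$-mode block to gain access (after closure) to generators of $\mathfrak{so}(k)$ via the natural embeddings, then invoke Fact \ref{algebra-embeding} and Theorem \ref{thm:main}. Your only real deviation is a cosmetic improvement — you work directly with the full one-parameter subgroups inside the closed subgroup $\overline{<S_k(O_m)>}$, which handles the ``available only after closure'' subtlety more explicitly than the paper's appeal to Fact \ref{spectra}.
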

\begin{proof}
As the set generated by $<S(O_{m})>$ is dense in $ $ $SO(m)$ we
have access to any matrix from $SO(m)$ (at least after closure).
Having arbitrary $SO(m)$ matrix we choose the set $B=\{e^{X_{i}}:\, X_{i}\in\mathfrak{so}(m),\: i\in\{1,...,k\}\}$,
where $X=\{X_{1},\ldots,X_{k}\}$ generate $\mathfrak{so}(m)$ and
spectra of $X_{i}$'s are as in Fact \ref{spectra}. By Fact \ref{algebra-embeding}
$ $the ${m+k \choose k}$ natural embedding of $ $$X$ into $\mathfrak{so}(m+k)$
generate $\mathfrak{so}(m+k)$. As the assumptions of Fact \ref{spectra}
are satisfied we can use Theorem \ref{thm:main} and get the desired
result.\end{proof}
\begin{thm}
(Trivial action of $\mathrm{Sym}(m)$) Assume that for an $m$-mode
beamsplitter $S(O_{m})=\left\{ O_{m},\, O_{m}^{-1}\right\} $ and
the ${m+k \choose k}$ natural embeddings of $S(O_{m})$ into $SO(m+k)$
generate dense set in $SO(m+k)$ for some $k$. Then $ $$O_{m}$
is universal on $l$ modes for any $l\geq m+k$. $ $\end{thm}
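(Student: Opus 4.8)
The plan is to run the same argument as in Theorem~\ref{m-n}, but with the roles of $m$ and $m+k$ interchanged: here the hypothesis supplies density in $SO(m+k)$ rather than in $SO(m)$, and the task is to climb from $m+k$ modes up to an arbitrary $l\ge m+k$. Fix such an $l$ and let $S$ be the collection of all natural embeddings of $S(O_m)=\{O_m,\,O_m^{-1}\}$ into the $l$ available modes; the goal is to show $\overline{<S>}=SO(l)$.

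First I would localise to blocks. For every $(m+k)$-element subset $T\subset\{1,\dots,l\}$, the elements of $S$ that act only within the modes indexed by $T$ form exactly an isomorphic copy of the $\binom{m+k}{k}$ embeddings of $S(O_m)$ into $SO(m+k)$ appearing in the hypothesis. By assumption these generate a dense subgroup of the corresponding block copy of $SO(m+k)$ acting on the modes indexed by $T$, which I denote $SO(m+k)_T\subset SO(l)$. Since $\overline{<S>}$ is closed, it therefore contains the whole block $SO(m+k)_T$, and this holds simultaneously for every $(m+k)$-subset $T$.

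Next I would pass to the Lie algebra and invoke Theorem~\ref{thm:main}. Because $\overline{<S>}$ contains each block $SO(m+k)_T$ in full, it contains, for every pair $\{i,j\}\subset T$, the entire one-parameter subgroup $\{e^{tE_{ij}}:t\in\mathbb{R}\}$, where $E_{ij}$ is the standard generator of $\mathfrak{so}(l)$. As $l\ge m+k\ge 2$, every pair $\{i,j\}\subset\{1,\dots,l\}$ sits inside some such $T$, so $\overline{<S>}$ contains all the one-parameter subgroups attached to the full standard basis $\{E_{ij}\}_{i<j}$ of $\mathfrak{so}(l)$; this is precisely the overcompleteness observation behind Fact~\ref{algebra-embeding}. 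Since $\{E_{ij}\}_{i<j}$ generates $\mathfrak{so}(l)$, Theorem~\ref{thm:main} shows that these one-parameter subgroups already generate a dense subgroup of $SO(l)$. All of them lie inside the closed group $\overline{<S>}$, whence $\overline{<S>}=SO(l)$; that is, $O_m$ is universal on $l$ modes.

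I expect the only delicate point to be the familiar ``after closure'' subtlety already flagged in the remark following Fact~\ref{general-2}: the block subgroups $SO(m+k)_T$, and with them the standard one-parameter subgroups, are only guaranteed to lie in $\overline{<S>}$ rather than in $<S>$ itself. The argument stays clean precisely because it operates throughout inside the closed group $\overline{<S>}$, so one never needs to exhibit the intermediate elements explicitly. Alternatively, one could run the Fact~\ref{spectra}-based variant of Theorem~\ref{m-n} verbatim, choosing generators of $\mathfrak{so}(m+k)$ whose associated group elements have spectra that are irrational multiples of $\pi$ and then embedding them via Fact~\ref{algebra-embeding}; the block argument above simply makes that extra step unnecessary.
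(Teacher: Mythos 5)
Your proposal is correct and is essentially the paper's argument: the paper's proof consists of the single line ``analogous to the proof of Theorem~\ref{m-n}'', and your write-up is exactly that analogy carried out, with density now supplied at the $SO(m+k)$ level and the climb to $SO(l)$ done via embedded standard generators and Theorem~\ref{thm:main} (cf.\ Fact~\ref{algebra-embeding}). Your block-localization phrasing, working with full one-parameter subgroups inside the closure rather than selecting elements with irrational spectra as in Fact~\ref{spectra}, is only a cosmetic variation.
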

\begin{proof}
The analogous to the proof of Theorem \ref{m-n}.\end{proof}
\begin{rem}
(1) All the statements of this section remain true if we substitute
$SO(k)$, $\mathfrak{so}(k)$ with $SU(k)$, $\mathfrak{su}(k)$.
(2) Theorem \ref{m-n} can be used to prove Theorem \ref{2-modeuni}.
We however found the calculation using only the $\mbox{BCH}$ formula
elegant and worth presenting. 
\end{rem}

\section{$3$-mode beamsplitter\label{sec:3-mode-beamsplitter}}

In this section we present the full discussion for $S(O_{3})$. 

The Lie Algebra $\mathfrak{so}(3)$ is generated by $\left\{ E_{12},\, E_{13},\, E_{23}\right\} $.
Let $O_{3}=e^{A_{3}}$ where 
\begin{gather*}
A_{3}=\theta\left(\begin{array}{ccc}
0 & a_{12} & a_{13}\\
-a_{12} & 0 & a_{23}\\
-a_{13} & -a_{23} & 0
\end{array}\right)=\theta\sum_{ij}a_{ij}E_{ij},
\end{gather*}
$a_{ij}\in\mathbb{R}$ and $\sum|a_{ij}|^{2}=1$. The permutation
group consists of six elements 
\begin{gather*}
\mathrm{Sym}(3)=\{e,\,(1,2),\,(1,3),\,(2,3),\,(1,2,3),\,(1,3,2)\}.
\end{gather*}
We want to consider set $S(O_{3})=\{P_{\sigma}^{t}O_{3}P_{\sigma}:\,\sigma\in\mathrm{Sym}(3)\}$.
Making use of $ $$P_{\sigma}^{t}O_{3}P_{\sigma}=e^{P_{\sigma}^{t}A_{3}P_{\sigma}}$,
we have the following $S(A_{3})$ set of Lie algebra elements
\begin{gather*}
A_{3}=a_{12}E_{12}+a_{13}E_{13}+a_{23}E_{23},\, P_{(1,2)}^{T}A_{3}P_{(1,2)}=-a_{12}E_{12}+a_{23}E_{13}+a_{13}E_{23},\\
P_{(1,3)}^{T}A_{3}P_{(1,3)}=-a_{23}E_{12}-a_{13}E_{13}-a_{12}E_{23},\, P_{(2,3)}^{T}A_{3}P_{(2,3)}=a_{13}E_{12}+a_{12}E_{13}-a_{23}E_{23},\\
P_{(1,2,3)}^{T}A_{3}P_{(1,2,3)}=-a_{13}E_{12}-a_{23}E_{13}+a_{12}E_{23},\, P_{(1,3,2)}^{T}A_{3}P_{(1,3,2)}=a_{23}E_{12}-a_{12}E_{13}-a_{13}E_{23}.
\end{gather*}
Note that if $ $$S(O_{3})=\{O_{3},\, O_{3}^{-1}\}$ then $P_{\sigma}^{t}A_{3}P_{\sigma}=\pm A_{3}$
for any $\sigma\in\mathrm{Sym}(3)$. It happens when $a_{12}=-a_{13}=a_{23}$,
i.e. when
\begin{gather*}
A_{3}=\frac{\theta}{\sqrt{3}}\left(E_{12}-E_{13}+E_{23}\right).
\end{gather*}
In the following we divide our discussion of $3$-mode beamsplitters
into two cases according to the behavior under mode permutations.

\subsection{Nontrivial action of $\mathrm{Sym}(3)$}

In this section we consider those beamsplitters whose $A_{3}\neq\frac{\theta}{\sqrt{3}}\left(E_{12}-E_{13}+E_{23}\right)$.
Our goal is to first show that in such a case $S(A_{3})$ generate
the Lie algebra $\mathfrak{so}(3)$. Let us start with the following
simple lemma.
\begin{lem}
\label{le:two -generate}Let $X,Y\in\mathfrak{so}(3)$ be linearly
independent. Then $X$, $Y$ generate $\mathfrak{so}(3)$. \end{lem}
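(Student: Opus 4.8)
The plan is to exploit the special structure of $\mathfrak{so}(3)$, namely that it is three-dimensional and that the commutator bracket behaves like the cross product on $\mathbb{R}^3$ under the standard identification $E_{23}\mapsto e_1$, $E_{13}\mapsto -e_2$, $E_{12}\mapsto e_3$ (or any fixed isomorphism sending the standard basis $\{E_{12},E_{13},E_{23}\}$ to an orthonormal basis of $\mathbb{R}^3$). Under this identification the statement that $X,Y$ generate $\mathfrak{so}(3)$ as a Lie algebra becomes the statement that $X$, $Y$, and $[X,Y]$ span the three-dimensional space, i.e. they are linearly independent as vectors in $\mathbb{R}^3$.

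First I would recall that generating $\mathfrak{so}(3)$ as a Lie algebra means the smallest subspace containing $X$ and $Y$ and closed under brackets is all of $\mathfrak{so}(3)$. Since $\dim\mathfrak{so}(3)=3$ and $X,Y$ are linearly independent by hypothesis, it suffices to produce a third element, independent from $X$ and $Y$, inside the generated algebra; the obvious candidate is $[X,Y]$. So the crux reduces to showing $[X,Y]\notin\operatorname{span}\{X,Y\}$. Translating to the cross-product picture, if $X\leftrightarrow u$ and $Y\leftrightarrow v$ then $[X,Y]\leftrightarrow u\times v$, and the claim is precisely that $u\times v$ is not in the plane spanned by $u$ and $v$. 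But this is immediate: $u\times v$ is by definition orthogonal to both $u$ and $v$, and since $u,v$ are linearly independent they span a genuine plane whose orthogonal complement is a nonzero line containing $u\times v$; moreover $u\times v\neq 0$ exactly because $u$ and $v$ are linearly independent. Hence $\{X,Y,[X,Y]\}$ is a linearly independent set of three vectors in a three-dimensional space, so it is a basis, and the generated subalgebra is all of $\mathfrak{so}(3)$.

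I do not expect any real obstacle here, since the entire content is the elementary fact that the cross product of two independent vectors is nonzero and orthogonal to their span. The only point requiring a word of care is verifying that the bracket on $\mathfrak{so}(3)$ really corresponds to the cross product under the chosen identification, which follows directly from the commutation relations $[E_{12},E_{13}]=-E_{23}$, $[E_{12},E_{23}]=E_{13}$, $[E_{13},E_{23}]=-E_{12}$ already recorded in the paper (these match the cross-product relations up to the fixed signs in the identification). If one prefers to avoid even naming the cross product, the same conclusion follows by a direct computation: write $X=x_1E_{12}+x_2E_{13}+x_3E_{23}$ and $Y=y_1E_{12}+y_2E_{13}+y_3E_{23}$, compute $[X,Y]$ using the commutation relations, and observe that the coefficient vector of $[X,Y]$ is (up to sign) the cross product of the coefficient vectors of $X$ and $Y$, which is nonzero and not a linear combination of them precisely when $X,Y$ are linearly independent.
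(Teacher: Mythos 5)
Your proof is correct and is essentially the same as the paper's: the paper writes $X,Y,[X,Y]$ in the basis $\{X_{12},X_{13},X_{23}\}$ and computes the determinant of the resulting $3\times 3$ coefficient matrix to be $-(M_{31}^2+M_{32}^2+M_{33}^2)$, which is exactly minus the squared norm of the cross product of the coefficient vectors of $X$ and $Y$, nonzero by linear independence. Your cross-product phrasing is just a cleaner packaging of the identical computation.
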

\begin{proof}
Let $\left\{ X_{12},\, X_{13},\, X_{23}\right\} $ be a basis of $\mathfrak{so}(3)$,
where $X_{ij}=\kb ij-\kb ji$. We have:
\begin{gather*}
X=\alpha_{1}X_{12}+\beta_{1}X_{13}+\gamma_{1}X_{23},\\
Y=\alpha_{2}X_{12}+\beta_{2}X_{13}+\gamma_{2}X_{23},\\
\left[X,Y\right]=\left(\gamma_{1}\beta_{2}-\beta_{1}\gamma_{2}\right)X_{12}+\left(\alpha_{1}\gamma_{2}-\gamma_{1}\alpha_{2}\right)X_{13}+\left(\beta_{1}\alpha_{2}-\alpha_{1}\beta_{2}\right)=\\
=-M_{31}X_{12}+M_{32}X_{13}-M_{33}X_{23}
\end{gather*}
We need to show that $X$, $Y$ and $\left[X,Y\right]$ form a basis
of $\mathfrak{so}(3)$. To this end we calculate
\begin{gather}
\det\left(\begin{array}{ccc}
\alpha_{1} & \beta_{1} & \gamma_{1}\\
\alpha_{2} & \beta_{2} & \gamma_{2}\\
-M_{31} & M_{32} & M_{33}
\end{array}\right)=-\left(M_{31}^{2}+M_{32}^{2}+M_{33}^{2}\right)\neq0,\label{eq:10}
\end{gather}
since $X$ and $Y$ are linearly independent (at least one of $M_{ij}\neq0$
).
\end{proof}
Next we proceed with the proof that $S(A_{3})$ generates $\mathfrak{so}(3)$.
There are a few cases to consider.
\begin{enumerate}
\item Exactly one of $a_{ij}$ is non-zero. Without loss of generality assume
that $a_{12}\ne0$ and $a_{13}=0=a_{23}$. In this case elements 
\begin{gather}
A_{3}=a_{12}E_{12},\, P_{(1,3)}^{T}A_{3}P_{(1,3)}=-a_{12}E_{23},\, P_{(2,3)}^{T}A_{3}P_{(2,3)}=a_{12}E_{13},\label{eq:case1}
\end{gather}
form the basis of $\mathfrak{so}(3)$ and therefore matrices $\{P_{\sigma}^{T}A_{3}P_{\sigma}:\sigma\in S_{3}\}$
generate $\mathfrak{so}(3)$. 
\item There are exactly two non-zero $a_{ij}$'s. Assume for example that
$a_{12}=0$. Then 
\begin{gather*}
A_{3}=a_{13}E_{13}+a_{23}E_{23},\, P_{(1,3)}^{T}A_{3}P_{(1,3)}=-a_{23}E_{12}-a_{13}E_{13},
\end{gather*}
 are clearly linearly independent and by Lemma \ref{le:two -generate}
they generate $\mathfrak{so}(3)$.
\item All $a_{ij}$'s are non-vanishing. We need to consider two cases:
(a) $a_{12}\neq-a_{13}\neq a_{23}$ or $a_{12}=-a_{13}\neq a_{23}$
or $a_{12}=a_{23}\neq-a_{13}$ then $A_{3}$ and $ $$P_{(1,2)}^{T}A_{3}P_{(1,2)}$
are linearly independent and by Lemma \ref{le:two -generate} they
generate $\mathfrak{so}(3)$, (b) $a_{12}\neq-a_{13}=a_{23}$ then
$A_{3}$ and $P_{(2,3)}^{T}A_{3}P_{(2,3)}$ are linearly independent
and by Lemma \ref{le:two -generate} they generate $\mathfrak{so}(3)$.
\end{enumerate}
Therefore we have
\begin{fact}
\label{so3-gen}Let $O_{3}=e^{A_{3}}$ be a beamsplitter which admits
nontrivial action of $\mathrm{Sym}(3)$. Then $S(A_{3})$ generates
$\mathfrak{so}(3)$. 
\end{fact}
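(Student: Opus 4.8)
The plan is to prove Fact \ref{so3-gen} by a complete case analysis on the number of nonvanishing coefficients $a_{12},a_{13},a_{23}$, exploiting the explicit expressions for the permuted Lie algebra elements $P_\sigma^T A_3 P_\sigma$ already computed above, and reducing each case to the elementary Lemma \ref{le:two -generate}. The key observation is that Lemma \ref{le:two -generate} is extremely cheap to apply: any two linearly independent elements of $\mathfrak{so}(3)$ already generate the whole algebra, so the entire burden of the proof reduces to exhibiting, in each case, two of the six elements of $S(A_3)$ that are linearly independent (or, when that fails, three that span).

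First I would dispose of the degenerate cases by counting nonzero $a_{ij}$. If exactly one $a_{ij}$ is nonzero, say $a_{12}\neq 0$, then $A_3$, $P_{(1,3)}^T A_3 P_{(1,3)}$, and $P_{(2,3)}^T A_3 P_{(2,3)}$ are scalar multiples of $E_{12}$, $E_{23}$, $E_{13}$ respectively, hence directly form a basis. If exactly two are nonzero, I pick a permutation conjugate that mixes the coefficients differently—for $a_{12}=0$ the element $P_{(1,3)}^T A_3 P_{(1,3)} = -a_{23}E_{12}-a_{13}E_{13}$ is visibly independent of $A_3 = a_{13}E_{13}+a_{23}E_{23}$—and invoke Lemma \ref{le:two -generate}.

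The substantive case is when all three $a_{ij}$ are nonzero, and here the main obstacle is that $A_3$ and a single chosen conjugate can fail to be linearly independent precisely along certain algebraic loci in the coefficients. The strategy is to test $A_3$ against $P_{(1,2)}^T A_3 P_{(1,2)} = -a_{12}E_{12}+a_{23}E_{13}+a_{13}E_{23}$: these are proportional exactly when $a_{12}=-a_{13}=a_{23}$, i.e. on the single excluded direction. Since we have assumed the nontrivial-action hypothesis $A_3\neq\frac{\theta}{\sqrt{3}}(E_{12}-E_{13}+E_{23})$, this exact coincidence is forbidden, but I must still check the partial-coincidence subcases where proportionality could sneak in. The clean way is to verify that for each offending subcase a \emph{different} conjugate breaks the degeneracy: when $a_{12}\neq -a_{13}=a_{23}$ the pair $A_3$ and $P_{(2,3)}^T A_3 P_{(2,3)} = a_{13}E_{12}+a_{12}E_{13}-a_{23}E_{23}$ is independent. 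The real care is bookkeeping—enumerating which equalities among $\{a_{12},-a_{13},a_{23}\}$ can hold simultaneously and confirming that each admissible configuration leaves at least one conjugate non-proportional to $A_3$, while the one configuration with all three equal is exactly the trivial-action locus we have excluded.

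Having handled every case, the conclusion follows immediately: in each configuration $S(A_3)$ contains either a basis outright or two linearly independent elements that generate $\mathfrak{so}(3)$ by Lemma \ref{le:two -generate}, which is precisely the assertion of Fact \ref{so3-gen}. I expect the write-up to be short, with the only genuine delicacy being the exhaustive-but-routine subcase split in the all-nonzero situation; no BCH computation or control-theoretic input is needed at this stage, since we are working purely at the level of the Lie algebra and linear independence.
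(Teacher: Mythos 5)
Your proposal is correct and follows essentially the same route as the paper: the same case split on the number of nonvanishing $a_{ij}$, the same choice of permutation conjugates in each case, and the same reduction to the lemma that any two linearly independent elements of $\mathfrak{so}(3)$ generate it. One small wording slip: $A_3$ and $P_{(1,2)}^{T}A_3 P_{(1,2)}$ are proportional whenever $a_{23}=-a_{13}$ (forcing $\lambda=-1$), not only on the fully symmetric locus $a_{12}=-a_{13}=a_{23}$ — but you then identify and repair exactly that residual subcase $a_{12}\neq -a_{13}=a_{23}$ via $P_{(2,3)}^{T}A_3 P_{(2,3)}$, just as the paper does, so the argument is complete.
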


\subsubsection{The case when $\theta$ is an irrational multiple of $\pi$}

Combining Fact \ref{so3-gen} with theorems \ref{thm:main} and \ref{m-n}
we obtain:
\begin{thm}
Let $O_{3}=e^{A_{3}}$ be a $3$-mode beamsplitter which admits nontrivial
action of $\mathrm{Sym}(3)$ and whose spectrum is given by $\{e^{i\theta},\, e^{-i\theta},1\}$
where $\theta$ is not a rational multiple of $\pi$. Then $S(O_{3})$
generates $SO(3)$ and $O_{3}$ is universal on $k\geq3$ modes.\end{thm}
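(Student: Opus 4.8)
The plan is to assemble the theorem from three ingredients that are already in place, so that the proof reduces to a short citation-and-combination argument. The statement concerns a $3$-mode beamsplitter $O_3 = e^{A_3}$ whose action of $\mathrm{Sym}(3)$ is nontrivial (i.e. $A_3 \neq \frac{\theta}{\sqrt 3}(E_{12}-E_{13}+E_{23})$) and whose nonzero spectral angle $\theta$ is an irrational multiple of $\pi$. The goal is to show that $S(O_3)$ generates $SO(3)$ and hence that $O_3$ is universal on $k \geq 3$ modes.

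First I would invoke Fact \ref{so3-gen}: since $O_3$ admits nontrivial action of $\mathrm{Sym}(3)$, the set of Lie-algebra elements $S(A_3) = \{P_\sigma^T A_3 P_\sigma : \sigma \in \mathrm{Sym}(3)\}$ generates $\mathfrak{so}(3)$. This handles the algebraic half of the hypothesis of Theorem \ref{thm:main}. Second, I would check the spectral condition needed to apply Theorem \ref{thm:main} directly. Each conjugate $P_\sigma^T O_3 P_\sigma$ is similar to $O_3$, so every element of $S(O_3)$ has the same spectrum $\{e^{i\theta}, e^{-i\theta}, 1\}$ as $O_3$. Because $\theta$ is by assumption an irrational multiple of $\pi$, Fact \ref{spectra} guarantees that for each generator the cyclic group $<P_\sigma^T O_3 P_\sigma>$ is dense in the one-parameter subgroup $\{e^{t\, P_\sigma^T A_3 P_\sigma} : t \in \mathbb{R}\}$. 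Thus we are squarely in Case 1 of the method of Section \ref{sec:method}: the one-parameter closures are recovered from the discrete generators, and generation of the Lie algebra is equivalent to generation of the group.

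With both pieces in hand, Theorem \ref{thm:main} applies: the one-parameter subgroups generated by $S(A_3)$ generate $SO(3)$ if and only if $S(A_3)$ generates $\mathfrak{so}(3)$, and the latter holds by Fact \ref{so3-gen}. Combined with the density observation above, this yields that $<S(O_3)>$ is dense in $SO(3)$, i.e. $S(O_3)$ is universal on $3$ modes. Finally, to pass from $3$ modes to $k \geq 3$ modes, I would apply Theorem \ref{m-n}: having shown $<S(O_3)>$ dense in $SO(3)$, that theorem immediately gives universality of $O_3$ on every $k \geq 3$.

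I do not anticipate a genuine obstacle here, since all the technical work has been discharged in the earlier results; the only point requiring a word of care is the spectral argument, namely confirming that conjugation by permutation matrices preserves the spectrum (so that \emph{every} generator, not merely $O_3$ itself, satisfies the irrationality condition of Fact \ref{spectra}). Once that is noted, the proof is simply the chain Fact \ref{so3-gen} $\Rightarrow$ Theorem \ref{thm:main} $\Rightarrow$ Theorem \ref{m-n}.
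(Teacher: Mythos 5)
Your proposal is correct and follows essentially the same route as the paper, which proves this theorem in one line by combining Fact~\ref{so3-gen}, Theorem~\ref{thm:main}, and Theorem~\ref{m-n}. The only addition you make is to spell out the (correct) observation that conjugation by permutation matrices preserves the spectrum, so every element of $S(O_{3})$ satisfies the irrationality condition of Fact~\ref{spectra}; this is a useful clarification but not a departure from the paper's argument.
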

\begin{rem}
Similar reasoning, however with more cases to consider, can be carried
out for $m$-mode beamsplitters with $m>3$. We will discuss generation
of $\mathfrak{so}(m)$ by $S(A_{m})$ for $m>3$ in a subsequent publication
including analogous calculations for $\mathfrak{su}(m)$.
\end{rem}

\subsubsection{The case when $\theta$ is a rational multiple of $\pi$}

We start with a review of some important and relatively new facts
concerning compositions of rotations in $\mathbb{R}^{3}$. 

Let $O_{1}$ and $O_{2}$ be two finite order rotations about axes
separated by an angle $\alpha$. In the series of papers \cite{CRS00,CRS99,RS98,RS99}
Conway Radin and Sadun studied the group generated by $O_{1}$ and
$O_{2}$ for large class of $\alpha$'s. This group is characterized
by relations that involve generators $O_{1}$ and $O_{2}$. In order
to discuss these relations we recall some basic definitions from algebraic
number theory.
\begin{defn}
A complex number $z\in\mathbb{C}$ is algebraic iff it is a root of
some nonzero polynomial with rational coefficients. If $z$ is not
algebraic then it is called transcendental. 
\end{defn}
Note that the set of algebraic numbers is countable. This can be easily
inferred from the fact that there are countably many coefficients
of polynomials in $\mathbb{Q}[x]$ and each polynomial contributes
finitely many algebraic numbers. Therefore the set of transcendental
numbers is uncountable. 
\begin{thm}
\cite{CRS00} Assume that for $\alpha\in[0,2\pi[$ there are nontrivial
relations between $O_{1}$ and $O_{2}$. Then $e^{i2\alpha}$ is algebraic. 
\end{thm}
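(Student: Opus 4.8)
The plan is to turn the existence of a relation into a polynomial condition on $e^{2i\alpha}$, exploiting that every matrix entry that occurs is an algebraic expression in $\cos 2\alpha$ and $\sin 2\alpha$. First I would fix coordinates so that $O_1$ is the rotation about the $z$-axis by $2\pi k/p$, where $p$ is the order of $O_1$, and write the second rotation as a conjugate $O_2 = R_\alpha\, D\, R_\alpha^{-1}$, where $D$ is the rotation about the $z$-axis by $2\pi l/q$ and $R_\alpha$ is the rotation by $\alpha$ that tilts the $z$-axis onto the axis of $O_2$. The entries of $O_1$ and of $D$ lie in the number field $K=\mathbb{Q}(\zeta_p,\zeta_q)$ generated by the relevant roots of unity, hence are algebraic. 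The gain from the conjugation is that the entries of $O_2=R_\alpha D R_\alpha^{-1}$ are \emph{quadratic} in $(\cos\alpha,\sin\alpha)$, and therefore $K$-linear in $1,\cos 2\alpha,\sin 2\alpha$. Writing $w=e^{2i\alpha}$, so that $\cos 2\alpha=(w+w^{-1})/2$ and $\sin 2\alpha=(w-w^{-1})/2i$, each entry of $O_2$ (and trivially of $O_1$) becomes a Laurent polynomial in $w$ with coefficients in $K$.

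Second, consider any nonempty reduced word $W=O_1^{a_1}O_2^{b_1}\cdots$ in the free product $\mathbb{Z}/p*\mathbb{Z}/q$ that realizes the hypothesized nontrivial relation, so that $W=I$ at the given $\alpha$. Since matrix multiplication only adds and multiplies entries, every entry of $W-I$ is again a Laurent polynomial in $w$ over $K$. Evaluating at $\bar w=e^{2i\alpha}$, the relation says exactly that $\bar w$ is a common root of all these Laurent polynomials. If at least one of them is a \emph{nonzero} element of $K[w,w^{-1}]$, then after clearing a denominator $w^{-N}$ we obtain a nonzero polynomial over $K$ vanishing at $\bar w$; hence $\bar w=e^{2i\alpha}$ is algebraic over $K$, and therefore algebraic over $\mathbb{Q}$, which is precisely the desired conclusion.

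The entire argument thus hinges on ruling out the degenerate possibility that every entry of $W-I$ is the zero Laurent polynomial, i.e. that the fixed reduced word $W$ equals the identity \emph{identically in $\alpha$}. This is the main obstacle, and the only place where the finite orders and the free-product structure genuinely enter. I would dispatch it by exhibiting a single separation angle $\alpha_0$ at which the representation $\rho_{\alpha_0}\colon \mathbb{Z}/p*\mathbb{Z}/q\to SO(3)$ is faithful: for such an $\alpha_0$ no nonempty reduced word maps to $I$, so in particular $W(\alpha_0)\neq I$ and some entry of $W-I$ is not identically zero. The existence of such a faithful pair of rotations about two axes is classical; the concrete way to see it is to let $\alpha$ acquire a large positive imaginary part, so that $w\to 0$ and the entries of $O_2$ blow up like $w^{-1}$ while $O_1$ stays bounded and of finite order. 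A ping-pong estimate on the leading $w^{-1}$ behaviour then shows that a reduced word with $n$ syllables of $O_2$-type grows like $w^{-n}$ and so cannot equal the identity, forcing the entrywise Laurent polynomials of $W-I$ to be nonzero. This degeneration estimate is the delicate part; granting it, the conclusion that $e^{2i\alpha}$ is algebraic follows from the elementary fact that a nonzero polynomial over a number field has only algebraic roots.
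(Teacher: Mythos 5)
This theorem is stated in the paper with a bare citation to [CRS00]; no proof is given there, so there is no internal argument to compare yours against, and what you have written is essentially a reconstruction of the Conway--Radin--Sadun strategy. Its architecture is sound, but two points need attention. First, a small correction: not every entry of $O_2=R_\alpha D R_\alpha^{-1}$ is quadratic in $(\cos\alpha,\sin\alpha)$ --- the row and column indexed by the tilting axis are \emph{linear} in them, because one of the two $R_\alpha$-factors contributes a constant entry there --- so the entries of a word are Laurent polynomials in $v=e^{i\alpha}$ rather than in $w=v^{2}$. This is harmless, since $v$ is algebraic if and only if $w=v^{2}$ is, but as written your bookkeeping is off. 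Second, and more seriously, the step you defer as ``the delicate part'' is the entire content of the theorem, and calling it a ``ping-pong estimate'' hides exactly what must be checked: as $\mathrm{Im}\,\alpha\to+\infty$ the leading coefficient of $R_\alpha$ is the \emph{singular} rank-one matrix $\tfrac{1}{2}\,u\bar{u}^{T}$ with $u=e_{2}+ie_{3}$ the isotropic vector of the tilting axis, so the $w^{-1}$ coefficient of each $O_2^{b}$ is $\tfrac{1}{4}(\bar{u}^{T}D^{b}\bar{u})\,uu^{T}$, and in a reduced word the putative top coefficient telescopes into a product of scalars that could a priori vanish; ``grows like $w^{-n}$'' is precisely the kind of claim that fails for products of singular leading terms. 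The argument is saved by an explicit computation you did not supply: the scalars are $\bar{u}^{T}D^{b_{j}}\bar{u}=\cos(b_{j}\theta_{2})-1$ and $u^{T}O_1^{a_{j}}u=\cos(a_{j}\theta_{1})-1$, which are nonzero exactly because the syllables of a reduced word are nontrivial powers; hence the $w^{-n}$ coefficient of $W$ is a nonzero multiple of $O_1^{a_{0}}uu^{T}O_1^{a_{n}}$, the entrywise Laurent polynomials of $W-I$ are not all zero, and algebraicity of $e^{i\alpha}$ (hence of $e^{2i\alpha}$) follows as you say. Until that non-vanishing is exhibited, the proposal proves only the easy reduction; with it, the proof is complete and is, as far as one can tell, the same degeneration argument underlying [CRS00].
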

As an immediate consequence we get:
\begin{cor}
Assume $e^{i2\alpha}$ is transcendental. Then there are no nontrivial
relations between $O_{1}$ and $O_{2}$ and in particular the group
generated by $O_{1}$ and $O_{2}$ is infinite and dense in $SO(3)$
\end{cor}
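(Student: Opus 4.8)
The plan is to treat the Corollary as the contrapositive of the Theorem of \cite{CRS00} just stated, followed by a standard structural argument on closed subgroups of $SO(3)$.

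\textbf{Step 1 (no relations).} Since $e^{i2\alpha}$ is transcendental it is in particular not algebraic, so the Theorem shows that there are no nontrivial relations between $O_1$ and $O_2$. Writing $p$ and $q$ for the finite orders of $O_1$ and $O_2$, this means that the canonical surjection $\mathbb{Z}/p\ast\mathbb{Z}/q\to G:=\langle O_1,O_2\rangle$ is injective, i.e. $G$ is the free product of the two finite cyclic groups $\langle O_1\rangle$ and $\langle O_2\rangle$.

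\textbf{Step 2 (infiniteness).} A free product of two nontrivial groups contains reduced words of arbitrary length, so $G$ is infinite.

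\textbf{Step 3 (density).} I would pass to the closure $\overline{G}$, which is a closed and hence compact Lie subgroup of $SO(3)$. Its identity component $\overline{G}^{0}$ is a connected closed subgroup; since $\mathfrak{so}(3)$ is simple and its only proper subalgebras have dimension at most one, $\overline{G}^{0}$ is either trivial, a one-parameter subgroup conjugate to $SO(2)$, or all of $SO(3)$. If $\overline{G}^{0}=\{I\}$ then $\overline{G}$ is discrete and compact, hence finite, contradicting Step 2. If $\overline{G}^{0}=SO(3)$ then $\overline{G}=SO(3)$ and we are done. It remains to exclude the case $\overline{G}^{0}\cong SO(2)$.

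\emph{The main obstacle} is precisely this last case. Here $\overline{G}$ normalizes its identity component, so $\overline{G}\subseteq N_{SO(3)}(SO(2))\cong O(2)$, a group that is virtually abelian and therefore contains no nonabelian free subgroup. On the other hand, whenever $(p,q)\neq(2,2)$ the free product $\mathbb{Z}/p\ast\mathbb{Z}/q$ contains a free subgroup of rank two; since $G\subseteq\overline{G}$, this contradicts the inclusion $\overline{G}\subseteq O(2)$. Hence the case $\overline{G}^{0}\cong SO(2)$ cannot occur and $\overline{G}=SO(3)$, which gives density. The only genuinely exceptional situation is $p=q=2$, where $G$ is the infinite dihedral group and is indeed infinite but not dense; this corresponds to $O_1$ and $O_2$ both being rotations by $\pi$, which in the beamsplitter setting is ruled out by the standing assumption $\theta\notin\{0,\tfrac{\pi}{2},\pi,\tfrac{3\pi}{2}\}$. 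I would flag this caveat explicitly, so that the density conclusion is invoked only when at least one of the two rotations has order greater than two.
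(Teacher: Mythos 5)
Your proof is correct, and it is considerably more complete than what the paper offers: the paper presents this corollary as ``an immediate consequence'' of the Conway--Radin--Sadun theorem, i.e.\ it takes only your Step 1 (the contrapositive) and leaves the passage from ``no nontrivial relations'' to ``infinite and dense'' entirely unargued. Your Step 3 --- classifying the possible identity components of the closure $\overline{G}$ inside $SO(3)$ and ruling out the $SO(2)$ case by exhibiting a nonabelian free subgroup of $\mathbb{Z}/p\ast\mathbb{Z}/q$ that cannot embed in the virtually abelian group $N_{SO(3)}(SO(2))\cong O(2)$ --- is exactly the missing content, and it is the right argument. Your caveat about $p=q=2$ is moreover a genuine observation and not pedantry: if $O_1$ and $O_2$ are both rotations by $\pi$ about axes separated by an angle $\alpha$ with $e^{i2\alpha}$ transcendental, the group they generate is the infinite dihedral group $\mathbb{Z}/2\ast\mathbb{Z}/2$, which satisfies no nontrivial relations and is infinite, yet lies in a conjugate of $O(2)$ and is therefore not dense. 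So the corollary as literally stated is false in that edge case and is rescued only by the paper's standing exclusion of $\theta\in\{0,\frac{\pi}{2},\pi,\frac{3\pi}{2}\}$, exactly as you flag. The one step you should make explicit is the classification fact you invoke, namely that every proper closed subgroup of $SO(3)$ is either finite or conjugate into $O(2)$ (equivalently, that the proper subalgebras of $\mathfrak{so}(3)$ have dimension at most one); with that reference in place your argument is complete.
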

As $e^{i2\alpha}$ is generically transcendental the above corollary
covers almost all cases. What is left is the countable set of $ $these
$\alpha\in[0,\,2\pi[$ for which $e^{i2\alpha}$ is algebraic. Note
that if $\alpha$ is a rational multiple of $\pi$ then $e^{2i\alpha}$
is algebraic as it is a root of unity (it satisfies $x^{q}-1=0$ for
some $q\in\mathbb{N}$).
\begin{thm}
\label{thm:25}\cite{RS99} Let $O_{1}$ and $O_{2}$ be two finite
order rotations about axes separated by an angle $\alpha$ which is
a rational multiple of $\pi$. A group generated by $O_{1}$ and $O_{2}$
is infinite and dense in $SO(3)$ with the following three exceptions:
(a) Either $O_{1}=I$ or $O_{2}=I$, (b) $O_{1}^{2}=I$ or $O_{2}^{2}=I$
and $\alpha=\pi/2$, (c)$ $$O_{1}^{4}=I=O_{2}^{4}$ . 
\end{thm}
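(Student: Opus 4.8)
The plan is to avoid manipulating words in $O_1,O_2$ directly and instead exploit the classification of closed subgroups of $SO(3)$. Write $G=\langle O_1,O_2\rangle$ and let $\overline{G}$ be its closure; density means $\overline{G}=SO(3)$, so I must exclude every proper closed subgroup. Up to conjugacy these are the finite groups (cyclic $C_n$, dihedral $D_n$, and the polyhedral groups $A_4,S_4,A_5$), the circle $SO(2)$, and its normaliser $O(2)$. The first step is to reduce density to infinitude. The only infinite proper closed subgroups are conjugates of $SO(2)$ and $O(2)$, so if $\overline{G}$ were infinite but not all of $SO(3)$ then $G$ would lie in some $O(2)_v$, consisting of the rotations about a fixed axis $v$ together with the $\pi$-rotations about axes orthogonal to $v$. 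Since the only finite-order elements of $O(2)_v$ are those $\pi$-rotations and the rational rotations about $v$, a short case analysis shows that any two of them with distinct axes already generate a finite (cyclic or dihedral) group. Hence $G\subseteq O(2)_v$ forces $G$ finite, the $SO(2)/O(2)$ branch collapses, and I am left with the clean dichotomy: $G$ is dense if and only if $G$ is infinite (assuming throughout $\alpha\neq 0$, i.e. the axes are genuinely distinct).

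It then remains to decide, under the hypotheses that the rotation angles of $O_1,O_2$ are rational multiples of $2\pi$ and that $\alpha\in\pi\mathbb{Q}$, exactly when $G$ is finite. If $G$ is finite then $\overline{G}=G$ is one of $C_n,D_n,A_4,S_4,A_5$, the axes of $O_1,O_2$ are symmetry axes of the associated solid, and $\alpha$ must be one of the finitely many angles between such axes. Here the arithmetic does the work: by Niven's theorem and its refinements (the same circle of ideas used in Section \ref{sub:example}), the axis separations of the genuinely three-dimensional groups, such as $\arccos(\pm 1/3)$ for $A_4$, $\arccos(1/\sqrt{3})$ for the $3$-fold/$4$-fold pairs of $S_4$, and $\arccos(1/\sqrt{5})$ for $A_5$, are all irrational multiples of $\pi$. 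Forcing $\alpha\in\pi\mathbb{Q}$ therefore eliminates every configuration except those in which both generators have order dividing $2$ or $4$: two $\pi$-rotations or a rotation together with an orthogonal $\pi$-rotation (the dihedral family, with $\alpha=\pi/2$), and the octahedral family built from two order-$4$ axes. These are exactly exceptions (b) and (c), a shared axis or a trivial factor giving the cyclic case (a). Outside the exceptions $G$ cannot be finite, hence is infinite, hence dense.

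For the converse I would simply display the finite group in each case: (a) yields a cyclic group, (b) a dihedral group because the orthogonal $\pi$-rotation conjugates the other generator to its inverse, and (c) a subgroup of the octahedral group. The hardest part is the middle paragraph, and it is genuinely the content of Conway--Radin--Sadun: the Niven-type irrationalities alone do not suffice, because one must control not just $O_1O_2$ but all words and show that outside the listed configurations some word has infinite order. Lifting to $SU(2)$ makes the relevant invariant explicit through the half-angle identity $\cos(\beta/2)=\cos(\phi_1/2)\cos(\phi_2/2)-\sin(\phi_1/2)\sin(\phi_2/2)\cos\alpha$ for the angle $\beta$ of $O_1O_2$, and it is precisely this identity, together with the degree-two cyclotomic analysis of Section \ref{sub:example} (and its extension to arbitrary rational angles via the normal-form arguments behind Lemma \ref{lem:rotations}), that pins down the exact boundary of case (c); as the paper already notes, the elementary cyclotomic computation does not extend uniformly, which is why the full Radin--Sadun machinery is invoked.
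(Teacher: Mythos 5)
The paper does not prove this theorem at all: it is quoted from \cite{RS99} with no argument, and the only piece of that machinery the paper actually uses elsewhere is the word-level Lemma \ref{lem:rotations}. So your proposal is necessarily a different route, and it is a legitimate one. Your two-step scheme --- (i) reduce density to infinitude via the classification of closed subgroups of $SO(3)$, checking that under the hypotheses ($\alpha\in\pi\mathbb{Q}$, $\alpha\neq 0$, finite-order generators) any pair of admissible elements of an $O(2)$-type subgroup already generates a finite cyclic or dihedral group, and (ii) rule out finiteness outside the exceptions by noting that a finite $\langle O_1,O_2\rangle$ must be conjugate to one of $C_n,D_n,A_4,S_4,A_5$, whose inter-axis angles form an explicit finite list that, by Niven's theorem, contains no rational multiple of $\pi$ except for pairs involving only $2$- and $4$-fold axes --- is complete for the one-directional statement as quoted, and is more elementary and self-contained than the Radin--Sadun analysis. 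What the latter buys, and what your argument cannot deliver, is control of individual words: the paper needs not merely that $\langle O_1,O_2\rangle$ is dense but that the specific element $O_1O_2$ has infinite order (Corollary \ref{thm:rational}), and density of a group does not imply anything about the order of a particular product of its generators.

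Two corrections. First, your closing paragraph undercuts a sound argument: for the density dichotomy you do \emph{not} need to ``control all words.'' If $\langle O_1,O_2\rangle$ is finite it is one of the classified finite subgroups regardless of how its elements are expressed as words, so the finite-subgroup classification plus the angle arithmetic already closes the proof; the CRS/RS word analysis is needed only for the stronger statements (explicit presentations, infinitude of particular products such as in Lemma \ref{lem:rotations}), not for Theorem \ref{thm:25} as stated. Second, your converse for case (c) overclaims: two order-$4$ rotations about axes separated by a rational angle other than $0$ or $\pi/2$ do not sit inside the octahedral group (whose $4$-fold axes are mutually orthogonal) and in fact generate a dense subgroup by your own forward argument. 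Exception (c) is a case where density \emph{may} fail, not where it must; since the theorem only asserts density outside the exceptions this does not invalidate your proof, but it should not be presented as an equivalence.
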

The exceptions to Theorem \ref{thm:25} correspond to $O_{i}$ being
rotations by: $0$, $\pi/2$ , $\pi$ or $3\pi/2$ that we exclude.
It is well known that when $\alpha$ is a rational multiple of $\pi$
the algebraic order of $e^{i2\alpha}$ can be arbitrary large. It
is therefore interesting to understand better those angles for which
$e^{i2\alpha}$ has, for example, order two. Authors of \cite{CRS00}
do this for the so-called geodetic angles, that is angles whose squared
trigonometric functions are rational. For this kind of angle they
prove relations between $O_{1}$ and $O_{2}$ can occur only for finite
number of $\alpha$'s. The full understanding of all angles for which
$e^{i2\alpha}$ is algebraic is however still non-complete.

Let us return to our problem. We assume $O_{3}=e^{A_{3}}$ is nontrivial
with respect to mode permutations. Then the set $S(O_{3})$ contains
at least two rotations about axes separated by an angle $\alpha$
which is determined by coefficients of $A_{3}$. Making use of the
facts discussed above we have the following:
\begin{lem}
Assume $\alpha$ is a rational multiple of $\pi$ or is such that
$e^{i2\alpha}$ is transcendental. Then $S(O_{3})$ generates $SO(3)$. 
\end{lem}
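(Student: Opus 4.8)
The plan is to combine the structural results already established in this section with the rotation-theoretic dichotomy from Conway--Radin--Sadun. The statement concerns a $3$-mode beamsplitter $O_3 = e^{A_3}$ with nontrivial action of $\mathrm{Sym}(3)$, and it asserts that $S(O_3)$ generates $SO(3)$ whenever $\alpha$ (the angle between the rotation axes of two noncommuting elements of $S(O_3)$) is either a rational multiple of $\pi$ or satisfies the condition that $e^{i2\alpha}$ is transcendental. First I would extract from the nontriviality hypothesis two specific elements of $S(O_3)$, say $R_1 = P_\sigma^T O_3 P_\sigma$ and $R_2 = P_\tau^T O_3 P_\tau$, whose Lie-algebra images are linearly independent; the case analysis in Fact \ref{so3-gen} already guarantees such a pair exists and that their axes are genuinely distinct, defining a well-posed separation angle $\alpha$.

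The core of the argument is then to invoke the two results already quoted. If $e^{i2\alpha}$ is transcendental, then by the Corollary following Theorem \cite{CRS00} (the transcendence corollary) the group $\langle R_1, R_2\rangle$ is already dense in $SO(3)$, and since $\langle R_1, R_2\rangle \subseteq \langle S(O_3)\rangle$ this immediately gives that $S(O_3)$ generates $SO(3)$. If instead $\alpha$ is a rational multiple of $\pi$, then I would apply Theorem \ref{thm:25}: the group generated by the two finite-order rotations $R_1$ and $R_2$ is dense in $SO(3)$ unless one of the three listed exceptions occurs. The work here is to rule out those exceptions under our standing hypotheses. Exception (a) requires one of the rotations to be the identity, which cannot happen because $O_3$ is a genuine (nontrivial) beamsplitter and conjugation preserves being nonidentity; exceptions (b) and (c) force the rotation angles to lie in $\{0, \pi/2, \pi, 3\pi/2\}$, i.e. $R_i^2 = I$ or $R_i^4 = I$, which are precisely the excluded trivial cases corresponding to permutations or $\pm I$. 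Since the beamsplitter's nontrivial rotation angle $\theta$ is assumed outside this set, none of the exceptions applies and density follows.

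I anticipate that the main obstacle is not the invocation of the external theorems but verifying that the \emph{hypotheses} of Theorem \ref{thm:25} are actually met by a concrete pair drawn from $S(O_3)$: one must check that $R_1$ and $R_2$ are indeed finite-order rotations (which holds because each is conjugate to $O_3$, whose nontrivial spectral angle $\theta$ is a rational multiple of $\pi$ in this case, hence of finite order) and, more delicately, that the relevant separation angle $\alpha$ between their axes equals the quantity for which the rational/transcendental dichotomy is being assumed. The axes of the conjugated rotations are determined explicitly by the permuted coefficient vectors $(a_{12}, a_{13}, a_{23})$ displayed at the start of the section, so computing $\alpha$ via the inner product of these axis vectors is routine but is the step where the geometry of the $\mathrm{Sym}(3)$ action genuinely enters. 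I would therefore close the argument by noting that whichever branch of the dichotomy $\alpha$ falls into, the corresponding cited result yields density of $\langle R_1, R_2\rangle$, and hence of $\langle S(O_3)\rangle$, in $SO(3)$.
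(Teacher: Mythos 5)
Your proposal is correct and follows essentially the same route as the paper, which states this lemma as an immediate consequence of the preceding discussion: the nontrivial $\mathrm{Sym}(3)$ action yields two elements of $S(O_3)$ with distinct rotation axes, and then either the transcendence corollary or Theorem \ref{thm:25} of Radin--Sadun gives density of the subgroup they generate, with the exceptional cases (a)--(c) excluded because the beamsplitter angle $\theta$ is assumed nontrivial, i.e.\ outside $\{0,\pi/2,\pi,3\pi/2\}$. Your write-up in fact supplies more detail than the paper does (which offers no explicit proof), and your identified ``main obstacle''---verifying finite order and computing $\alpha$ from the permuted coefficient vectors---is exactly the content the paper leaves implicit.
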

Combining this with Fact \ref{so3-gen} and Theorems \ref{thm:main}
and \ref{m-n} we get:
\begin{thm}
Let $O_{3}=e^{A_{3}}$ be a beamsplitter which admits nontrivial action
of $\mathrm{Sym}(3)$ and whose spectrum is given by $\{e^{i\theta},\, e^{-i\theta},1\}$
where $\theta$ is a rational multiple of $\pi$. Let $\alpha$ be
the angle between rotation axes of two different elements from $S(O_{3})$.
Assume $\alpha$ is a rational multiple of $\pi$ or is such that
$e^{i2\alpha}$ is transcendental. Then $O_{3}$ is universal on $k\geq3$
modes.
\end{thm}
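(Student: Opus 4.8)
The plan is to reduce the statement to the two results that immediately precede it together with Theorem \ref{m-n}, since the substantive mathematics has already been isolated into the cited rotation-group machinery. The key structural point is that because $\theta$ is a rational multiple of $\pi$, every element of $S(O_3)$ has finite order; unlike the irrational case, one therefore cannot fill out one-parameter subgroups and apply Theorem \ref{thm:main} directly. The argument must instead establish density of the discrete group $\langle S(O_3)\rangle$ itself, which is exactly what the Conway--Radin--Sadun input supplies.

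First I would exploit the nontrivial-action hypothesis. By the earlier discussion of $\mathrm{Sym}(3)$, nontriviality means $S(O_3)$ contains at least two non-commuting elements $O_1, O_2$. Every member of $S(O_3)$ is a conjugate $P_\sigma^T O_3 P_\sigma$ and hence a rotation by the common angle $\theta$; since $O_1$ and $O_2$ do not commute their axes must be distinct, separated by precisely the angle $\alpha$ named in the statement. Thus $O_1$ and $O_2$ are two finite-order rotations about distinct axes with separation $\alpha$.

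Next I would split on the hypothesis on $\alpha$. If $e^{i2\alpha}$ is transcendental, the transcendence Corollary gives that $\langle O_1, O_2\rangle$ admits no nontrivial relations and is dense in $SO(3)$. If instead $\alpha$ is a rational multiple of $\pi$, Theorem \ref{thm:25} yields density of $\langle O_1, O_2\rangle$ outside its three exceptional configurations. The step demanding genuine care is ruling out those exceptions: each of them forces $O_1$ or $O_2$ to be a rotation by $0$, $\pi/2$, $\pi$, or $3\pi/2$, i.e. $\theta\in\{0,\pi/2,\pi,3\pi/2\}$; but such values are incompatible with $O_3$ having spectrum $\{e^{i\theta},e^{-i\theta},1\}$ together with a nontrivial permutation action (for instance $\theta=0$ gives $O_3=I$, whose action is trivial). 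Hence in both branches $\langle S(O_3)\rangle \supseteq \langle O_1, O_2\rangle$ is dense in $SO(3)$, which is exactly the content of the preceding Lemma.

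Finally, with $\langle S(O_3)\rangle$ dense in $SO(3)$ established, I would invoke Theorem \ref{m-n} directly to obtain universality on every $k\geq 3$ mode; its proof reconstructs generators of $\mathfrak{so}(3)$ with irrational spectra, embeds them into $\mathfrak{so}(k)$ via the natural embeddings, and closes with the fundamental control-theoretic Theorem \ref{thm:main}, with the Lie-algebra generation also guaranteed by Fact \ref{so3-gen}. The genuine difficulty of the theorem is absorbed entirely into the assumed Conway--Radin--Sadun results; within this proof the only real work is the identification of $\alpha$ as the axis-separation of two permuted copies of $O_3$ and the exclusion of the degenerate angles $\{0,\pi/2,\pi,3\pi/2\}$, and I expect that exclusion to be the main obstacle.
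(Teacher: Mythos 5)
Your overall route is the same as the paper's: identify two non-commuting finite-order rotations in $S(O_3)$ whose axes are separated by $\alpha$, feed them to the Conway--Radin--Sadun results (the transcendence corollary in one branch, Theorem \ref{thm:25} in the other) to get density of $\langle S(O_3)\rangle$ in $SO(3)$, and then lift to $k\geq 3$ modes via Fact \ref{so3-gen} together with Theorems \ref{thm:main} and \ref{m-n}. That skeleton is exactly what the paper does, and the final lifting step is fine.

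The gap is in the step you yourself flag as the main obstacle: ruling out the exceptional cases of Theorem \ref{thm:25}. You claim that $\theta\in\{\pi/2,\pi,3\pi/2\}$ is \emph{incompatible} with a nontrivial action of $\mathrm{Sym}(3)$, but your only supporting instance is $\theta=0$, and the claim is false for the other values. Take $A_3=\frac{\pi}{\sqrt{2}}\left(E_{12}+E_{13}\right)$: this is not the invariant direction $\frac{\theta}{\sqrt{3}}(E_{12}-E_{13}+E_{23})$, and $S(O_3)$ consists of rotations by $\pi$ about three coplanar axes pairwise separated by $\alpha=2\pi/3$, which do not commute --- so the action is nontrivial, $\theta=\pi$ and $\alpha=2\pi/3$ are rational multiples of $\pi$, yet the hypotheses of exception (c) of Theorem \ref{thm:25} hold and $\langle S(O_3)\rangle$ is in fact a finite dihedral group, not dense in $SO(3)$. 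So the exceptional angles cannot be derived away from the stated hypotheses; they must be excluded by assumption. The paper does exactly this, remarking after Theorem \ref{thm:25} that the exceptions ``correspond to $O_i$ being rotations by $0$, $\pi/2$, $\pi$ or $3\pi/2$ that we exclude'' --- i.e.\ it imposes the same standing restriction $\theta\notin\{0,\pi/2,\pi,3\pi/2\}$ used throughout for nontrivial beamsplitters, rather than proving it redundant. Your proof needs the same explicit hypothesis; as written, the purported derivation of the exclusion is where the argument fails.
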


\subsection{Trivial action of $\mathrm{Sym}(3)$}

In this section we show that when $S(O_{3})=\left\{ O_{3},\, O_{3}^{-1}\right\} $
that is for $O_{3}=e^{A_{3}}$ with
\begin{gather*}
A_{3}=\frac{\theta}{\sqrt{3}}\left(E_{12}-E_{13}+E_{23}\right),
\end{gather*}
the group generated by the four natural embedding of $O_{3}$ into
$SO(4)$ is exactly $SO(3)$. Therefore the beamsplitter given by
$O_{3}$ is not universal on $3$ or $4$ modes. The four embedding
of $O_{3}$ into $SO(4)$ are given by $O_{ijk}=e^{\frac{\theta}{\sqrt{3}}A_{ijk}}$
where
\begin{gather*}
A_{123}=E_{12}-E_{13}+E_{23},\; A_{234}=E_{23}-E_{24}+E_{34},\\
A_{134}=E_{13}-E_{14}+E_{34},\; A_{124}=E_{12}-E_{14}+E_{24}.
\end{gather*}
Elements $A_{ijk}$ are not linearly independent ($A_{123}+A_{134}=A_{124}+A_{234}$)$ $
and one can easily verify that they span the $3$-dimensional subspace
\begin{gather*}
\mathrm{Span}_{\mathbb{R}}\left\{ A_{123},\, A_{234},\, A_{134},\, A_{124}\right\} \subset\mathfrak{so}(4).
\end{gather*}

\begin{lem}
The space $ $$\mathrm{Span}_{\mathbb{R}}\left\{ A_{123},\, A_{234},\, A_{134},\, A_{124}\right\} $
is a $3$-dimensional Lie subalgebra of $\mathfrak{so}(4)$.\end{lem}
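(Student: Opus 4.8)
The plan is to sidestep a full computation of the $\mathfrak{so}(4)$ structure constants and instead realise the span as the stabilizer of a distinguished vector, which is automatically a subalgebra of the correct dimension. The key observation I would establish first is that every generator $A_{123},A_{234},A_{134},A_{124}$ annihilates the all-ones vector $w=\ket1+\ket2+\ket3+\ket4$. Each $A_{ijk}=E_{ij}-E_{ik}+E_{jk}$ acts as a $3\times3$ antisymmetric block on the modes $i,j,k$ and trivially on the omitted mode; restricted to those three coordinates the block sends $\ket i+\ket j+\ket k$ to $0$ (a one-line check, e.g. $A_{124}w=(\ket1-\ket2)+(\ket4-\ket1)+(-\ket4+\ket2)=0$, and the pattern is identical for the other triples). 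Hence $A_{ijk}w=0$ for all four triples, so the whole span $V:=\mathrm{Span}_{\mathbb{R}}\{A_{123},A_{234},A_{134},A_{124}\}$ is contained in the isotropy subalgebra $\mathfrak{h}:=\{X\in\mathfrak{so}(4):Xw=0\}$.

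Next I would record two standard facts about $\mathfrak{h}$. First, $\mathfrak{h}$ is automatically closed under the bracket: if $Xw=Yw=0$ then $[X,Y]w=XYw-YXw=0$, so $\mathfrak{h}$ is a Lie subalgebra. Second, $\mathfrak{h}$ is three-dimensional, indeed a copy of $\mathfrak{so}(3)$. To see this, antisymmetry of $X$ gives $\langle Xu,w\rangle=-\langle u,Xw\rangle=0$ for every $u$, so $X$ maps $\mathbb{R}^4$ into $w^{\perp}$ and restricts to an antisymmetric operator on the $3$-dimensional space $w^{\perp}$; conversely every element of $\mathfrak{so}(w^{\perp})$ extends, by $0$ on the line $\mathbb{R}w$, to an element of $\mathfrak{h}$. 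Thus $\mathfrak{h}\cong\mathfrak{so}(w^{\perp})\cong\mathfrak{so}(3)$ and $\dim\mathfrak{h}=3$.

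I would then finish by a dimension count. The excerpt already records the single linear relation $A_{123}+A_{134}=A_{124}+A_{234}$ and that the four generators span a $3$-dimensional space, so $\dim V=3$. Combined with $V\subseteq\mathfrak{h}$ and $\dim\mathfrak{h}=3$, this forces $V=\mathfrak{h}$, which is precisely a $3$-dimensional Lie subalgebra of $\mathfrak{so}(4)$ — the rotations fixing $w$. Conceptually this explains the phenomenon: the trivial-action beamsplitter is a rotation about the $(1,1,1)$ diagonal, so each of its four embeddings fixes $(1,1,1,1)$, and the group they generate can never leave that vector's stabilizer $SO(3)\subset SO(4)$.

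I expect no genuine obstacle here; the only point needing care is the verification that $A_{ijk}$ kills the all-ones vector on each triple, and the rest is the general stabilizer argument plus a dimension comparison. As an independent cross-check one can confirm closure directly from the $\mathfrak{so}(4)$ commutation relations: a short calculation gives $[A_{123},A_{124}]=-A_{134}-A_{234}\in V$, and the remaining brackets follow either by the same computation or by the evident $S_4$-symmetry permuting the four triples (up to sign), making $[V,V]\subseteq V$ manifest.
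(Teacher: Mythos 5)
Your proof is correct, but it takes a genuinely different route from the paper's. The paper proves closure directly: it computes all six commutators of the generators and checks that each lands back in the span (e.g.\ $[A_{123},A_{234}]=A_{134}+A_{124}$, $[A_{123},A_{124}]=-A_{234}-A_{134}$, and so on) — essentially your ``cross-check'' promoted to the whole argument. Your stabilizer argument is sound: with the convention $E_{kl}=\kb{k}{l}-\kb{l}{k}$ one indeed has $A_{ijk}\left(\ket{i}+\ket{j}+\ket{k}\right)=0$, so every generator annihilates $w=\ket{1}+\ket{2}+\ket{3}+\ket{4}$; the isotropy algebra $\mathfrak{h}=\{X\in\mathfrak{so}(4):Xw=0\}$ is automatically bracket-closed and is a copy of $\mathfrak{so}(3)$ acting on $w^{\perp}$; and the dimension count $\dim V=3=\dim\mathfrak{h}$ forces $V=\mathfrak{h}$. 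What your route buys: it delivers the paper's \emph{next} lemma ($\mathrm{Span}_{\mathbb{R}}\{A_{123},A_{234},A_{134},A_{124}\}\simeq\mathfrak{so}(3)$) for free, it explains geometrically why the trivial-action beamsplitter is trapped in the $SO(3)$ fixing the diagonal vector (consistent with the rotation axes $\vec{n}_{ijk}$ the paper computes later), and it immediately gives the containment relevant to the paper's closing conjecture that on $k$ modes the embeddings generate only $SO(k-1)$. What the paper's direct computation buys instead is the explicit structure constants, which it reuses to exhibit the $X,Y,Z$ basis with standard $\mathfrak{so}(3)$ relations. The one point you lean on without proof is that three of the four $A_{ijk}$ are linearly independent (so $\dim V=3$ rather than merely $\le 3$); the paper asserts this just before the lemma and it is a one-line check, so this is not a gap.
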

\begin{proof}
It is enough to show that $\mathrm{Span}_{\mathbb{R}}\left\{ A_{123},\, A_{234},\, A_{134},\, A_{124}\right\} $
is closed under Lie bracket. The result follows from the commutations
relations
\begin{gather*}
\left[A_{123},\, A_{234}\right]=A_{134}+A_{124},\;\left[A_{123},\, A_{134}\right]=A_{124}-A_{234},\\
\left[A_{123},\, A_{124}\right]=-A_{234}-A_{134},\;\left[A_{234},\, A_{134}\right]=A_{123}+A_{124},\\
\left[A_{234},\, A_{124}\right]=A_{123}-A_{134},\;\left[A_{134},\, A_{124}\right]=A_{123}+A_{234}.
\end{gather*}

\end{proof}
Let us remind that $\mathfrak{so}(4)\simeq\mathfrak{so}(3)\oplus\mathfrak{so}(3)$.
It is therefore natural to suspect that \begin{gather}\mathrm{Span}_{\mathbb{R}}\left\{ A_{123},\, A_{234},\, A_{134},\, A_{124}\right\} \simeq\mathfrak{so}(3).\end{gather} 
\begin{lem}
\label{lem:so3}The space $ $$\mathrm{Span}_{\mathbb{R}}\left\{ A_{123},\, A_{234},\, A_{134},\, A_{124}\right\} $
is isomorphic to $\mathfrak{so}(3)$. \end{lem}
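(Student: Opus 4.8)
The plan is to exploit the decomposition $\mathfrak{so}(4)\simeq\mathfrak{so}(3)\oplus\mathfrak{so}(3)$ noted just above, and to show that the subalgebra $V:=\mathrm{Span}_{\mathbb R}\left\{A_{123},A_{234},A_{134},A_{124}\right\}$ maps \emph{isomorphically} onto one of the two simple summands. Concretely, I would realise the two copies of $\mathfrak{so}(3)$ inside $\mathfrak{so}(4)$ as the self-dual and anti-self-dual parts of the antisymmetric matrices, spanned respectively by
\begin{gather*}
S_1=E_{12}+E_{34},\quad S_2=E_{13}-E_{24},\quad S_3=E_{14}+E_{23},\\
T_1=E_{12}-E_{34},\quad T_2=E_{13}+E_{24},\quad T_3=E_{14}-E_{23}.
\end{gather*}
A short direct computation with the standard brackets $[E_{ij},E_{kl}]=\delta_{jk}E_{il}-\delta_{jl}E_{ik}-\delta_{ik}E_{jl}+\delta_{il}E_{jk}$ confirms that $\{S_i\}$ and $\{T_i\}$ each close to a copy of $\mathfrak{so}(3)$ and that the two triples commute. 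Consequently $\mathrm{Span}_{\mathbb R}\{T_i\}$ is an ideal, so the projection $\pi_+:\mathfrak{so}(4)\to\mathrm{Span}_{\mathbb R}\{S_1,S_2,S_3\}$ onto the self-dual summand along $\mathrm{Span}_{\mathbb R}\{T_i\}$ is a Lie-algebra homomorphism.

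The key computational step is to rewrite the four generators in the $(S,T)$ basis. Using $E_{12}=\tfrac12(S_1+T_1)$, $E_{34}=\tfrac12(S_1-T_1)$, and the analogous relations for the remaining $E_{ij}$, and then discarding the $T$-components, one obtains
\begin{gather*}
\pi_+(A_{123})=\tfrac12(S_1-S_2+S_3),\qquad \pi_+(A_{234})=\tfrac12(S_1+S_2+S_3),\\
\pi_+(A_{134})=\tfrac12(S_1+S_2-S_3),\qquad \pi_+(A_{124})=\tfrac12(S_1-S_2-S_3).
\end{gather*}
From these one reads off $\pi_+(A_{234})-\pi_+(A_{123})=S_2$ and $\pi_+(A_{234})-\pi_+(A_{134})=S_3$, and hence also $S_1$; thus the images of the four generators span the entire self-dual summand $\mathrm{Span}_{\mathbb R}\{S_1,S_2,S_3\}\cong\mathfrak{so}(3)$.

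It then remains only to assemble the pieces. By the previous lemma $V$ is a $3$-dimensional Lie subalgebra, and $\pi_+|_V$ is a Lie-algebra homomorphism whose image is the full $3$-dimensional algebra $\mathrm{Span}_{\mathbb R}\{S_i\}$. Since domain and image both have dimension $3$, surjectivity forces injectivity (equivalently $V\cap\ker\pi_+=V\cap\mathrm{Span}_{\mathbb R}\{T_i\}=0$ by rank--nullity), so $\pi_+|_V$ is a Lie-algebra isomorphism and $V\simeq\mathfrak{so}(3)$, as claimed. I expect the only genuine work to be the bookkeeping of the self-dual/anti-self-dual change of basis and the verification that $\{S_i\}$ really forms an ideal, so that $\pi_+$ is a homomorphism; once the four projected generators are in hand, surjectivity together with the dimension count finishes the argument at once. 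A more pedestrian alternative would be to diagonalise the structure constants of $V$ directly and exhibit an explicit $\mathfrak{so}(3)$-triple $\{B_1,B_2,B_3\}\subset V$ with $[B_i,B_j]=\varepsilon_{ijk}B_k$, but the projection argument has the advantage of avoiding the solution of that system.
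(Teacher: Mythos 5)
Your argument is correct, but it takes a genuinely different route from the paper. The paper proves the lemma by exhibiting an explicit $\mathfrak{so}(3)$-triple inside the span, namely $X=\tfrac14(A_{123}+A_{234}+A_{134}+A_{124})$, $Y=\tfrac14(A_{123}+A_{234}-A_{134}-A_{124})$, $Z=-\tfrac14(A_{123}-A_{234}-A_{134}+A_{124})$, and checking directly that $[X,Y]=Z$, $[Z,X]=Y$, $[Y,Z]=X$ --- precisely the ``pedestrian alternative'' you mention at the end. Your projection argument is sound: the self-dual/anti-self-dual splitting is the standard realisation of $\mathfrak{so}(4)\simeq\mathfrak{so}(3)\oplus\mathfrak{so}(3)$, the anti-self-dual span is an ideal so $\pi_+$ is a homomorphism, your four projected generators are computed correctly and visibly span the self-dual summand, and the dimension count (using the preceding lemma that $V$ is $3$-dimensional) upgrades surjectivity to an isomorphism. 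What your route buys is structural information the paper's computation hides: it locates $V$ inside $\mathfrak{so}(4)$ as a (twisted) graph over one simple summand --- note the $T$-components of the $A_{ijk}$ do not vanish, so $V$ is not the self-dual copy itself --- which dovetails nicely with the subsequent claim that the group generated by the four embeddings is an $SO(3)$ subgroup of $SO(4)$. What it costs is more bookkeeping up front (verifying that both triples close and mutually commute, plus the change of basis), whereas the paper's proof is shorter and entirely self-contained, requiring only three bracket computations once the right linear combinations are guessed.
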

\begin{proof}
Let
\begin{gather*}
X=\frac{1}{4}\left(A_{123}+A_{234}+A_{134}+A_{124}\right),\\
Y=\frac{1}{4}\left(A_{123}+A_{234}-A_{134}-A_{124}\right),\\
Z=-\frac{1}{4}\left(A_{123}-A_{234}-A_{134}+A_{124}\right).
\end{gather*}
It is easy to verify that $\mathrm{Span}_{\mathbb{R}}\left\{ X,\, Y,\, Z\right\} =\mathrm{Span}_{\mathbb{R}}\left\{ A_{123},\, A_{234},\, A_{134},\, A_{124}\right\} $.
On the other hand we have:
\begin{gather*}
\left[X,\, Y\right]=Z,\ \left[Z,\, X\right]=Y,\ [Y,Z]=X,
\end{gather*}
which are commutation relations of $\mathfrak{so}(3)$. 
\end{proof}
Combining Lemma \ref{lem:so3} with Theorem \ref{thm:main} we get
\begin{thm}
\label{thm:irr}Let $\theta$ be an irrational multiple of $\pi$.
The group generated by four natural embeddings of $O_{3}=e^{A_{3}}$
into $SO(4)$ where
\begin{gather*}
A_{3}=\frac{\theta}{\sqrt{3}}\left(E_{12}-E_{13}+E_{23}\right),
\end{gather*}
is isomorphic to $SO(3)$ and therefore $O_{3}$ is not universal
on $4$ modes.
\end{thm}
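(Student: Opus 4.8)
The plan is to observe that all four generators live inside one \emph{compact three-dimensional} subgroup of $SO(4)$, and then to apply Theorem \ref{thm:main} inside that subgroup rather than inside $SO(4)$. Write $\mathfrak{h}=\mathrm{Span}_{\mathbb{R}}\{A_{123},A_{234},A_{134},A_{124}\}$. By Lemma \ref{lem:so3} (together with the lemma preceding it) $\mathfrak{h}$ is a Lie subalgebra of $\mathfrak{so}(4)$ isomorphic to $\mathfrak{so}(3)$; let $H$ be the connected analytic subgroup of $SO(4)$ with Lie algebra $\mathfrak{h}$. Since $\mathfrak{h}$ is semisimple and compactly embedded in the compact algebra $\mathfrak{so}(4)$, the subgroup $H$ is compact, hence closed. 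Each generator is $O_{ijk}=e^{X_{ijk}}$ with $X_{ijk}=\frac{\theta}{\sqrt{3}}A_{ijk}\in\mathfrak{h}$, so the group they generate is contained in $H$.

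Next I would verify the two hypotheses of Theorem \ref{thm:main} with $G=H$. For generation of the Lie algebra: the $X_{ijk}$ are nonzero scalar multiples of the $A_{ijk}$, so they already span $\mathfrak{h}$ linearly and \emph{a fortiori} generate it as a Lie algebra. For the one-parameter subgroups: each $X_{ijk}$ generates a rotation by angle $\theta$ inside a coordinate $3$-plane, so $e^{X_{ijk}}$ has spectrum $\{e^{i\theta},e^{-i\theta},1,1\}$ and $\{e^{tX_{ijk}}:t\in\mathbb{R}\}$ is a circle; because $\theta$ is an irrational multiple of $\pi$, $<O_{ijk}>$ is dense in that circle (this is the content of Fact \ref{spectra}). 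Theorem \ref{thm:main} then shows that the four generators densely generate $H$, so the closed group they generate is exactly $H$.

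The remaining and decisive step is to identify $H$ with $SO(3)$ as an abstract group, not merely to record that it is three-dimensional (the latter already yields non-universality, since $\dim H=3<6=\dim SO(4)$). The clean route is to exhibit a common fixed vector: a direct check shows $A_{ijk}\,v=0$ for the vector $v=(1,1,1,1)^{T}$ and for each of the four $A_{ijk}$, so $v$ is annihilated by all of $\mathfrak{h}$ and fixed by $H$. Hence $\mathbb{R}^{4}=\mathbb{R}v\oplus W$ with $W=v^{\perp}$ an $H$-invariant three-dimensional space. The restriction map $H\to SO(W)\cong SO(3)$ is injective (an element fixing both $v$ and $W$ is the identity of $SO(4)$) and its image is a connected subgroup of $SO(3)$ of full dimension, hence all of $SO(3)$; therefore $H\cong SO(3)$, acting on $W$ as the defining representation.

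I expect the main subtlety to be exactly this last identification, specifically ruling out that $H$ is the double cover $SU(2)$. The fixed-vector decomposition resolves it by displaying $\mathbb{R}^{4}$ as the trivial representation plus the vector (spin-$1$) representation of $\mathfrak{so}(3)$: on the trivial-plus-vector representation a full rotation returns to the identity, which forces $SO(3)$ rather than $SU(2)$. Having established $\overline{<O_{123},O_{234},O_{134},O_{124}>}=H\cong SO(3)$, a proper subgroup of $SO(4)$, we conclude that $O_{3}$ is not universal on four modes, as claimed.
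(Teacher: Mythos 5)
Your proposal is correct, and its skeleton coincides with the paper's: both arguments show that the four logarithms $X_{ijk}=\frac{\theta}{\sqrt{3}}A_{ijk}$ span a three-dimensional Lie subalgebra $\mathfrak{h}\subset\mathfrak{so}(4)$ isomorphic to $\mathfrak{so}(3)$, and then invoke Theorem \ref{thm:main} together with the irrationality of $\theta/\pi$ to conclude that the generated group is exactly the connected subgroup $H$ with Lie algebra $\mathfrak{h}$, hence a proper subgroup of $SO(4)$. Where you differ is in how the identification of $\mathfrak{h}$ is made and how far it is pushed at the group level. The paper verifies closure under the bracket by listing all six commutators $[A_{ijk},A_{lmn}]$ and then exhibits an explicit basis $X,Y,Z$ satisfying the $\mathfrak{so}(3)$ relations; it then simply asserts that $H$ is isomorphic to $SO(3)$. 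You instead exhibit the common fixed vector $v=(1,1,1,1)^{T}$, which identifies $\mathfrak{h}$ as the full stabilizer algebra of $v$ (this gives the subalgebra property and the isomorphism $\mathfrak{h}\cong\mathfrak{so}(3)$ in one stroke) and, more importantly, pins down the isomorphism type of the \emph{group}: since $SO(4)$ does contain three-dimensional connected subgroups isomorphic to $SU(2)$ (images of the factors of $SU(2)\times SU(2)$ under the double cover), Lie-algebra information alone does not distinguish $SO(3)$ from $SU(2)$, and your trivial-plus-vector decomposition of $\mathbb{R}^{4}$ closes this small gap that the paper leaves implicit. As you note, the distinction is immaterial for the non-universality conclusion, which needs only $\dim H=3<6$, but your version is the more complete proof of the statement as literally worded.
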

We are left with the case of $\theta$ which is a rational multiple
of $\pi$. By Lemma \ref{lem:so3} the rotation matrices $O_{ijk}$
act on some $3$-dimensional subspace of $\mathbb{R}^{4}=\mathrm{Span}_{\mathbb{R}}\left\{ e_{1},\, e_{2},\, e_{3},\, e_{4}\right\} $.
Note that, for example, $O_{123}$ and $O_{234}$ are rotations by
$\theta$ about axes given by:
\begin{gather*}
\vec{n}_{123}=-\frac{1}{\sqrt{3}}e_{1}+\frac{1}{\sqrt{3}}e_{2}-\frac{1}{\sqrt{3}}e_{3},\\
\vec{n}_{234}=-\frac{1}{\sqrt{3}}e_{2}+\frac{1}{\sqrt{3}}e_{3}-\frac{1}{\sqrt{3}}e_{4}.
\end{gather*}
Let $\alpha$ be the angle between $\vec{n}_{123}$ $ $and $\vec{n}_{234}$.
One has $ $$\cos(\alpha)=-\frac{2}{3}$ and therefore $\alpha$ is
an geodetic angle - an angle whose squared trigonometric functions
are rational. The Primordial Theorem (Theorem 2 of \cite{CRS00})
lists all geodetic angles $\alpha=\sin^{-1}\left(\sqrt{\frac{p}{q}}\right)$
with $p$ and $q$ coprime that support nontrivial relations between
two finite order rotations about axes separated by $\alpha$. In our
case $\sin(\alpha)=\sqrt{\frac{5}{9}}$ and it does not belong to
the list given in \cite{CRS00}. Therefore
\begin{thm}
\label{thm:rat}Let $\theta$ be a rational multiple of $\pi$. The
group generated by four natural embeddings of $O_{3}=e^{A_{3}}$ into
$SO(4)$ where
\begin{gather*}
A_{3}=\frac{\theta}{\sqrt{3}}\left(E_{12}-E_{13}+E_{23}\right),
\end{gather*}
is isomorphic to $SO(3)$ and therefore $O_{3}$ is not universal
on $4$-modes.\end{thm}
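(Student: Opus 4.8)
The plan is to prove the statement in two stages: an easy ``upper bound'' showing the generated group lives inside a fixed copy of $SO(3)$ in $SO(4)$, and a harder ``lower bound'' showing its closure is all of that $SO(3)$. For the upper bound I would note that each generator is $O_{ijk}=\exp\bigl(\tfrac{\theta}{\sqrt3}A_{ijk}\bigr)$ with $A_{ijk}$ in the $3$-dimensional Lie subalgebra $\mathfrak{h}=\mathrm{Span}_{\mathbb{R}}\{A_{123},A_{234},A_{134},A_{124}\}\cong\mathfrak{so}(3)$ furnished by Lemma \ref{lem:so3}. Hence all four generators lie in the connected subgroup $H\subset SO(4)$ integrating $\mathfrak{h}$, so the generated group is contained in $H\cong SO(3)$. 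Since $H$ is a proper ($3$-dimensional) subgroup of the $6$-dimensional $SO(4)$, this already shows $O_3$ cannot be universal on $4$ modes.

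For the density stage I cannot reuse the argument of Theorem \ref{thm:irr}: for rational $\theta/\pi$ each cyclic group $\langle O_{ijk}\rangle$ is finite, so Theorem \ref{thm:main} does not apply. Instead I would pass, under the isomorphism $H\cong SO(3)$, to two of the generators, $O_{123}$ and $O_{234}$, which are finite-order rotations by $\theta$ about the axes $\vec n_{123}$ and $\vec n_{234}$. The first concrete step is the inner-product computation $\cos\alpha=\vec n_{123}\cdot\vec n_{234}=-\tfrac23$, whence $\sin^2\alpha=\tfrac59$ is rational and $\alpha$ is a geodetic angle in the sense of Conway--Radin--Sadun. Writing $\alpha=\sin^{-1}\!\bigl(\sqrt{5/9}\bigr)$ with $(p,q)=(5,9)$ coprime, I can then invoke the Primordial Theorem (Theorem 2 of \cite{CRS00}).

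The key verification, and the step I expect to be the main obstacle, is checking that $(p,q)=(5,9)$ does not appear on the Primordial Theorem's finite list of geodetic angles supporting nontrivial relations; this is the one genuinely arithmetic input and must be read off against \cite{CRS00}. Granting it, there are no nontrivial relations between $O_{123}$ and $O_{234}$, so these two finite-order rotations generate a free product of cyclic groups, which is infinite. An infinite subgroup of $SO(3)$ containing rotations about two distinct axes $\vec n_{123}\neq\pm\vec n_{234}$ cannot have its closure inside a finite group or a conjugate of $SO(2)$ or $O(2)$, so its closure is all of $SO(3)$. Therefore the four-generator group, being contained in $H\cong SO(3)$ yet already dense in it by this pair alone, has closure exactly $H\cong SO(3)$, which is the assertion. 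One should be mildly careful at the smallest-order values of $\theta$ (where a pair of $180^\circ$ rotations would only close up to $O(2)$); there one falls back either on the excluded angles $\theta\in\{0,\tfrac\pi2,\pi,\tfrac{3}{2}\pi\}$ used throughout the paper, or on the fact that the full four-generator family involves rotations about four distinct axes and so still forces density.
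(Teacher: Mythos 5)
Your proposal follows essentially the same route as the paper: containment in the subgroup $H\cong SO(3)$ integrating the subalgebra of Lemma \ref{lem:so3}, then density via the Conway--Radin--Sadun Primordial Theorem applied to $O_{123}$ and $O_{234}$ with the identical computation $\cos\alpha=-\tfrac{2}{3}$, $\sin^{2}\alpha=\tfrac{5}{9}$ not on the list of exceptional geodetic angles. Your explicit handling of the degenerate values of $\theta$ and of why an infinite two-axis subgroup must be dense is a slightly more careful write-up of steps the paper leaves implicit, but it is not a different argument.
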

\begin{rem}
One can also show that $O_{3}$ is not universal on $5$-modes - the
natural ten embeddings of $O_{3}$ into $SO(5)$ generate the group
isomorphic to $SO(4)$. We conjecture that on $k$-modes where $k\geq4$
the ${k \choose 3}$ natural embeddings of $O_{3}$ generate $SO(k-1)$. 
\end{rem}
\begin{gather*}
\end{gather*}

\section{$ $Summary and outlook}

In this paper we discussed the universality problem of $m$-mode real
beamsplitters for $m=2,3$ from the perspective of control theory
using some nice properties of the $SO(3)$ group. We also pointed
out the importance of the set $S(O_{m})$ which is the orbit of adjoint
action of permutation group through $O_{m}$. In particular we showed
that when $S(O_{3})=\left\{ O_{3},\, O_{3}^{-1}\right\} $ the beamsplitter
$O_{3}$ is not universal on both $3$ and $4$ modes (we also know
it is the case for $5$-modes). The study of similar phenomena in
higher dimensions is a natural direction we want to explore. The other
problem would be extension of the result presented here to complex
beamsplitters. This requires proving several nontrivial results that
we plan to discuss elsewhere.

\section*{Acknowledgments}

I would like to thank Jan Gutt for the inspiring email correspondence
concerning finite generation of Lie groups and Lorenzo Sadun for referring
me to papers \cite{CRS00,CRS99,RS99,RS98}. Moreover, I thank Scott
Aaronson, Adam Bouland, Aram Harrow, Jon Keating, Marek Ku\'s, John
Mackay, Jonathan Robbins, Cyril Stark for discussions, Nick Jones
for reading the manuscript and the anonymous referee for suggestions that led to improving the contents of this paper. The author is supported by the Marie Curie
International Outgoing Fellowship.

\nonumsection{References}

\end{document}